\documentclass{article}


\usepackage[preprint]{neurips_2023}




\usepackage[utf8]{inputenc} 
\usepackage[T1]{fontenc}    
\usepackage{hyperref}       
\usepackage{url}            
\usepackage{booktabs}       
\usepackage{amsfonts}       
\usepackage{nicefrac}       
\usepackage{microtype}      
\usepackage{xcolor}         
\usepackage{amsthm}
\usepackage{amsmath}
\usepackage{color}
\usepackage{graphicx}
\usepackage{xcolor}         
\usepackage{tikz}
\usetikzlibrary{arrows.meta}
\usetikzlibrary{automata,arrows,calc,positioning}
\newtheorem{theorem}{Theorem}
\newtheorem{lemma}{Lemma}
\newtheorem{assumption}{Assumption}
\newtheorem{definition}{Definition}
\newtheorem{corollary}{Corollary}

\newtheorem{remark}{Remark}

\newcommand{\ind}{\perp\!\!\!\!\perp}
\newcommand{\noind}{\not\perp\!\!\!\!\perp}

\newcount\Comments  
\Comments=1 
\newcommand{\kibitz}[2]{\ifnum\Comments=1\textcolor{#1}{#2} \fi \ignorespaces}

\title{Self-Resolving Prediction Markets for \\ Unverifiable Outcomes}

%

\author{%
  Siddarth Srinivasan \\ 
  Harvard University\\
  Cambridge, MA, USA \\
  \texttt{ssrinivasan@seas.harvard.edu} \\
  \And
  Ezra Karger \\
  Federal Reserve Bank of Chicago \\
  Chicago, IL, USA \\
  \texttt{ezra.karger@chi.frb.org} \\
  \And
  Yiling Chen \\
  Harvard University \\
  Cambridge, MA, USA \\
  \texttt{yiling@seas.harvard.edu}
}
 
\begin{document}

\maketitle

\begin{abstract}
  Prediction markets elicit and aggregate beliefs by paying agents based on how close their predictions are to a verifiable future outcome. However, outcomes of many important questions are difficult to verify or \emph{unverifiable}, in that the ground truth may be hard or impossible to access. We present a novel incentive-compatible prediction market mechanism to elicit and efficiently aggregate information from a pool of agents \emph{without observing the outcome}, by paying agents the negative cross-entropy between their prediction and that of a carefully chosen reference agent. Our key insight is that a reference agent with access to more information can serve as a reasonable proxy for the ground truth. We use this insight to propose self-resolving prediction markets that terminate with some probability after every report and pay all but a few agents based on the final prediction. The final agent is chosen as the reference agent since they observe the full history of market forecasts, and thus have more information by design.  We show that it is a perfect Bayesian equilibrium (PBE) for all agents to report truthfully in our mechanism and to believe that all other agents report truthfully. Although primarily of interest for unverifiable outcomes, this design is also applicable for verifiable outcomes.
\end{abstract}

\section{Introduction}

Decision-makers often tackle complex problems based on beliefs that are difficult to validate because the ground truth is inaccessible or prohibitively expensive to access.
For example, the causal effects of policy interventions can be hard to verify, yet policy-makers propose and implement policies with complicated societal outcomes; economic and political outcomes in the distant future can be hard to predict, yet people must make plans about where to live, where to work, and how to allocate resources today; and determining that a piece of content is misinformation or a violation of terms of service is challenging, yet social media platforms make content moderation decisions to improve user experiences. 
What these examples all have in common is that in the face of unverifiable questions, decision-makers gather information from varied sources to arrive at a `best guess.'

The general problem of eliciting and aggregating disparate beliefs and data into a single `best guess' in the absence of ground truth verification is technically challenging. How do agents reason about revealing their own information, and how can we incentivize them to reveal this privately held information? In this paper, we make progress on this problem by proposing the first incentive-compatible self-resolving prediction market (or equivalently, sequential peer prediction mechanism) that simultaneously elicits and aggregates beliefs in the absence of ground truth, under nearly the same assumptions as standard prediction markets. Our proposed design sequentially elicits and aggregates probabilistic predictions, randomly terminates with some probability $\alpha$ after each prediction, and pays all but the last few agents based on the prediction of the final agent. 

\begin{figure}[h!] 
\centering
\begin{tikzpicture}[>=stealth', shorten >=1pt, auto,
    node distance=2.5cm, scale=0.6, 
    transform shape, 
    state/.style={circle, draw, minimum size=1.5cm}]
    
\begin{scope}[every node/.style={circle,thick,draw}]
    \node[state] (1) at (0,0)   {$0$};
    \node[state] (2) at (2,-3)  {$1$};
    \node[state] (3) at (4,0)   {$2$};
    \node[state] (4) at (6,-3)  {$3$};
    \node[state] (5) at (8,0)   {$T-k$};
    \node[state] (6) at (10,-3) {$T-k+1$};
    \node[state] (7) at (12,0)  {$T$} ;
    \node[state] (8) at (14,-3) {end} ;

\end{scope}

\begin{scope}[>={Stealth[black]},
              every node/.style={fill=white,circle},
              every edge/.style={draw=black,very thick}]
    \path [->,blue] (1) edge node {} (2);
    \path [->,blue] (2) edge node {$1-\alpha$} (3);
    \path [->,blue] (3) edge node {$1-\alpha$} (4);
    \path [->,blue] (4) edge node[pos=0.5, above=-6pt] {$...$} (5);
    \path [->,blue] (5) edge node {$1-\alpha$} (6);
    \path [->,blue] (6) edge node[pos=0.5, above=-6pt] {$...$} (7);
    \path [->,blue] (7) edge node {$\alpha$} (8);
    \draw[<-,black, dashed] (6) --++(0:1.65cm) node[right,text=olive]{$R$};
    \draw[<-,black, dashed] (7) --++(0:1.5cm) node[right,text=olive]{$R$};
    \path[<-,olive, dashed] (2.north) edge[bend left=100, looseness=2] node[text width=1cm,midway,above=0.3em ] {$S_{CEM}(q^{(T)}, q^{(1)})$} (7.north); 
    \path [<-,olive, dashed] (3.north) edge[bend left=90, looseness=1.3] node[text width=0.5cm,midway,above, xshift=-2.5em] {$S_{CEM}(q^{(T)}, q^{(2)})$} (7.north); 
    \path [<-,olive, dashed] (4.south) edge[bend right=95, looseness=1.6] node[text width=1cm,below=0.35em ] {$S_{CEM}(q^{(T)}, q^{(3)})$} (7.south); 
    \path [<-, olive, dashed] (5.north) edge[bend left=50, looseness=1.5] node[text width=1cm, xshift=-4em, yshift=-0.8em ] {$S_{CEM}(q^{(T)}, q^{(T-k)})$} (7.north); 

\end{scope}
\end{tikzpicture}
\vspace{-8mm}
\caption{\emph{Illustration of a self-resolving prediction market}. Each node represents an agent reporting a prediction to the mechanism, and the mechanism terminates with probability $\alpha$ after each report. Payouts for the first $T-k$ agents are determined using a negative cross-entropy market scoring rule with respect to the terminal agent $T$, while the last $k$ agents receive a flat payout $R$. $k$ can be chosen to be large enough so the mechanism is strictly truthful.} \label{fig:drawing}
\end{figure}
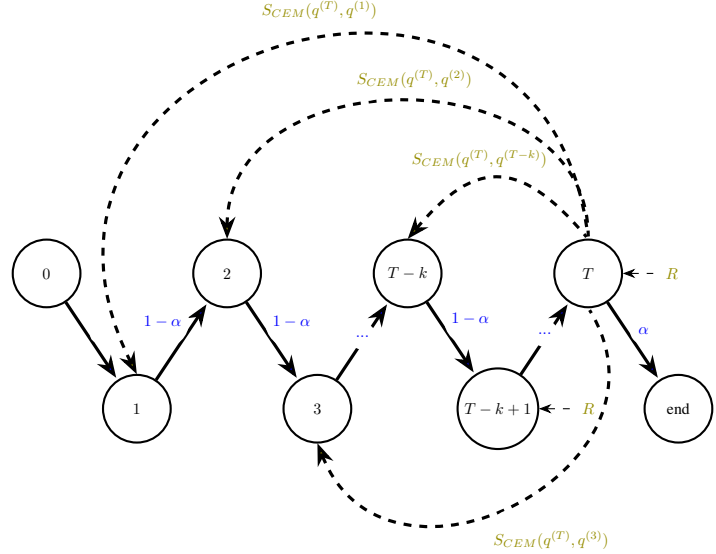

\paragraph{Contributions} Our primary contribution is a novel incentive-compatible mechanism that can elicit and aggregate truthful reports  \emph{even in the absence of ground truth}, under very similar assumptions as are applicable to standard prediction markets. Our work builds on three connected literatures: Aumann's agreement theorem, prediction markets, and peer prediction. Specifically: (1) unlike the standard framework of Aumann's agreement theorem, our mechanism \emph{provides incentives} for truthful information revelation and aggregation under the standard Aumannian protocol with many agents when agents' signals are conditionally independent given the ground truth (a kind of `informational substitutes' condition); (2) unlike prediction markets, our mechanism works even without access to the ground truth; and (3) unlike peer prediction mechanisms, our mechanism also efficiently aggregates information into a consensus prediction in the single-task setting while ensuring that it elicits minimal information, accommodates heterogeneous agents with non-binary signals, and pays zero in uninformative equilibria, as long as we have access to a sufficiently large pool of informed agents who share a common prior. Additionally, our proposal is easy to implement and intuitive to understand, with a straightforward payment scheme. In our preferred design, agents sequentially report predictions to a mechanism that terminates with some probability $\alpha$ after every prediction, and pays all but the final few agents the negative cross-entropy between their report and the final agent's report. The final few agents receive a flat fee. The primary technical challenges in paying an agent based on a reference agent's report are (1) in mitigating the incentive to strategically mislead a reference agent who observes the agent's report, and (2) in mitigating the incentive to hedge towards a common prior if the reference agent does not observe their report (as in the output agreement mechanisms by \citet{waggoner2014output}). Our key insight to mitigate these incentives is that {giving the reference agent access to enough independent informational substitutes} allows them to serve as a reasonable proxy for the ground truth. This lets us design a prediction market by making every agent's reference agent the final agent in the same prediction market as long as they are sufficiently far removed from that final agent; otherwise they just receive a flat fee. We show that truthful reporting is a strict perfect Bayesian equilibrium, and also analyze the incentive to deviate from truthful reporting when the final agent is not sufficiently distant in expectation (i.e., analyze the $\varepsilon-$PBE) under slightly milder conditions. 

\paragraph{Outline} In Section 2, we present related work; in Section 3, we present our model; in Section 4, we present some essential background; in Section 5, we investigate incentives when scoring agents based on their peers' reports; in Section 6, we propose self-resolving prediction markets; and conclude with a discussion in Section 7. All proofs are presented in Appendix \ref{app:proofs}. 

\section{Related Work}

\emph{Aumann's agreement theorem}  \citep{aumann1976agreeing} considers a simple model of a pair of agents who share a common prior and shows that common knowledge of posteriors implies equal posteriors, i.e., the agents cannot not agree to disagree. Subsequent work \citep{geanakoplos1982we, milgrom1982information, aaronson2005complexity} further formalized a standard model of rational Bayesian agents sharing a common prior who repeatedly exchange information, finding that the agents must eventually reach agreement. In particular, \citet{Frongillo2021AgreementIA} and \citet{kong2022false} find that not only must agents eventually agree, but that when agents' signals are `informational substitutes' \citep{chen2016informational} (i.e., their pieces of information have diminishing marginal value), the eventual consensus also \emph{aggregates all their private information}. While this line of work successfully introduces protocols to aggregate the private information of rational Bayesian agents into a `best guess,' it does not consider \emph{incentives}: how do we elicit \emph{true} private beliefs? Our self-resolving prediction market can be thought of as an incentive scheme for the Aumannian protocol under a conditional independence of signals assumption. 

When a mechanism designer has eventual access to the ground truth, it is possible to reward agents for reporting beliefs that correspond closely to this ground truth. Standard approaches in such a setting include proper scoring rules \citep{brier1950verification, good1952rational}, and their extension to market scoring rules \citep{hanson2003combinatorial} for use in prediction markets \citep{pennock2007computational} under certain conditions \citep{chen2010gaming, gao2013you}. \citet{ostrovsky2009information} further shows that markets can also \emph{aggregate} information for a broad class of securities. Prediction markets have shown promise in forecasting events ranging from election outcomes \citep{berg2008results} to the spread of infectious diseases \citep{polgreen2007use} to even the replicability of scientific experiments \citep{dreber2015using}, leading some to propose conditional prediction markets or \emph{decision markets} for use in decision-making \citep{hanson1999decision, othman2010decision, chen2011decision, chen2011information, chen2014eliciting}. These approaches typically run prediction markets for each possible action and use a decision rule to determine the course of action based on market predictions. However, these methods still pay out based on \emph{some} eventual outcome, require randomizing over actions (i.e., every action has a non-zero probability of being chosen), and these methods may not elicit causal information
\footnote{For example, a conditional prediction market on mortality rates with and without some health intervention might suggest that mortality is \emph{higher} under the health intervention even if the intervention works, simply because the intervention may only be politically feasible if mortality is high, i.e., causation runs the other way.} (usually what the decision-maker wishes to elicit). Unlike our mechanism, these approaches do not work in contexts where the decision-maker seeks unverifiable information. 

The \emph{information elicitation without verification} (IEWV) literature explores the problem of incentivizing agents to reveal private information when the ground truth is inaccessible, costly to access, or subjective; naturally the best one can do in such a situation is to compare agents' reports in clever ways so that truthful reporting is an equilibrium. This line of work considers wide-ranging applications such as data-labeling, peer grading, online rating systems, and the evaluation of causal effects. The simplest approach is the \emph{output agreement mechanism} \citep{von2004labeling, waggoner2014output} 
that directly pays agents based on how similar their response is to that of a reference agent, but this only elicits `common knowledge' information, not agents' true beliefs. On the other hand, early major efforts to elicit \emph{true} signals in an incentive-compatible way include peer prediction \citep{miller2005eliciting} and Bayesian truth serum \citep{prelec2004bayesian}, as well as subsequent extensions \citep{kong2019information} to more general settings: (1) \emph{minimal mechanisms} requiring only a single report from the agent \citep{witkowski2013learning, kong2020information}, (2) incentive-compatibility for \emph{small populations} \citep{witkowski2012robust}, (3) allowing for \emph{heterogeneous agents} \citep{Han2021TruthfulIE}, (4) allowing for \emph{non-binary signals} to agents \citep{radanovic2013robust}, (5) \emph{incentivizing effort} \citep{witkowski2013dwelling, radanovic2016incentives}, (6) \emph{relaxing the common prior} assumption \citep{witkowski2012peer, radanovic2015incentives}, (7) explicitly modeling \emph{adversarial agents} \citep{schoenebeck2021information}, and (8) reducing the appeal of \emph{uninformative equilibria} \citep{schoenebeck2020two, shnayder2016informed, Agarwal2020PeerPW, liu2020surrogate}, where every agent simply makes the same report regardless of their private signal. The challenge of persistent uninformative equilibria can be particularly significant \citep{gao2016incentivizing}; when a mechanism rewards an agent when their response matches a peer's, how do we prevent all agents from just making the same report regardless of their private signal to maximize reward? The most common approaches assume access to a large number of IID tasks (the \emph{multi-task setting}) and leverage statistical relationships between agent reports to ensure that uninformative equilibria pay less than informative equilibria. In our harder \emph{single-task setting}, recently proposed solutions  \citep{kong2019information, schoenebeck2020two, srinivasan2021auctions} typically rely on paying agents based on some kind of `improvement' over peers; these works are some of the most general solutions to the elicitation problem we consider, but do not handle aggregation and can be impractical for eliciting information from a large number of agents. In general, the IEWV literature focuses on \emph{eliciting} truthful information, and does not typically consider the problem of \emph{aggregating} information. \citet{frongillo2015elicitation} do consider elicitation for aggregation, but make strong parametric assumptions on agents' information structures. Despite `sequential' mechanisms where agents observe and update on other agents' reports being a natural way to aggregate information, the IEWV literature has primarily focused on eliciting information from agents in `parallel';  existing work on sequential peer prediction \citep{feng2022peer, liu2017sequential} primarily explores learning dynamics. By contrast, our proposed mechanism is an incentive-compatible sequential mechanism that efficiently aggregates information in the single-task setting and pays zero in uninformative equilibria. We also note that empirical evaluation of such mechanisms has been more limited, with mixed results \citep{gao2014trick, karger2021reciprocal, debmalya2020effectiveness} that depend on specific design details like the presence of uninformative equilibria or effortful agents. 

Finally, we note some other related strands of work. \citet{freeman2017crowdsourced} propose decentralized outcome resolution for a prediction market by separately eliciting votes for the true outcome from `arbiters' using peer prediction; our work embeds the resolution directly into the mechanism and does not require the eventual outcome to be observable at all. Our model for how agents observe and update their beliefs is similar to the sequential social learning models in the social learning literature \citep{golub2017learning}. \citet{mclean2002informational} establish a conceptually similar result to ours -- efficient allocation can be approximately incentive-compatible even when the state of the world is not observed if agents are `informationally small' (i.e., no single agent’s private information significantly alters the aggregate inference of the state). However, since agents’ utilities are directly tied to the true state in their model (as opposed to being determined solely by the mechanism's payoff structure), their mechanism does not face the challenge of uninformative equilibria where agents make uniform reports, which is a `default' behavior in pure information elicitation settings like peer prediction. Thus, our work can be thought of as engineering the idea of `informational smallness' into a sequential mechanism with market scoring rules primarily in a \emph{peer prediction setting}.

\section{Model}

Our goal is to elicit and aggregate continuous, probabilistic predictions about some binary\footnote{In the case of categorical outcomes, one can run a $1$-vs-all prediction market, and for continuous outcomes, one can discretize the outcome space and run binary prediction markets as well. {Care should be taken in interpreting such predictions; if the outcomes follow a fat-tailed distribution, tail probabilities may be hard to elicit and could have catastrophic consequences \citep{taleb2023probability}}.} \emph{Yes/No} outcome $Y$ from agents with heterogeneous beliefs. We write the space of outcomes as $\Omega_Y = \{0, 1\}$ and index outcome $\omega_i \in \Omega_Y$ as $Y_i$. Every agent $j$ has a discrete and finite signal space $\Omega_{\hat{X}^{(j)}}$, and receives one of $N_j$ possible signals, with $N_j \in \mathbb{Z}_+$.

\paragraph{Peer Prediction} We consider an agent $t$ and their reference agent $r$ from whom we will elicit predictions about some outcome $Y$. Agent $t$ has access to their own private signal ${X}^{(t)}$ not available to agent $r$. Similarly agent $r$ has access to private signal $X^{(r)}$ not available to agent $t$. Both agents have access to some shared signal $X^{(s)}$ is common knowledge. Thus, the full set of information available to agents $t$ and $r$ is $\hat{X}^{(t)} = \{ X^{(t)}, X^{(s)} \}$ and $\hat{X}^{(r)} = \{X^{(r)},  X^{(s)}\}$ respectively. We assume that agents' signals $x^{(j)} \in \Omega_{X^{(j)}}$ are drawn from a common prior $\mathbb{P}(Y, X^{(r)}, X^{(t)}, X^{(s)})$. Upon observing their private signals, agent $t$ arrives at their posterior over $Y$ through Bayesian updating as ${\bf p}^{(t)} = \mathbb{P}(Y|x^{(t)}, x^{(s)})= \mathbb{P}(Y|\hat{x}^{(t)})$. Agent $t$ may then strategically report their posterior to the mechanism as ${\bf q}^{(t)} = \sigma\left(\hat{x}^{(t)} \right)$ for some strategy $\sigma: {\Omega_{\hat{X}^{(t)}} } \rightarrow \Delta_{\Omega_Y}$. 
Agent $r$ may or may not observe agent $t$'s report to the mechanism, and they arrive at a posterior $\tilde{\bf p}^{(r)} = \mathbb{P}(Y|\hat{x}^{(r)}, {\bf q}^{(t)})$ or $\tilde{\bf p}^{(r)} = \mathbb{P}(Y|\hat{x}^{(r)})$ accordingly. Agent $r$ then reports their own posterior to the mechanism ${\bf q}^{(r)}$, and the mechanism pays agent $t$ the negative cross-entropy between the reference agent's prediction ${\bf q}^{(r)}$ and  their prediction ${\bf q}^{(t)}$.

\paragraph{Prediction Market} We consider a prediction market that elicits predictions from $M$ agents, arriving one at a time. Each agent participates exactly once at time-step $t$, where they report a prediction ${\bf q}^{(t)}$ about the outcome $Y$ to the mechanism. The agent at time-step $t$ has access to a private signal $X^{(t)} = x^{(t)}$ drawn from a joint prior distribution over outcome and signals $\mathbb{P}(Y, X^{(1)}, \ldots, X^{(M)})$, as well as all previous reports to the mechanism ${\bf q}^{(1:t-1)} = \{{\bf q}^{(1)}, \ldots {\bf q}^{(t)-1}\}$. Agent $t$ conditions on this information $\hat{X}^{(t)} = \{X^{(t)}, {\bf q}^{(1:t-1)}\}$ to arrive at a posterior $\tilde{\bf p}^{(t)} = \mathbb{P}(Y|x^{(t)}, {\bf q}^{(1:t-1)})$ over $Y$. Agent $t$ may then strategically report their posterior to the mechanism as ${\bf q}^{(t)} = \sigma\left(\hat{ x}^{(t)}\right)$ for some strategy $\sigma: {\Omega_{\hat{X}^{(t)}}} \rightarrow \Delta_{\Omega_Y}$. After each agent $t$ makes their report, the market terminates with probability $\alpha$; for all but the last $k$ agents (who receive a flat payout), the terminal agent in the same prediction market is the reference agent $r$. The reference agent $r$ observes all previous forecasts in the prediction market and arrives at a posterior $\tilde{\bf p}^{(r)} = \mathbb{P}(Y|{x}^{(r)}, {\bf q}^{(1:r-1)})$. The mechanism then pays agent $t$ based on how much their prediction ${\bf q}^{(t)}$ \emph{improves} the negative cross-entropy with respect to the reference agent's prediction ${\bf q}^{(r)}$ compared over the previous agent's prediction ${\bf q}^{(t-1)}$.

\paragraph{Notation} We use index $t$ to refer to the primary agent we are considering, index $r$ to refer to the reference agent whose prediction will be used to determine agent $t$'s payoff, and index $j$ to refer to an arbitrary agent. We denote an agent $j$'s truthful but potentially misinformed posterior (from observing other agents' reports and believing those to have been truthful) as $\tilde{\bf p}^{(j)}$. If the reports of other agents were in fact truthful, we denote agent $j$'s truthful and accurately informed posterior as ${\bf p}^{(j)}$. Similarly, we denote an agent $j$'s strategic and potentially misinformed report as  $\tilde{\bf q}^{(j)}$ and their strategic but accurately informed report as ${\bf q}^{(j)}$. We also use ${\bf r}$ to denote a reference agent $r$'s reported posterior when the details of how they arrived at their posterior do not matter. Generally, we use superscripts to indicate agents and subscripts to index a particular signal/outcome. We denote the common prior probability that $Y=1$ and $Y=0$ as $y_1$ and $y_0$ respectively, and the prior vector as ${\bf y}$. 

\subsection{Assumptions} Here we present some standard assumptions about agents' signals and the information they possess. The first two assumptions have to do with the common knowledge that potentially heterogeneous agents have regarding their priors and Bayesian rationality. These assumptions let us treat agents' reactions to new information with Bayes' rule in a consistent manner.

\begin{assumption}[Common Knowledge of Rationality and Risk Neutrality]
We assume that the agents are risk neutral and fully rational Bayesian agents and this is common knowledge.
\end{assumption}

\begin{assumption}[Common Knowledge of Common Prior]\label{as:cp}
We assume that the agents' signals $X^{(t)}$ and outcome $Y$ are drawn from a common prior $\mathbb{P}(Y, X^{(1)}, \ldots, X^{(M)})$ that is common knowledge.
\end{assumption}

Our next assumption restricts the class of information structures we study. The common \emph{stochastic relevance} assumption amounts to claiming that every agent's signal induces a unique posterior over $Y$. This allows agents to invert other agents' posteriors to infer their signal. Without this property, it would be challenging to reason about what agents know when they see another agent's forecast. This is a standard assumption, and it is plausible; practically, if two different signals induce the same posterior, we can simply ‘group’ and relabel the two signals as a single signal, and update the prior accordingly. This is also sometimes referred to as  `distinct signals'  assumption.
 
\begin{assumption}[Stochastic Relevance]\label{as:stoch}
    We assume that all agents' signals are stochastic relevant with respect to the event of interest, i.e.,  $\mathbb{P}(Y|X^{(j)}=x) \neq \mathbb{P}(Y|X^{(j)} = \tilde{x})$ for all $x \neq \tilde{x}$.
\end{assumption}

Next, we restrict our attention to information structures where agents' signals are conditionally independent given the ground truth. This essentially amounts to an assumption that agents' signals are \emph{informational substitutes} \citep{chen2016informational}, i.e., signals' informativeness has diminishing marginal value. Although this restricts the class of information structures we study, these are the same class of information structures for which we can guarantee that \emph{standard} prediction markets (where agents paid based on verifiable outcomes) are incentive-compatible \citep{chen2010gaming}. 

\begin{assumption}[Conditional Independence] \label{as:is}
    We assume that agents' signals are independent given the outcome $Y$, i.e., $X^{(j)} \ind X^{(j')} \mid Y$ for all $j \neq j'$. 
\end{assumption}

Before stating our final assumption, we introduce a notion of `distance' between distributions we will use to compare $\mathbb{P}(X^{(t)}|Y=0)$ and $\mathbb{P}(X^{(t)}|Y=1)$: the Bhattacharyya coefficient. 

\begin{definition}[$(1-\delta^{(j)})$-Bhattacharyya Coefficient]\label{def:bhat}
    The Bhattacharyya coefficient of agent $j$'s information structure with conditional probability table $\mathbb{P}(X^{(j)}|Y)$ is  $$\text{BC}\left(\mathbb{P}(X^{(j)}|Y=0), \mathbb{P}(X^{(j)}|Y=1) \right) = \sum_{i=1}^{N_j} \sqrt{\mathbb{P}(X_i^{(j)}|Y=0)\cdot \mathbb{P}(X_i^{(j)}|Y=1)} = {1- \delta^{(j)}}$$ Equivalently, the Bhattacharyya distance between the conditional distributions is  $$D_B\left( \mathbb{P}(X^{(j)}|Y=0), \mathbb{P}(X^{(j)}|Y=1) \right) = -\ln\left({1-\delta^{(j)}}\right)$$
\end{definition}

Observe that when the Bhattacharyya coefficient $(1-\delta^{(j)})$ is strictly less than 1, it means the columns of $\mathbb{P}(X^{(j)}|Y)$ are not identical, i.e., $\mathbb{P}(X^{(j)}|Y=0) \neq \mathbb{P}(X^{(j)}|Y=1)$ or equivalently $X^{(j)} \noind Y$. Indeed, we have the property that $1-\delta^{(j)} < 1$ for all $j$ 
by stochastic relevance (Assumption \ref{as:stoch}); at most one row of $\mathbb{P}(X^{(j)}|Y)$ can have identical probabilities, otherwise multiple signals would produce a likelihood ratio of 1 and hence identical posteriors. 

With this context, we introduce our final assumption of $(\delta, \eta)-$informative signals -- our only departure from the standard set of assumptions made in the prediction markets literature, though such assumptions have been studied in the social learning literature \citep{golub2017learning}. $\delta-$informativeness is essentially a `global minimum informedness' condition, requiring that 
all agents' Bhattacharyya coefficients are upper bounded by some $1-\delta$, i.e., $\sup_j 1- \delta^{(j)} = 1 - \delta < 1$. $\eta-$informativeness is a strengthening of the common assumption that all agents' signals have non-zero probability, requiring a `global minimum' probability of any signal, i.e., $\inf_t \min_{i, \omega} \mathbb{P}(x^{(j)}_i|Y=\omega) = \eta > 0$. This assumption ensures that the information gained by adding agents to our prediction market doesn't diminish too quickly. This is a plausible assumption since in practice, it is (1) highly unlikely that agents will have arbitrarily powerful/informative signals -- $\eta$-informativeness asserts a sensible constraint that there is an upper limit on how much a signal can update an agent's prior; and (2) highly unlikely that agents will have arbitrarily uninformative signals -- $\delta$-informativeness asserts a lower limit on how close agents' sampling distributions can be for $Y=1$ vs. $Y=0$.

\begin{assumption}[All Agents Have $(\delta, \eta)$-informative Signals]
We assume that all agents' signals are $(\delta, \eta)$-informative, i.e., we have $\text{BC}\left(\mathbb{P}(X^{(j)}|Y=0), \mathbb{P}(X^{(j)}|Y=1) \right) \leq {1-\delta}$ for all $j$ for some $\delta > 0$, and also that  $\min_{x, \omega} \mathbb{P}(X^{(j)}=x | Y=\omega) \geq  \eta$ for all $j$ and some $\eta > 0$.\label{as:deleta} 
\end{assumption}

Finally, we introduce the notion of $\tau-$granularity, which captures the smallest distance between the log likelihood ratios of two distinct signals for a given agent. This is always defined since the signal space is discrete and finite. Small $\tau$ implies that the agent has at least two signals that induce similar posteriors, while larger $\tau$ implies that the agent's posteriors are substantially different under different signals. This determines the degree to which an agent is inclined to pass off one signal as the other, so we will use it to establish our mechanism's strict truthfulness. For convenience, we treat $\tau$ as the same for all agents; our analysis of strict truthfulness still applies if this differs across agents, but the condition for strict truthfulness would vary by agent's $\tau$-granularity.
\begin{definition}[$\tau-$granularity]
    Let $\lambda_x = \log\left(\frac{\mathbb{P}(x|Y=1)}{\mathbb{P}(x|Y=0)}\right)$. The granularity of an agent $j$'s signal space is $\tau^{(j)}$ where $\tau^{(j)} = \min_{x, x' \in \Omega_{X^{(j)}}}|\lambda_x - \lambda_{x'}|$.
\end{definition}

\section{Background}

We now present background on scoring rules and prediction markets to set up our main results.

\begin{definition}[Scoring Rules, Proper Scoring Rules, and Strictly Proper Scoring Rules] \label{def:sr}
A scoring rule is a function $S: \Omega \times \Delta_\Omega \rightarrow \mathbb{R}$ that scores a probabilistic prediction ${\bf q} \in \Delta_\Omega$ against an outcome $\omega_i \in \Omega$. A scoring rule is \textit{proper} if, whenever the outcome $\omega_i$ is drawn from the distribution ${\bf p} \in \Delta_{\Omega}$, the expected score $\mathbb{E}_{i \leftarrow {\bf p}}[S(i, {\bf p})] \geq \mathbb{E}_{i \leftarrow {\bf p}}[S(i, {\bf q})]$ for any ${\bf q} \neq {\bf p}, {\bf q} \in \Delta_\Omega$. If this inequality is strict, the scoring rule is \textit{strictly proper}.
\end{definition}

Strictly proper scoring rules incentivize truthful reports of the belief of the distribution that outcomes $\omega$ are drawn from, as this strictly maximizes the expected reward. A commonly used scoring rule is the log scoring rule, which is the basis for the cross-entropy scoring rule used throughout this paper.

\begin{definition}[Log Scoring Rule]
The log scoring rule is a proper scoring rule where $S(i, {\bf q}) = \log q_i$.
\end{definition}

 Next, we present the \emph{cross-entropy scoring rule}, which we will use to score agents against each other (like in peer prediction) with the goal of resolving agent payoffs without ground truth. 

\begin{definition}[Cross-Entropy Scoring Rule]\label{def:xent}
    Given an agent's report ${\bf q}$ and a reference prediction ${\bf r} \in \Delta_\Omega$, the cross-entropy scoring rule is the negative cross-entropy between the distributions, i.e., $S_{CE}({\bf r}, {\bf q}) = - H({\bf r}, {\bf q}) = \sum_i r_i \log q_i$.
\end{definition}

 The cross-entropy scoring rule is not quite a scoring rule in the sense of Definition \ref{def:sr}, since it scores an agent's probabilistic prediction against another probabilistic prediction, as opposed to a discrete outcome.
 While we could sample from the reference agent's prediction and use the log scoring rule to pay the agent, this would unnecessarily increase the variance in payouts. Instead, we can just `take the expectation' with respect to the log scoring rule, which amounts to a negative cross-entropy payoff: $\mathbb{E}_{i \leftarrow {\bf r}}[\log {q}_i] = \sum_i r_i \log q_i = - H({\bf r}, {\bf q})$. 
As such, it inherits the property that for any fixed $\textbf{r}$, we have $\max_{\bf q} S_{CE}({\bf r}, {\bf q}) = S_{CE}({\bf r}, {\bf r})$. Of course, agents do not know what the reference agent will report, and so seek to maximize their \emph{expected payoff} by reporting their expectation of ${\bf r}$: $\mathbb{E}_{{\bf r}}[S_{CE}({\bf r}, {\bf q}^{(t)})] = \mathbb{E}\left[\sum_i r_i \log \left(q^{(t)}_i\right) \right] = \sum_i \mathbb{E} \left[ r_i   \right]\log \left({q^{(t)}_i}\right) = -H\left(\mathbb{E} \left[ {\bf r}  \right], {\bf q}^{(t)} \right)$. 

\paragraph{Prediction Markets} In a standard prediction market, agents may buy and sell securities that pay out a fixed amount contingent on the occurrence of some future event, usually facilitated through a continuous double auction (CDA) or automated market maker (AMM). \citet{chen2012utility} showed that a prediction market with a cost function-based automated market maker where agents change market prices by making trades can equivalently be implemented with \emph{market scoring rules} (MSR) \citep{hanson2003combinatorial} where agents report their forecast directly to the mechanism and are paid based on how much they \emph{improve} the market prediction. This work will use MSRs with cross-entropy scoring rules to directly elicit predictions from agents and determine payoffs, although we show how our mechanism can be cast as one in which agents trade securities in Section \ref{app:cfmm}.

\begin{definition}[$S$-Market Scoring Rule]
An $S-$market scoring rule is a sequential and shared proper scoring rule that pays agents the difference in score (according to the given proper scoring rule $S$) between their forecast and the prior forecast, i.e., the payoff is $ S(i, {\bf q}^{(t)}) - S(i, {\bf q}^{(t-1)}) $ where $S$ is the given proper scoring rule. 
\end{definition}

\begin{definition}[Cross-Entropy Market Scoring Rule (CE-MSR)]
    Given agent $t$'s prediction ${\bf q}^{(t)}$, market prediction ${\bf q}^{(t-1)}$ and a reference prediction ${\bf r}$, the cross-entropy market scoring rule is $S_{CEM}({\bf r}, {\bf q}^{(t)}, {\bf q}^{(t-1)}) = - H({\bf r}, {\bf q}^{(t)}) + H({\bf r}, {\bf q}^{(t-1)}) = \sum_i r_i \log \left(\frac{q^{(t)}_i}{q^{(t-1)}_i}\right) $.
\end{definition}\label{eq:ptsr}

Just like with cross-entropy scoring rules, the CE-MSR incentivizes agents to predict their reference agent's \emph{expected prediction}, since $\mathbb{E}_{{\bf r}}[S_{CEM}({\bf r}, {\bf q}^{(t)}, {\bf q}^{(t-1)}) ] = -H\left(\mathbb{E} \left[ {\bf r} \right], {\bf q}^{(t)} \right) +H\left(\mathbb{E} \left[ {\bf r} \right], {\bf q}^{(t-1)} \right)$, which is maximized by setting ${\bf q}^{(t)} = \mathbb{E}[{\bf r}]$. 

\paragraph{Informational Substitutes} Building on Definition \ref{as:is}, we further present two key intuitions about our conditional independence assumption (equivalent to a kind of informational substitutes assumption). First, observe that under such an assumption, in the standard Aumannian protocol where $n$ agents announce their posteriors in sequence (Observation B.1 by \citet{kong2022false}), every agent's posterior is equal to $\mathbb{P}(Y|X_1, \ldots, X_n)$ after a single round; agents do not need to repeatedly exchange posteriors. In other words, whenever an agent announces their beliefs, previous agents learn new information but have nothing new to contribute. This is the same reason why we have an `all-rush' equilibrium in standard prediction markets \citep{chen2010gaming} where every agent rushes to reveal their private information and only needs to participate once in the mechanism. This allows our mechanism to aggregate historical information with a single round of elicitation. Second, observe that when combined with stochastic relevance (Assumption \ref{as:stoch}), an agent can invert a sequence of predictions to obtain the sequence of signals that each agent observed. This allows agents to be confident that subsequent agents who observe their prediction will know everything they know, and so be able to aggregate \emph{all} historical information efficiently (see Lemma 4 by \citet{chen2010gaming}). 

\paragraph{Perfect Bayesian Equilibrium} A Perfect Bayesian Equilibrium (PBE) is game-theoretic solution concept for sequential Bayesian games with incomplete information that specifies a strategy profile (every agent's action for every possible information set) and belief profile (every agent's belief at every information set arrived at through Bayes' rule where possible) such that no agent can gain in expectation by unilateral deviation. A PBE must also specify off-the-equilibrium-path beliefs consistent with the strategy profile, i.e., beliefs formed after an unexpected move, to ensure that the equilibrium captures not only optimal behavior but also reasonable expectations in every possible contingency. An $\varepsilon-$PBE specifies a strategy and belief profile under which no agent can gain more than $\varepsilon$ payoff in expectation by unilateral deviation.

\section{Incentives under Cross-Entropy Scoring Rules}  \label{sec:peerpred}

In this section, we investigate agents' beliefs about other agents' beliefs and agents' incentives under cross-entropy scoring rules, independent of any particular prediction market design. Specifically, we study the natural question of whether there is a mismatch between what cross-entropy scoring rules incentivize (predicting the reference agent's prediction) and what the mechanism designer wants to elicit (beliefs about some random variable $Y$). First, we establish the number of agents required so that truthful reporting strictly maximizes the agent's expected payoff. Subsequently, we establish an upper bound on agents' gain in reward by deviating from truthful reporting under a slightly milder condition: when $\tau-$granularity is not known.

\paragraph{Setup} Suppose agents $t$ and $r$ have private information $x^{(t)}$ and $x^{(r)}$ respectively, and shared information $x^{(s)}$ that is common knowledge. The agents share a common prior $\mathbb{P}(Y, X^{(r)}, X^{(t)}, X^{(s)})$ over $Y$ with marginal prior over $Y$ denoted ${\bf y} = P(Y)$. 
Now, suppose agent $t$ reports their posterior over $Y$ as ${\bf q}^{(t)}$. If agent $r$ observes ${\bf q}^{(t)}$ associated with some signal $X^{(t)} = \tilde{x}^{(t)}$, agent $r$ updates their beliefs as if they observed $X^{(t)} = \tilde{x}^{(t)}$ (agents can invert posteriors to obtain the signal under Assumption \ref{as:stoch}). If ${\bf q}^{(t)}$ is not consistent with agent $t$'s information structure, agent $r$ may either ignore it or update as if they observed some arbitrary $X^{(t)} = \tilde{x}^{(t)}$. If agent $r$ does not observe this report, they take $X^{(t)} = \emptyset$ and make no update to their belief. We further assume that agent $r$ reports their true (but potentially misinformed) posterior $\tilde{\bf p}^{(r)}$.

\subsection{What Agents Believe Other Agents Believe}

Before discussing incentives, we seek to understand agents' beliefs about other agents' beliefs given the information they have. Reference agent $r$ may observe agent $t$'s report, so agent $t$ may be able to strategically mislead agent $r$ toward a preferred prediction if agent $r$ believes and updates on this information. Below, we characterize how agent $t$'s expectation of agent $r$'s belief depends on their own report and private signal.

\begin{lemma}\label{lem:post}  Agent $t$'s expectation of agent $r$'s true posterior ${\tilde{p}}_1^{(r)} = \mathbb{P}(Y=1|x^{(r)}, \tilde{x}^{(t)}, x^{(s)})$ is:
    \begin{equation}\small
       \mathbb{E}[{\tilde{p}}_1^{(r)} | x^{(t)}, x^{(s)}]  =  \mathbb{P}(Y=1|x^{(t)}, x^{(s)})  + \Delta(\Omega_{X_r}, \tilde{x}^{(t)}, {x}^{(t)}, x^{(s)})
    \end{equation}

    where $\Delta(\Omega_{X_r}, \tilde{x}^{(t)}, {x}^{(t)}, x^{(s)}) = \mu(\Omega_{X_r}, \tilde{x}^{(t)}, x^{(s)}) \cdot \rho(\tilde{x}^{(t)}, x^{(t)}, x^{(s)})$, with 
    \begin{equation}\small
        \begin{split}
            \mu(\Omega_{X_r}, \tilde{x}^{(t)}, x^{(s)}) &= \sum_{x^{(r)} \in \Omega_{X_r}}  \frac{1 }{ \frac{1}{\mathbb{P}(x^{(r)}, Y=1|\tilde{x}^{(t)}, x^{(s)})} + \frac{1}{\mathbb{P}(x^{(r)}, Y=0|\tilde{x}^{(t)}, x^{(s)})}} \\
             \rho(\tilde{x}^{(t)}, x^{(t)}, x^{(s)}) &= \frac{\mathbb{P}(Y=1|\tilde{x}^{(t)}, x^{(s)}) - \mathbb{P}(Y=1|x^{(t)}, x^{(s)})}{\mathbb{P}(Y=0|\tilde{x}^{(t)}, x^{(s)})\cdot  \mathbb{P}(Y=1|\tilde{x}^{(t)}, x^{(s)})} 
        \end{split}
    \end{equation}
\end{lemma}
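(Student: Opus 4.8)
The plan is to take the expectation over the only quantity agent $t$ does not control or observe, namely the reference agent's private signal $X^{(r)}$, and then reduce everything to a one-dimensional Bayesian computation via the conditional independence assumption. First I would write
\[
\mathbb{E}[\tilde{p}_1^{(r)}\mid x^{(t)}, x^{(s)}] = \sum_{x^{(r)} \in \Omega_{X_r}} \mathbb{P}(x^{(r)}\mid x^{(t)}, x^{(s)})\; \mathbb{P}(Y=1\mid x^{(r)}, \tilde{x}^{(t)}, x^{(s)}),
\]
and emphasize that the inferred report signal $\tilde{x}^{(t)}$ (on which agent $r$ conditions) and the true signal $x^{(t)}$ (on which agent $t$'s own expectation is conditioned) enter the two factors asymmetrically; keeping these straight is the crux of the whole argument. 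To organize the algebra I would introduce the shorthand $a_\omega = \mathbb{P}(x^{(r)}\mid Y=\omega, x^{(s)})$, $p_\omega = \mathbb{P}(Y=\omega\mid x^{(t)}, x^{(s)})$, and $\tilde{p}_\omega = \mathbb{P}(Y=\omega \mid \tilde{x}^{(t)}, x^{(s)})$.

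By Assumption \ref{as:is} (conditional independence of $X^{(r)}$ and $X^{(t)}$ given $Y$, treating the common-knowledge $x^{(s)}$ as fixed background), the weighting factor becomes $\mathbb{P}(x^{(r)}\mid x^{(t)}, x^{(s)}) = a_1 p_1 + a_0 p_0$, and by Bayes' rule the reference posterior becomes $\mathbb{P}(Y=1\mid x^{(r)}, \tilde{x}^{(t)}, x^{(s)}) = a_1\tilde{p}_1 / (a_1 \tilde{p}_1 + a_0 \tilde{p}_0)$, where Assumption \ref{as:stoch} guarantees that $\tilde{x}^{(t)}$ is recoverable from the report. The key step is then to center the sum at $p_1$: since $\sum_{x^{(r)}} a_1 = 1$, I can write $p_1 = \sum_{x^{(r)}} a_1 p_1$ and subtract term by term over the common denominator $a_1 \tilde{p}_1 + a_0 \tilde{p}_0$. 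The numerator of each term becomes $a_1 \tilde{p}_1 (a_1 p_1 + a_0 p_0) - a_1 p_1 (a_1 \tilde{p}_1 + a_0 \tilde{p}_0)$, in which the $a_1^2$ contributions cancel, leaving $a_1 a_0 (\tilde{p}_1 p_0 - p_1 \tilde{p}_0)$.

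The identity that makes everything collapse is $\tilde{p}_1 p_0 - p_1 \tilde{p}_0 = \tilde{p}_1 - p_1$ (immediate from $p_0 = 1-p_1$ and $\tilde{p}_0 = 1-\tilde{p}_1$), so each numerator reduces to $a_1 a_0 (\tilde{p}_1 - p_1)$ and
\[
\mathbb{E}[\tilde{p}_1^{(r)}\mid x^{(t)}, x^{(s)}] - p_1 = \sum_{x^{(r)} \in \Omega_{X_r}} \frac{a_1 a_0 (\tilde{p}_1 - p_1)}{a_1 \tilde{p}_1 + a_0 \tilde{p}_0}.
\]
To exhibit the claimed factorization $\Delta = \mu \cdot \rho$, I would pull the $x^{(r)}$-independent factor $\tilde{p}_1 - p_1$ out of the sum and multiply and divide each summand by $\tilde{p}_1 \tilde{p}_0$, rewriting $a_1 a_0 = (a_1\tilde{p}_1)(a_0 \tilde{p}_0)/(\tilde{p}_1 \tilde{p}_0)$. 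Recognizing $a_1 \tilde{p}_1 = \mathbb{P}(x^{(r)}, Y=1\mid \tilde{x}^{(t)}, x^{(s)})$ and $a_0 \tilde{p}_0 = \mathbb{P}(x^{(r)}, Y=0\mid \tilde{x}^{(t)}, x^{(s)})$, and applying $\tfrac{1}{1/A + 1/B} = \tfrac{AB}{A+B}$, the $x^{(r)}$-dependent sum is exactly $\mu(\Omega_{X_r}, \tilde{x}^{(t)}, x^{(s)})$, while the leftover factor $(\tilde{p}_1 - p_1)/(\tilde{p}_0 \tilde{p}_1)$ is exactly $\rho(\tilde{x}^{(t)}, x^{(t)}, x^{(s)})$, which finishes the proof. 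I expect the only genuine obstacle to be the bookkeeping between the true signal $x^{(t)}$ (which sets the measure of the expectation) and the reported/inferred signal $\tilde{x}^{(t)}$ (which sets the reference posterior); the decision to center the sum via $\sum_{x^{(r)}} a_1 = 1$ is precisely what makes the numerators telescope into the clean $\mu\rho$ form, and the remaining manipulations are routine.
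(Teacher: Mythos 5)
Your proof is correct, and it verifies the exact statement: I checked the cancellation $a_1\tilde p_1(a_1p_1+a_0p_0) - a_1p_1(a_1\tilde p_1 + a_0\tilde p_0) = a_1a_0(\tilde p_1 p_0 - p_1\tilde p_0)$, the identity $\tilde p_1 p_0 - p_1 \tilde p_0 = \tilde p_1 - p_1$, and the final regrouping $a_1a_0 = \mathbb{P}(x^{(r)},Y{=}1|\tilde x^{(t)},x^{(s)})\,\mathbb{P}(x^{(r)},Y{=}0|\tilde x^{(t)},x^{(s)})/(\tilde p_1\tilde p_0)$, which together recover $\Delta = \mu\cdot\rho$ exactly. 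Your route differs from the paper's in how the double sum over $(x^{(r)}, Y)$ is organized. The paper applies the tower property through $Y$ first, using conditional independence to write $\mathbb{E}[\tilde p_1^{(r)}|x^{(t)},x^{(s)}] = p_0\,\mathbb{E}[\tilde p_1^{(r)}|Y{=}0] + p_1\,\mathbb{E}[\tilde p_1^{(r)}|Y{=}1]$, then computes the two state-conditional expectations in closed form, $\mathbb{E}[\tilde p_1^{(r)}|Y{=}1] = 1 - \mu/\tilde p_1$ and $\mathbb{E}[\tilde p_1^{(r)}|Y{=}0] = \mu/\tilde p_0$ (the harmonic-mean structure of $\mu$ emerges inside each of these), and only at the final mixing step produces $\rho$ via the same odds identity you use. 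You instead sum over $x^{(r)}$ directly with mixture weights $a_1p_1 + a_0p_0$, center the sum using $\sum_{x^{(r)}} a_1 = 1$, and get $\mu$ and $\rho$ simultaneously from a per-term cancellation. The paper's organization buys interpretable intermediate objects — the reference agent's expected (misinformed) posterior in each true state of the world, with the asymmetry between $Y{=}1$ and $Y{=}0$ made explicit — while yours is more economical: pulling out the factor $(\tilde p_1 - p_1)$ early makes it immediately visible that $\Delta$ vanishes under truthful reporting ($\tilde x^{(t)} = x^{(t)}$), which is the content of Corollary \ref{cor:mart}, and it avoids the paper's $1 - \mathbb{P}(Y{=}0|\cdot)$ substitution trick needed to coax the $Y{=}1$ conditional expectation into harmonic-mean form. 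Both arguments lean on the same two pillars: conditional independence (Assumption \ref{as:is}) to factor likelihoods, and stochastic relevance (Assumption \ref{as:stoch}) to treat the report as a recoverable signal $\tilde x^{(t)}$.
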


\paragraph{Understanding Lemma \ref{lem:post}} Lemma \ref{lem:post} shows that agent $t$'s expectation of agent $r$'s true posterior is their own true posterior adjusted by a term $\Delta = \mu \cdot \rho$ that moves the agent's true posterior towards their posterior under $\tilde{x}^{(t)}$. Thus, agent $t$'s report influences their own expectation of agent $r$'s prediction, giving agent $t$ an additional degree of freedom. Since we wish to elicit agents' true beliefs, the next question that arises is: \emph{under what conditions does the adjustment term $\Delta$ disappear}, such agent $t$'s expectation of agent $r$'s prediction is their own true posterior? Observe that the adjustment term does not disappear by hiding agent $t$'s report from agent $r$ (i.e., $\tilde{x}^{(t)} = \emptyset$), nor by making all information private (i.e.,  $x^{(s)} = \emptyset$); this simply prompts an adjustment towards the posterior an independent observer would have without access to $\tilde{x}^{(t)}$ or $x^{(s)}$ or both. However, when an agent \emph{does} report their true posterior ($\tilde{x}^{(t)} = x^{(t)}$) and the reference agent observes and updates on this, $\rho(x^{(t)}, {x}^{(t)}, x^{(s)}) = 0$ and we have the following result:

\begin{corollary}[Martingale Property of Truthful Report]
    If agent $t$ truthfully reports their private signal $\tilde{x}^{(t)} = x^{(t)}$ and agent $r$ observes and believes this, agent $t$'s expectation of agent $r$'s posterior of $Y$ equals their own posterior of $Y$, i.e., $\mathbb{E}[{{p}}_1^{(r)} | x^{(t)}, x^{(s)}]  =  \mathbb{P}(Y=1|x^{(t)}, x^{(s)})$. \label{cor:mart}
\end{corollary}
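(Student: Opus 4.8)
The plan is to derive this directly from Lemma \ref{lem:post} by substituting the truthful report $\tilde{x}^{(t)} = x^{(t)}$ and showing that the adjustment term $\Delta = \mu \cdot \rho$ vanishes. First I would invoke the decomposition $\mathbb{E}[\tilde{p}_1^{(r)} \mid x^{(t)}, x^{(s)}] = \mathbb{P}(Y=1 \mid x^{(t)}, x^{(s)}) + \mu(\Omega_{X_r}, \tilde{x}^{(t)}, x^{(s)}) \cdot \rho(\tilde{x}^{(t)}, x^{(t)}, x^{(s)})$ from the lemma, noting that the only dependence on the reported signal $\tilde{x}^{(t)}$ enters through the factors $\mu$ and $\rho$.

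The key observation is that $\rho$ has numerator $\mathbb{P}(Y=1 \mid \tilde{x}^{(t)}, x^{(s)}) - \mathbb{P}(Y=1 \mid x^{(t)}, x^{(s)})$, which is exactly the gap between the posterior the reference agent forms upon believing the reported signal $\tilde{x}^{(t)}$ and agent $t$'s own true posterior. Setting $\tilde{x}^{(t)} = x^{(t)}$ makes these two posteriors identical, so the numerator is zero and hence $\rho(x^{(t)}, x^{(t)}, x^{(s)}) = 0$, provided the denominator $\mathbb{P}(Y=0 \mid x^{(t)}, x^{(s)}) \cdot \mathbb{P}(Y=1 \mid x^{(t)}, x^{(s)})$ is nonzero — which follows from the $\eta$-informativeness part of Assumption \ref{as:deleta} keeping both posteriors bounded away from $0$ and $1$.

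The one point requiring care is that $\rho = 0$ alone does not yield $\Delta = 0$ unless $\mu$ is finite, since $0 \cdot \infty$ would be indeterminate. I would therefore verify that $\mu(\Omega_{X_r}, x^{(t)}, x^{(s)})$ is finite: each summand has the harmonic-mean form $\left(\tfrac{1}{a} + \tfrac{1}{b}\right)^{-1} = \tfrac{ab}{a+b} \le \min(a,b) \le 1$, with $a = \mathbb{P}(x^{(r)}, Y=1 \mid x^{(t)}, x^{(s)})$ and $b = \mathbb{P}(x^{(r)}, Y=0 \mid x^{(t)}, x^{(s)})$, so $\mu$ is a finite sum of terms each at most $1$ and is thus bounded. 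Hence $\Delta = \mu \cdot 0 = 0$, and the expectation collapses to agent $t$'s own posterior. Finally, since a truthful report renders the reference agent accurately informed, I would relabel $\tilde{p}^{(r)}$ as $p^{(r)}$ in accordance with the paper's notation convention, yielding $\mathbb{E}[p_1^{(r)} \mid x^{(t)}, x^{(s)}] = \mathbb{P}(Y=1 \mid x^{(t)}, x^{(s)})$ as claimed. Because every step is either a substitution into the lemma or a boundedness check, I anticipate no substantive obstacle; the only subtlety worth flagging is the finiteness of $\mu$, which is needed to rule out the indeterminate product.
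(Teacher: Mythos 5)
Your proposal is correct and follows essentially the same route as the paper, which derives the corollary directly from Lemma \ref{lem:post} by setting $\tilde{x}^{(t)} = x^{(t)}$ so that $\rho(x^{(t)}, x^{(t)}, x^{(s)}) = 0$ and hence $\Delta = 0$. Your additional checks — that the denominator of $\rho$ is nonzero and that $\mu$ is finite (each harmonic-mean summand being bounded by a probability) — are sound diligence the paper leaves implicit, but they do not change the argument.
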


This martingale property of posteriors is a known result, particularly in the literature exploring Aumann's agreement theorem \citep{aaronson2005complexity}. 
Yet, Corollary \ref{cor:mart} does not shed any light on how to design a mechanism in which $\Delta$ vanishes and agents are best off reporting their true beliefs. In particular, when agent $r$ updates on agent $t$'s report, agent $t$ wants to report predictions that will be close to the reference agent's prediction (achieved by truthful reporting) but \emph{also} nudge the reference agent towards greater certainty (achieved by reporting signals that induce more extreme beliefs in the reference agent). This arises from the strict convexity of negative cross-entropy, since highly certain predictions that agree have lower cross-entropy (and so higher payoff) than uncertain predictions that agree (see example in Appendix \ref{sec:exinct}). Separately, if agent $r$ does not observe agent $t$'s report, agent $t$ is incentivized to move their prediction towards the common prior, a dynamic that can lead to `coarsening' of elicited information \citep{waggoner2014output}. 
We make further progress by identifying a condition under which the adjustment term $\Delta$ diminishes: when the reference agent has access to enough informational substitutes that agent $t$ cannot access.

\begin{theorem}
    Suppose a reference agent $r$ can observe $k$ private signals $x^{(1)}, \ldots,  x^{(k)}$ that are informational substitutes, where $\Omega_{X^{(r)}} = \Omega_{X^{(1)}} \times \ldots \times \Omega_{X^{(k)}}$ and $x^{(j)} \in \Omega_{X^{(j)}}$, and agent $t$ cannot observe these signals.  Then, agent $t$'s adjustment $\Delta$ of their posterior to predict their reference agent's  true posterior is upper bounded in magnitude as $|\Delta(\Omega_{X_r}, \tilde{x}^{(t)}, x^{(t)}, x^{(s)}) |\leq \frac14 \left({{\frac{1-\eta}{\eta}}} - {{\frac{\eta}{1-\eta}}} \right) \left( 1-\delta \right)^k $.
    Consequently, we have an upper bound $|\Delta | \leq \varepsilon'$ for any $k$ where
    \begin{equation}\label{eq:kchoice} \small
    k  \geq \frac1{-\log \left( 1-\delta \right)}\log \left(\frac{1}{4\varepsilon'}\left({{\frac{1-\eta}{\eta}}} - {{\frac{\eta}{1-\eta}}} \right)\right)
    \end{equation}
       
\label{eq:imptheorem}
\end{theorem}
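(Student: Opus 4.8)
The plan is to bound $|\Delta| = \mu\cdot|\rho|$ by exploiting that \emph{both} factors depend on the reported posterior $\tilde p_1 = \mathbb{P}(Y=1\mid \tilde x^{(t)},x^{(s)})$, so they must be controlled jointly rather than separately. Throughout I write $\tilde p_1,\tilde p_0$ for the reported posterior and $p_1$ for agent $t$'s true posterior. First I would simplify $\mu$. Each summand is a harmonic-mean-type term $\frac{ab}{a+b}$ with $a=\mathbb{P}(x^{(r)},Y=1\mid\tilde x^{(t)},x^{(s)})$ and $b=\mathbb{P}(x^{(r)},Y=0\mid\tilde x^{(t)},x^{(s)})$, and AM-GM gives $\frac{ab}{a+b}\le\frac{\sqrt{ab}}{2}$. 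Conditional independence (Assumption \ref{as:is}) lets me factor $a=\tilde p_1\,\mathbb{P}(x^{(r)}\mid Y=1)$ and $b=\tilde p_0\,\mathbb{P}(x^{(r)}\mid Y=0)$, so
\[
\mu \le \frac{\sqrt{\tilde p_0\tilde p_1}}{2}\sum_{x^{(r)}}\sqrt{\mathbb{P}(x^{(r)}\mid Y=0)\,\mathbb{P}(x^{(r)}\mid Y=1)} = \frac{\sqrt{\tilde p_0\tilde p_1}}{2}\,\mathrm{BC}\bigl(\mathbb{P}(X^{(r)}\mid Y=0),\mathbb{P}(X^{(r)}\mid Y=1)\bigr).
\]
The remaining sum is exactly the Bhattacharyya coefficient of the reference agent's \emph{joint} information structure, and since $X^{(r)}=(X^{(1)},\dots,X^{(k)})$ factorizes given $Y$, the coefficient is multiplicative: $\mathrm{BC}(X^{(r)})=\prod_{j=1}^k(1-\delta^{(j)})\le(1-\delta)^k$ by Definition \ref{def:bhat} and Assumption \ref{as:deleta}. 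This is the step that converts ``many independent informational substitutes'' into the exponential decay $(1-\delta)^k$.

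Next I would handle $\rho$ together with the coupling. Introducing prior odds $O=\mathbb{P}(Y=1\mid x^{(s)})/\mathbb{P}(Y=0\mid x^{(s)})$ and the likelihood ratios $\ell,\tilde\ell$ of the true and reported signals, conditional independence gives $\tilde p_1=\frac{O\tilde\ell}{1+O\tilde\ell}$, $p_1=\frac{O\ell}{1+O\ell}$, and a short calculation yields $\rho=\frac{(\tilde\ell-\ell)(1+O\tilde\ell)}{\tilde\ell(1+O\ell)}$ and $\sqrt{\tilde p_0\tilde p_1}=\frac{\sqrt{O\tilde\ell}}{1+O\tilde\ell}$. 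Multiplying, the $(1+O\tilde\ell)$ factors cancel and I obtain $|\Delta|\le\frac{(1-\delta)^k}{2}\cdot\frac{\sqrt O\,|\tilde\ell-\ell|}{\sqrt{\tilde\ell}\,(1+O\ell)}$. Crucially the arbitrary prior $O$ enters only through $\frac{\sqrt O}{1+O\ell}$, which I maximize over $O>0$: the maximum is $\frac{1}{2\sqrt\ell}$, attained at $O=1/\ell$. Hence $|\Delta|\le\frac{(1-\delta)^k}{4}\cdot\frac{|\tilde\ell-\ell|}{\sqrt{\ell\tilde\ell}}$, with all dependence on the shared-information prior eliminated.

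Finally I would bound the likelihood-ratio term. Because each agent has at least two distinct signals (forced by stochastic relevance together with $\delta$-informativeness) and every conditional signal probability is $\ge\eta$, each such probability is also $\le 1-\eta$, so $\ell,\tilde\ell\in[\frac{\eta}{1-\eta},\frac{1-\eta}{\eta}]$. Writing $\frac{|\tilde\ell-\ell|}{\sqrt{\ell\tilde\ell}}=\bigl|\sqrt{\tilde\ell/\ell}-\sqrt{\ell/\tilde\ell}\bigr|$ and noting the ratio $\tilde\ell/\ell$ ranges over $[(\tfrac{\eta}{1-\eta})^2,(\tfrac{1-\eta}{\eta})^2]$, this is maximized at the endpoints with value $\frac{1-\eta}{\eta}-\frac{\eta}{1-\eta}$. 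Chaining the three bounds gives exactly $|\Delta|\le\frac14\bigl(\frac{1-\eta}{\eta}-\frac{\eta}{1-\eta}\bigr)(1-\delta)^k$, and the threshold \eqref{eq:kchoice} follows by setting this $\le\varepsilon'$, taking logarithms, and dividing by $-\log(1-\delta)>0$ (which flips the inequality).

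The main obstacle, I expect, is resisting the temptation to bound $\mu$ and $\rho$ separately. Using $|\tilde p_1-p_1|\le 1$ and $\sqrt{\tilde p_0\tilde p_1}\le\tfrac12$ discards the essential cancellation, because when the report is extreme the denominator $\tilde p_0\tilde p_1\to 0$ in $\rho$ blows up faster than the $\sqrt{\tilde p_0\tilde p_1}$ in $\mu$ shrinks. The odds/likelihood-ratio parametrization is what makes the joint dependence on $\tilde p_1$ transparent and lets the arbitrary prior $O$ be optimized away cleanly; getting that cancellation and the single-variable maximization right is the crux of the argument.
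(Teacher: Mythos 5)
Your proof is correct and takes essentially the same route as the paper's: the same harmonic-mean-$\le$-geometric-mean bound on $\mu$ with conditional independence and Bhattacharyya multiplicativity yielding $(1-\delta)^k$, the same joint cancellation of the $\sqrt{\tilde p_0 \tilde p_1}$ factor against the denominator of $\rho$, the same maximization over the shared-information prior (the paper substitutes the maximizing prior in probability notation, which is exactly your $O = 1/\ell$ optimization giving the factor $\tfrac{1}{2\sqrt{\ell}}$), and the same $\eta$-based likelihood-ratio bounds producing $\tfrac{1-\eta}{\eta} - \tfrac{\eta}{1-\eta}$. Your odds/likelihood-ratio parametrization is a cleaner presentation of the identical argument, and your explicit note that $\delta$-informativeness forces at least two signals (hence conditional probabilities $\le 1-\eta$) makes rigorous a step the paper uses implicitly.
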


\paragraph{Understanding Theorem \ref{eq:imptheorem}} This result shows that we can focus on designing mechanisms that provide agent $r$ with a minimum number of $k$ informational substitutes that agent $t$ does not have access to. Observe that for any desired upper bound $\varepsilon'$ on the adjustment term $\Delta$, $k$ scales  logarithmically with $\frac1{\varepsilon'}$, logarithmically\footnote{This scaling is additionally promising because $\eta$ can get quite small as signal spaces get larger; our result shows that an exponential increase in signal space only linearly increases the need for informational substitutes.} with $\frac1{\eta}$, and inversely with the upper bound of the Bhattacharyya distance of agents' information structures, i.e., as $\frac1{-\log(1-\delta)}$. 
The key insight is that with enough informational substitutes, agent $r$'s true belief becomes a reasonable proxy for the ground truth. Thus, agent $t$ can only have a small effect on agent $r$'s belief even if agent $r$ observes their report (i.e., the $k$ informational substitutes essentially `wash out' the impact of agent $t$'s report on agent $r$).

\subsection{Peer Prediction under Cross-Entropy Scoring Rules}\label{sec:ppcesr}

Having characterized what agent $t$ expects their reference agent's belief to be,  we turn our attention to \emph{incentives} under cross-entropy scoring rules. 

Specifically, we extend our analysis to show that agent $t$ strictly maximizes their payoff by reporting their true belief when agent $r$'s has enough information. The reasoning is this: by Theorem \ref{eq:imptheorem}, agent $t$'s influence on agent $r$'s belief diminishes with the amount of information agent $r$ has, which we show translates to a diminishing influence on their payoff; given a discrete and finite signal space, agent $r$'s update will be consistent with \emph{some} signal that agent $t$ could have reported (otherwise, agent $r$ ignores it). Then, as we increase the number of informational substitutes that the reference agent has, the set of reports that agent $t$ could profitably deviate to dwindles. There finally comes a point (determined by agent $t$'s $\tau$-granularity where no signal that agent $t$ could credibly claim to observe would successfully persist to `wash out' the reference agent's information. We state this formally:

\begin{theorem}[Minimum Reference Signals for Strict Truthfulness]\label{thm:exact}
    Suppose agent $t$ is paid based on a cross-entropy scoring rule with respect to agent $r$ who reports truthfully.  Then, agent $t$ strictly maximizes their expected payoff by reporting truthfully if the reference agent has at least $k$ signals where
    \begin{equation}\small
        k  > \frac{1}{-\log \left( 1-\delta \right)}\log\left(\frac{\left|\log \left(\frac{y_1(1-\eta)^2}{y_0\eta^2}\right)\right|\left({{\frac{1-\eta}{\eta}}} - {{\frac{\eta}{1-\eta}}} \right)}{8 \left(\eta(1-\eta)\tau\right)^2} \right)
    \end{equation}
\end{theorem}

Now, this result depends on the $\tau-$granularity of agent $t$'s signal; this isn't an assumption but comes directly from our model and may not be easily known by the mechanism designer. In order to provide a result that \emph{doesn't} depend on the $\tau$-granularity, we can translate the guarantee that agent $t$'s expectation of agent $r$'s belief is within some desired $\varepsilon'$ of their own true belief to an upper bound on the gain in deviating from truthful reporting with the following: 

\begin{theorem}[Upper Bound in Deviating from Truthful Reporting under Cross-Entropy Scoring Rules]
    If agent $t$ is paid based on a cross-entropy scoring rule with respect to agent $r$ who reports truthfully, the difference in expected payoff when deviating from truthful report ${\bf p}^{(t)}$ associated with signal $x^{(t)}$ to misreport ${\bf q}^{(t)}$ is:
    \begin{equation}\small \label{eq:uboundce}
       \mathbb{E}[S_{CE}({\bf q}^{(t)}) | x^{(t)}, x^{(s)}] - \mathbb{E}[S_{CE}({\bf p}^{(t)}) | x^{(t)}, x^{(s)}] \leq \mathcal{D}_{\eta}(\Delta, {\bf y})
    \end{equation}
    where ${\bf y}$ is the prior and
    \begin{equation} \small \label{eq:deta}
        \mathcal{D}_{\eta}(\Delta, {\bf y}) = \Delta \cdot \log \left(\frac{ ({1-\eta})  y_1 +  \Delta \cdot \left(\eta y_0 + ({1-\eta}) {y_1}\right)}{\eta y_0-  \Delta\cdot \left(\eta y_0 + ({1-\eta}){y_1}\right)} \right)  
    \end{equation}
    \label{thm:bound}
\end{theorem}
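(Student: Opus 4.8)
The plan is to reduce the expected-payoff comparison to a one-dimensional inequality in the adjustment term $\Delta$ from Lemma~\ref{lem:post}, and then bound it using $\eta$-informativeness. First I would write agent $t$'s expected payoff for a report ${\bf q}^{(t)}$. Since the cross-entropy scoring rule is linear in the reference report, $\mathbb{E}[S_{CE}({\bf q}^{(t)})\mid x^{(t)},x^{(s)}] = \sum_i \mathbb{E}[\tilde p_i^{(r)}\mid x^{(t)},x^{(s)}]\log q_i^{(t)} = -H(\bar{\bf r},{\bf q}^{(t)})$, where $\bar{\bf r}=\mathbb{E}[\tilde{\bf p}^{(r)}\mid x^{(t)},x^{(s)}]$ is agent $t$'s expectation of the reference posterior. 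By Lemma~\ref{lem:post}, $\bar r_1 = p_1^{(t)}+\Delta$ and $\bar r_0 = p_0^{(t)}-\Delta$, and by Corollary~\ref{cor:mart} the truthful report ${\bf p}^{(t)}$ induces $\Delta=0$, so its expected payoff is exactly $-H({\bf p}^{(t)},{\bf p}^{(t)})$.

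The central difficulty is that the report ${\bf q}^{(t)}$ enters the payoff twice --- directly through the score $-H(\bar{\bf r},{\bf q}^{(t)})$ and indirectly through the induced reference belief $\bar{\bf r}$ (via $\Delta$) --- so one cannot treat the reported likelihood ratio and $\Delta$ as independent: extreme reports induce large $\Delta$. I would resolve this coupling with Gibbs' inequality: for the $\bar{\bf r}$ that any given report induces, $-H(\bar{\bf r},{\bf q}^{(t)})\le -H(\bar{\bf r},\bar{\bf r})$, which eliminates ${\bf q}^{(t)}$ and leaves the deviation gain expressed purely through $\Delta$ and the true posterior. This yields
\[
\mathbb{E}[S_{CE}({\bf q}^{(t)})\mid x^{(t)},x^{(s)}]-\mathbb{E}[S_{CE}({\bf p}^{(t)})\mid x^{(t)},x^{(s)}] \;\le\; \sum_i \bar r_i\log\bar r_i - \sum_i p_i^{(t)}\log p_i^{(t)}.
\]
A short algebraic manipulation then shows the right-hand side equals $\Delta\log\frac{p_1^{(t)}+\Delta}{p_0^{(t)}-\Delta}$ minus the nonnegative divergence $\sum_i p_i^{(t)}\log\frac{p_i^{(t)}}{\bar r_i}$, so the gain is at most $\Delta\log\frac{p_1^{(t)}+\Delta}{p_0^{(t)}-\Delta}$.

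Finally I would remove the dependence on the unknown true posterior. Treating $\Delta>0$ (the $\Delta<0$ case follows by relabeling the outcomes, giving the symmetric version with $y_0,y_1$ swapped), the bound is increasing in $p_1^{(t)}$, so I substitute the most extreme posterior permitted by $\eta$-informativeness: a single stochastically relevant signal has likelihood ratio at most $\frac{1-\eta}{\eta}$, giving $p_1^{(t)}\le \frac{(1-\eta)y_1}{(1-\eta)y_1+\eta y_0}$ and $p_0^{(t)}\ge\frac{\eta y_0}{(1-\eta)y_1+\eta y_0}$. Writing $c=\eta y_0 + (1-\eta)y_1$ and clearing denominators recovers exactly $\Delta\log\frac{(1-\eta)y_1+\Delta c}{\eta y_0-\Delta c}=\mathcal{D}_\eta(\Delta,{\bf y})$, valid for $\Delta$ small enough that the denominator stays positive (guaranteed once $\Delta$ is controlled via Theorem~\ref{eq:imptheorem}).

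The step I expect to be the main obstacle is handling the report--reference coupling cleanly; the Gibbs'-inequality move is what makes the argument go through, since any attempt to bound the reported log-likelihood ratio directly runs into its entanglement with $\Delta$. The remaining pieces --- the $\mathrm{KL}\ge 0$ rewriting and the extreme-posterior substitution --- are routine once the problem has been made one-dimensional in $\Delta$.
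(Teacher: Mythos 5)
Your proposal is correct and takes essentially the same route as the paper's proof: both compute the expected score as a cross-entropy against $\bar{\bf r}=\mathbb{E}[\tilde{\bf p}^{(r)}\mid x^{(t)},x^{(s)}]$, apply Gibbs' inequality to eliminate ${\bf q}^{(t)}$, reduce to the bound $\Delta\log\bigl(\tfrac{p_1^{(t)}+\Delta}{p_0^{(t)}-\Delta}\bigr)$, and finish with the same $\eta$-likelihood-ratio substitution yielding $\mathcal{D}_\eta(\Delta,{\bf y})$. Your step of writing the deviation gain as an exact identity minus a nonnegative KL term is the same inequality as the paper's appeal to concavity of entropy, since the Bregman divergence of negative entropy is precisely $\mathrm{KL}({\bf p}^{(t)}\,\|\,\bar{\bf r})$, so the two arguments coincide step for step (including the implicit restriction to $\Delta\ge 0$ in the final monotonicity step).
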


\begin{remark}
As $k$ increases and $ \Delta$ vanishes (by Theorem \ref{eq:imptheorem}), the difference in expected rewards from misreporting also vanishes, i.e., $\lim_{ \Delta \to 0} \mathcal{D}_\eta(\Delta, {\bf y}) = 0$. Additionally,  $\lim_{ \Delta \to 0} \mathbb{E}[S_{CE}({\bf q}^{(t)}) | x^{(t)}, x^{(s)}] = -H({\bf p}^{(t)}, {\bf q}^{(t)})$, which is strictly maximized by truthful reporting ${\bf q}^{(t)} = {\bf p}^{(t)}$, i.e., agent $t$'s expected reward approaches the cross-entropy between their true posterior and their reported prediction.
\end{remark}

By Theorem \ref{eq:imptheorem}, we know that we can make the adjustment term vanish by giving the reference agent access to enough informational substitutes.  Now, we would ideally like to upper bound Equation \ref{eq:deta} by some desired $\varepsilon$, solve for an upper bound on $\Delta$ such that $\mathcal{D}_{\eta}(\Delta, {\bf y}) \leq \varepsilon$, and use this to determine the number of informational substitutes that agent $r$ needs through Equation \ref{eq:kchoice} in Theorem \ref{eq:imptheorem}. Unfortunately, $\mathcal{D}_\eta(\Delta, {\bf y}) = \varepsilon$ is a transcendental equation, so we can only solve for $k$ numerically. 
Thus, we numerically solve for ${\varepsilon'} = \min \{|\Delta|: \mathcal{D}_{\eta}(\Delta, {\bf y}) = \varepsilon\}$ and compute the minimum number of informational substitutes $k_{min}$ that agent $r$ requires according to Theorem \ref{eq:imptheorem} for any setting of $\delta, \eta$ and choice of $\varepsilon$. This guarantees that agent $t$ can gain no more than $\varepsilon$ by deviating from truthful reporting. We visualize the dependence of $k$ on these parameters in Appendix \ref{app:viz}, and find that parameter $\delta$ has the greatest impact on the $k$.

As a more general observation, we see from Lemma \ref{lem:post} that a simple `peer prediction' payment of paying an agent based on agreement with a reference agent is not generally truthful, because of the incentive to mislead the reference agent or hedge towards the prior. Specifically, the core incentive issue arises from the strict convexity of negative cross-entropy, since highly certain predictions that somewhat agree have lower cross-entropy (and so higher payoff) than uncertain predictions that exactly agree (see example in Appendix \ref{sec:exinct}). Yet, by Theorem \ref{thm:exact}, it becomes truthful to pay based on agreement when the reference agent has access to enough conditionally independent signals that can `wash out' the impact of agent $t$'s report on agent $r$'s belief. Although this result could be used in `parallel' peer prediction settings where agents do have $k$ private informational substitutes, we find that the most natural design for a mechanism that both elicits and aggregates information is a `sequential' peer prediction mechanism, structured as a prediction market.

\subsection{Prediction Markets with Cross-Entropy Market Scoring Rules} \label{sec:icesmr}

Given the results above, we are left with the question of how best to pick a reference agent and how to give them access to the required informational substitutes. Prediction markets are a natural solution here, since we can either pick some future agent in the same prediction market or some agent in a separate, walled-off prediction market as the reference agent; in both cases, the reference agent can access $k$ informational substitutes that agent $t$ cannot. We defer a discussion of the details of the prediction market design to the next section, but this reasoning motivates the extension of our result from Theorem \ref{thm:bound} on cross-entropy  scoring rules to cross-entropy \emph{market} scoring rules, where agent $t$ is paid based on how much their forecast ${\bf q}^{(t)}$ improves on the market prior (assumed to be the true prior ${\bf y}^{(t-1)}$, arrived at by all previous agents reporting truthfully). 

First, identify the exact number of informational substitutes $k$ that the reference agent needs access to in order to achieve strict truthfulness, based on agent $t$'s $\tau-$granularity:
\begin{theorem}\label{thm:exactmsr}
    {Suppose agent $t$ is paid based on a cross-entropy market scoring rule with respect to agent $r$ who reports truthfully. Then, agent $t$ strictly maximizes their expected payoff by reporting truthfully if the reference agent has at least $k$ signals where}
    \begin{equation}\small
        k  > \frac{1}{-\log \left( 1-\delta \right)}\log\left(\frac{\left|\log \left(\frac{1-\eta}{\eta} \right)\right|\left({{\frac{1-\eta}{\eta}}} - {{\frac{\eta}{1-\eta}}} \right)}{8 \left(\tau\eta(1-\eta)\right)^2} \right)
    \end{equation}
\end{theorem}

Now, we give an upper bound on the gain in expected payoff by misreporting signals that does not depend on knowing $\tau-$granularity, for cases when the mechanism designer does not know $\tau$:

\begin{theorem}[Upper Bound in Deviating from Truthful Reporting under Cross-Entropy Market Scoring Rules]
        If agent $t$ is paid based on a cross-entropy market scoring rule with respect to agent $r$ who reports truthfully, the difference in expected payoff when deviating from truthful report ${\bf p}^{(t)}$ associated with signal $x^{(t)}$ to misreport ${\bf q}^{(t)}$ is:
    \begin{equation}\small \label{eq:uboundcemsr}
       \mathbb{E}[S_{CEM}({\bf q}^{(t)}) | x^{(t)}, x^{(1:t-1)}] - \mathbb{E}[S_{CEM}({\bf p}^{(t)}) | x^{(t)}, x^{(1:t-1)}] \leq \hat{\mathcal{D}}_{\eta}(\Delta, {\bf y}^{(t-1)})
    \end{equation}
    where ${\bf y}^{(t-1)}$ is the true market prior before agent $t$'s report and
    \begin{equation}\small \label{eq:dhat}
       \hat{\mathcal{D}}_{\eta}(\Delta, {\bf y}^{(t-1)}) = \Delta \cdot \log \left(\frac{({1-\eta}) +  \Delta\cdot\left( \eta\frac{y_0^{(t-1)}}{y_1^{(t-1)}} + ({1-\eta})\right)}{\eta -  \Delta\cdot \left(\eta + ({ 1-\eta}) \frac{y_1^{(t-1)}}{y_0^{(t-1)}}\right)} \right)
    \end{equation}
    \label{thm:pmcemsr}
\end{theorem}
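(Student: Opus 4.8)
The plan is to reduce this market scoring rule bound to the ordinary cross-entropy scoring rule bound of Theorem \ref{thm:bound} by exploiting the additive structure of the CE-MSR. First I would write the market scoring rule as a difference of two cross-entropy terms, $S_{CEM}({\bf r}, {\bf q}^{(t)}, {\bf q}^{(t-1)}) = S_{CE}({\bf r}, {\bf q}^{(t)}) - S_{CE}({\bf r}, {\bf q}^{(t-1)})$, and take expectations over the reference agent's report, writing $\bar{\bf r}(\cdot) = \mathbb{E}[{\bf r}\mid\cdot]$ for the induced expected reference belief. The crucial observation is that, \emph{unlike} in a standard proper scoring rule, the subtracted term does not cancel when comparing a misreport ${\bf q}^{(t)}$ against the truthful report ${\bf p}^{(t)}$: because agent $r$ observes and updates on agent $t$'s report, the expected reference posterior $\bar{\bf r}$ itself moves with agent $t$'s report, so $\mathbb{E}[S_{CE}({\bf r}, {\bf q}^{(t-1)})]$ is \emph{not} independent of ${\bf q}^{(t)}$.

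Concretely, I would decompose the expected-payoff difference exactly as
\[
\mathbb{E}[S_{CEM}({\bf q}^{(t)})] - \mathbb{E}[S_{CEM}({\bf p}^{(t)})] = \Big(\mathbb{E}[S_{CE}({\bf q}^{(t)})] - \mathbb{E}[S_{CE}({\bf p}^{(t)})]\Big) - \sum_i \big(\bar r_i({\bf q}^{(t)}) - \bar r_i({\bf p}^{(t)})\big)\log q^{(t-1)}_i,
\]
where the first bracket is precisely the cross-entropy scoring rule gain and the second sum is the correction induced by the shift in $\bar{\bf r}$. Since all previous agents report truthfully, their reports act as shared common-knowledge information and the market prior ${\bf q}^{(t-1)} = {\bf y}^{(t-1)} = \mathbb{P}(Y\mid x^{(1:t-1)})$ is the true aggregated belief. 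Treating the reports $x^{(1:t-1)}$ in the role of the shared signal $x^{(s)}$ and ${\bf y}^{(t-1)}$ as the effective prior, Theorem \ref{thm:bound} bounds the first bracket by $\mathcal{D}_{\eta}(\Delta, {\bf y}^{(t-1)})$.

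For the correction term I would invoke Corollary \ref{cor:mart} and Lemma \ref{lem:post}: truthful reporting leaves the expected reference posterior at agent $t$'s own posterior, so $\bar r_1({\bf p}^{(t)}) = p_1^{(t)}$ by the martingale property, whereas the misreport shifts it to $\bar r_1({\bf q}^{(t)}) = p_1^{(t)} + \Delta$, with the $Y=0$ component shifting by $-\Delta$. Substituting ${\bf q}^{(t-1)} = {\bf y}^{(t-1)}$ then collapses the second sum to exactly $\Delta\log\frac{y_0^{(t-1)}}{y_1^{(t-1)}}$ (an equality, not merely a bound), which is independent of agent $t$'s posterior. Finally I would verify the algebraic identity $\mathcal{D}_{\eta}(\Delta, {\bf y}^{(t-1)}) + \Delta\log\frac{y_0^{(t-1)}}{y_1^{(t-1)}} = \hat{\mathcal{D}}_{\eta}(\Delta, {\bf y}^{(t-1)})$ by pulling the factor $y_0^{(t-1)}/y_1^{(t-1)}$ inside the logarithm of $\mathcal{D}_{\eta}$ and dividing the numerator and denominator of its argument by $y_1^{(t-1)}$; this recovers exactly the numerator of $\hat{\mathcal{D}}_{\eta}$ while rescaling its denominator to the stated form, completing the bound.

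\noindent\textbf{Anticipated obstacle.} The main conceptual hurdle is recognizing that the reference agent's belief shift breaks the clean cancellation one expects from a market scoring rule, so that the MSR gain equals the SR gain \emph{plus} a precise log-odds correction; once this is identified, the rest is the bookkeeping reduction to Theorem \ref{thm:bound} together with the algebraic identity. I would also take care to confirm that adding a posterior-independent correction does not disturb the worst-case maximization over agent $t$'s posterior that is already baked into $\mathcal{D}_{\eta}$, so that the sum remains a valid uniform upper bound over all deviations.
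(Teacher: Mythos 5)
Your proposal is correct, but it is organized differently from the paper's proof, which derives the bound from scratch rather than by reduction to Theorem \ref{thm:bound}. The paper's argument runs entirely through KL divergence: using Corollary \ref{cor:mart} it writes the truthful expected payoff exactly as $\mathrm{KL}\left({\bf p}^{(t)} \,\|\, {\bf y}^{(t-1)}\right)$, bounds the misreport payoff by $\mathrm{KL}\left(\mathbb{E}[\tilde{\bf p}^{(r)}|\hat{X}^{(t)}] \,\|\, {\bf y}^{(t-1)}\right)$ via Gibbs' inequality, applies a tangent-line bound from the convexity of KL in its first argument to peel off the gradient term $\Delta \cdot \log \left(\frac{(p_1^{(t)}+\Delta)\,y_0^{(t-1)}}{(p_0^{(t)}-\Delta)\,y_1^{(t-1)}}\right)$, and then re-runs the Bayes-rule and $\eta$-informativeness manipulations to reach $\hat{\mathcal{D}}_{\eta}$. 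Your route instead treats Theorem \ref{thm:bound} as a black box: you decompose the CE-MSR payoff difference exactly into the CE-SR payoff difference plus the correction $-\sum_i \bigl(\bar r_i({\bf q}^{(t)}) - \bar r_i({\bf p}^{(t)})\bigr)\log y_i^{(t-1)}$, evaluate the correction exactly as $\Delta\log\bigl(y_0^{(t-1)}/y_1^{(t-1)}\bigr)$ using Lemma \ref{lem:post} and Corollary \ref{cor:mart}, and close with the identity $\hat{\mathcal{D}}_{\eta}(\Delta,{\bf y}^{(t-1)}) = \mathcal{D}_{\eta}(\Delta,{\bf y}^{(t-1)}) + \Delta\log\bigl(y_0^{(t-1)}/y_1^{(t-1)}\bigr)$, which indeed checks out (multiply numerator and denominator of the argument of $\mathcal{D}_\eta$ by $y_0^{(t-1)}$ and $y_1^{(t-1)}$ respectively, then divide both by $y_0^{(t-1)}y_1^{(t-1)}$; your description of this step is slightly garbled but the identity is correct). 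The two arguments are algebraically equivalent in the end — the paper's gradient term is precisely the sum of Theorem \ref{thm:bound}'s gradient term and your correction — but yours buys modularity (no repetition of the convexity and $\eta$-bounding computation) and makes explicit the conceptually important point that the usual MSR cancellation fails here because the expected reference posterior is endogenous to the report; the paper's version buys a self-contained derivation and the clean intermediate interpretation of the truthful payoff as a KL divergence from the market prior. One caveat to note when you invoke Theorem \ref{thm:bound} with ${\bf y}^{(t-1)}$ as the prior: the theorem's statement calls ${\bf y}$ ``the prior,'' but its proof actually uses ${\bf y} = \mathbb{P}(Y|x^{(s)})$, the prior conditioned on shared information, which is exactly what justifies your substitution of the true market prior; your proposal implicitly relies on this proof-level reading, and it is the right one.
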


\begin{remark}
    As $k$ increases and $ \Delta$ vanishes (by Theorem \ref{eq:imptheorem}), the difference in expected rewards from misreporting also vanishes, i.e., $\lim_{ \Delta \to 0} \hat{\mathcal{D}}_\eta(\Delta, {\bf y}^{(t-1)}) = 0$. Additionally, agent $t$'s expected reward tends as $\lim_{ \Delta \to 0} \mathbb{E}[S_{CEM}({\bf q}^{(t)}) | x^{(t)}, x^{(1:t-1)}] =-H({\bf p}^{(t)}, {\bf q}^{(t)}) + H({\bf p}^{(t)}, {\bf y}^{(t-1)})$ which is strictly maximized under truthful reporting ${\bf q}^{(t)} = {\bf p}^{(t)}$.
\end{remark}

\begin{remark} \label{rem:kproc}
 Since $\hat{\mathcal{D}}_\eta(\Delta, {\bf y}^{(t-1)}) = \varepsilon$ is a transcendental equation, we can only numerically compute ${\varepsilon'} = \min \{|\Delta|: \hat{\mathcal{D}}_{\eta}(\Delta, {\bf y}^{(t-1)}) = \varepsilon\}$. We can subsequently use this to compute the minimum number of informational substitutes $k_{min}$ (using Equation \ref{eq:kchoice}) that agent $r$ requires so the gain in deviating from truthful reporting is no more than $\varepsilon$ (according to Theorem \ref{eq:imptheorem}). We visualize the dependence of $k$ on various settings of $\delta, \eta$ and choice of $\varepsilon$ in Figure \ref{fig:main} and Appendix \ref{app:viz}, observing that $k_{min}$ is much more sensitive to the choice of $\delta$ than $\varepsilon$ or $\eta$.
\end{remark}

\begin{figure}[htbp]
    \centering
    \includegraphics[width=\textwidth]{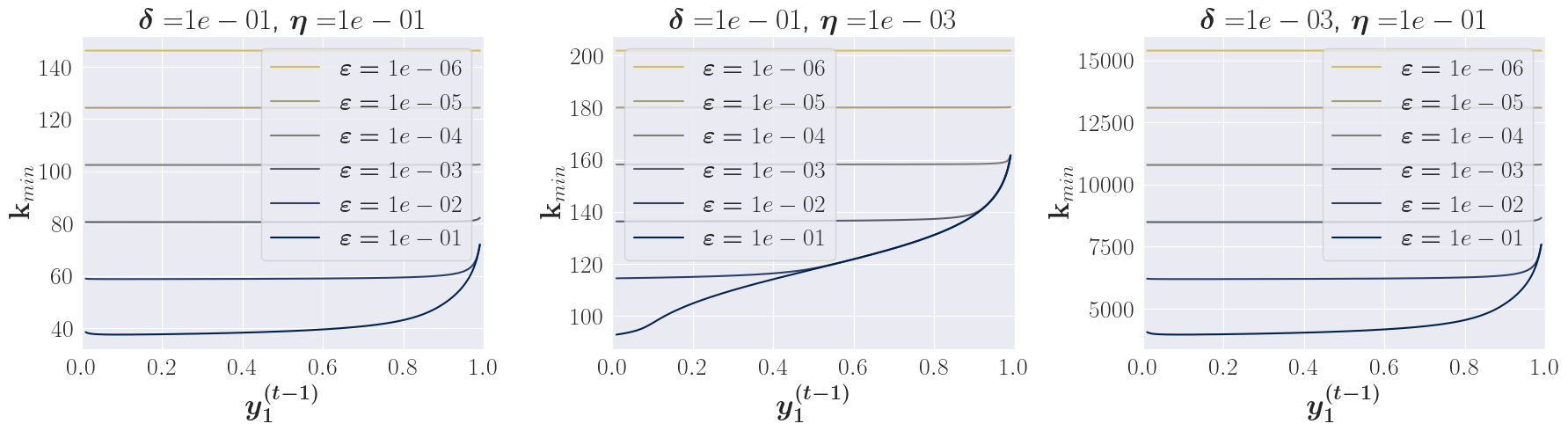}
    \caption{Minimum number of informational substitutes $k_{min}$ that the reference agent needs to guarantee that the incentive to deviate from truthful reporting is no more than $\varepsilon$ as a function of the prior market prediction $y_1^{(t-1)}$, for different choices of $(\delta, \eta)$ when $\tau-$granularity is not known. }
    \label{fig:main}
\end{figure}

\section{Self-Resolving Prediction Markets}

In the previous section, we saw that if we use cross-entropy market scoring rules to score an agent $t$ against a reference agent with sufficient informational substitutes, agent $t$ strictly maximizes their payoff by reporting truthfully. In this section, we use this insight to propose our design for a \emph{self-resolving prediction market} that does not need access to the true outcome. We show that it is a perfect Bayesian equilibrium for all agents to report their true beliefs and believe that all other agents report truthfully. In addition to eliciting information from agents, the prediction market design also naturally aggregates the predictions to provide a single prediction summarizing all elicited information. There are several possible ways to design such a self-resolving prediction market, and we begin with our preferred design.

\paragraph{Our Preferred Design} In our preferred design, the prediction market elicits predictions from agents arriving in sequence, terminates with some probability $\alpha$ after every report, and pays all but the final $k$ agents using cross-entropy market scoring rules with the terminal agent serving as the reference agent $r$. The final $k$ agents are simply paid a flat fee $R$. The primary incentive issue is that agent $t$ seeks to both closely match their reference prediction and also to nudge their reference agent towards more certain predictions. However, as long as agent $t$'s reference agent $r$ satisfies $r \geq t+k$ where $k$ is chosen as in Remark \ref{rem:kproc}\footnote{Technically, setting $k$ using Remark \ref{rem:kproc} depends on the market prior ${\bf y}^{(t-1)}$ which varies for each agent and time step. However, as seen in Figure \ref{fig:main}, the variation is not large, and is even flat for small enough $\varepsilon$. Thus, for any chosen $\varepsilon$ we can just choose $k = \max_{{\bf y}^{(t-1)} \in [y_{min}, y_{max}]} k_{min}({\bf y}^{(t-1)})$ for some $y_{min}, y_{max}$ a very small amount away from 0 and 1 respectively to avoid any pathologies with the boundaries. $k$ can also be set without reference to the prior according to Theorem \ref{thm:exactmsr}.} or Theorem \ref{thm:exactmsr}, truthful reporting is either $\varepsilon$-incentive-compatible or strictly incentive-compatible. With such a design, market predictions prior to $t$ are shared signals between agents $t$ and $r$, i.e., $x^{(s)} = \tilde{x}^{(1:t-1)}$. 
Agent $t$'s private information $x^{(t)}$ remains private unless truthfully revealed, and agent $r$'s `private signals' that agent $t$ cannot access are comprised of their own signal and predictions made after agent $t$, $\hat{x}^{(r)} = \{x^{(r)}, {\bf q}^{(t+1: r)}\}$.  We discuss other possible designs in Section \ref{sec:intprac}. In this case, the cost to the mechanism designer is $k\cdot R + \sum_{t=1}^{r-k} -H({\bf r}, {\bf q}^{(t)}) + H({\bf r}, {\bf q}^{(t-1)}) = k\cdot R -H({\bf r}, {\bf q}^{(r-k)}) + H({\bf r}, {\bf q}^{(0)})$.

\begin{theorem}[Truthful Reporting is a Perfect Bayesian Equilibrium]\label{thm:rectru}
    In a self-resolving prediction market where a reference agent can observe at least $k$ additional predictions (with $k$ chosen according to Theorem \ref{thm:exactmsr}), the strategy profile of truthful reporting and belief profile of believing all previous agents reported truthfully is a strict Perfect Bayesian Equilibrium. When $k$ is chosen according to Remark \ref{rem:kproc}, this strategy and belief profile is a $\varepsilon-$PBE. Off the equilibrium path, agents believe any observed (mis)report to be truthful if that report is consistent with the reporting agent's signal structure. If the observed prediction is invalid, (i.e., inconsistent with any signal the reporting agent could have observed), all subsequent agents may either update as if they learned some arbitrary signal or ignore this prediction, and believe all subsequent agents continue to report truthfully. 
 \end{theorem}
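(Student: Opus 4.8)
The plan is to exhibit the strategy and belief profiles and then establish $\varepsilon$-sequential rationality at every information set, quantifying deviation gains through Theorems \ref{eq:imptheorem}, \ref{thm:bound}, and \ref{thm:pmcemsr}. The strategy is truthful reporting; on-path beliefs are that all earlier agents reported truthfully; off-path beliefs are as stated (invert a valid report to its signal; treat an invalid report as an arbitrary signal or ignore it, while still believing all later agents report truthfully). Since each agent acts once and the exogenous termination ends the market almost surely, it suffices to check that no agent $t$, at any information set $\{x^{(t)}, \mathbf{q}^{(1:t-1)}\}$, can raise its expected payoff by more than $\varepsilon$ by deviating while all others play truthfully. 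On the equilibrium path, conditional independence (Assumption \ref{as:is}) with stochastic relevance (Assumption \ref{as:stoch}) yields single-round aggregation: each truthful report $\mathbf{q}^{(j)}$ equals $\mathbb{P}(Y\mid x^{(1:j)})$ and is invertible, so ``all earlier agents reported truthfully'' is Bayes-consistent, the running prior $\mathbf{q}^{(t-1)}$ equals the true aggregate $\mathbf{y}^{(t-1)}=\mathbb{P}(Y\mid x^{(1:t-1)})$ assumed in Theorem \ref{thm:pmcemsr}, and the terminal reference agent's posterior aggregates every signal it has seen.

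First I would condition on the exogenous termination time, which fixes the terminal/reference agent $r$ and is unaffected by $t$'s report. If the realization places $t$ among the final $k$ agents, it is paid the flat fee $R$, its payoff is report-independent, and its deviation gain is exactly $0$. Otherwise $t$ is scored by the CE-MSR against $r$, and the buffer of $k$ flat-fee agents forces $r\ge t+k$, so $r$ has observed the post-$t$ signals $x^{(t+1)},\dots,x^{(r)}$ --- at least $k$ conditionally independent informational substitutes that $t$ cannot access --- while the pre-$t$ reports constitute the shared signal $x^{(s)}=\tilde{x}^{(1:t-1)}$. This is exactly the configuration of Lemma \ref{lem:post} and Theorem \ref{thm:pmcemsr}.

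The core step reduces a unilateral deviation to the single-signal misreport already analyzed. A deviation to a report consistent with some signal $\tilde{x}^{(t)}$ is believed by agent $t+1$, whose truthful report then encodes $\tilde{x}^{(t)}$; by invertibility this propagates faithfully down the chain, so the reference agent forms precisely the misinformed posterior $\mathbb{P}(Y\mid x^{(1:t-1)},\tilde{x}^{(t)},x^{(t+1:r)})$ of Lemma \ref{lem:post}. Theorem \ref{eq:imptheorem} bounds the induced adjustment by $|\Delta|\le\tfrac14\!\left(\tfrac{1-\eta}{\eta}-\tfrac{\eta}{1-\eta}\right)(1-\delta)^{r-t}\le\varepsilon'$ whenever $r-t\ge k$, and Theorem \ref{thm:pmcemsr} with Remark \ref{rem:kproc} converts $|\Delta|\le\varepsilon'$ into a per-realization gain bound $\hat{\mathcal{D}}_\eta(\Delta,\mathbf{y}^{(t-1)})\le\varepsilon$. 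Since the termination distribution is independent of the deviation, $t$'s total expected gain is a convex combination of the flat-fee case (gain $0$) and the CE-MSR cases (each gain $\le\varepsilon$), hence at most $\varepsilon$. Because this bound invokes only the reference agent's possession of $k$ substitutes, guaranteed by the buffer regardless of the realized history, the same conclusion holds verbatim at off-path information sets, giving $\varepsilon$-sequential rationality everywhere.

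It remains to dispatch the off-path beliefs and confirm the equilibrium is well-posed. Valid-but-untruthful reports fall under the core step, since Bayes' rule under truthful-reporting beliefs infers the corresponding $\tilde{x}^{(t)}$; invalid reports have zero probability under every signal, so PBE permits arbitrary beliefs, and both the ``arbitrary signal'' and ``ignore'' conventions leave the uniform bound $|\Delta|\le\varepsilon'$ of Theorem \ref{eq:imptheorem} intact, preserving the $\le\varepsilon$ conclusion. The final $k$ agents (including the reference agent) are indifferent under the flat fee, so truthful reporting is a weak best response, matching the assumption in Theorem \ref{thm:pmcemsr} that the reference reports truthfully. I expect the main obstacle to be the propagation step: arguing rigorously that a single deviation perturbs the reference agent's posterior exactly as one substituted signal $\tilde{x}^{(t)}$, with no extra distortion injected by the intermediate agents' truthful-but-misinformed updates --- precisely where conditional independence and single-round invertibility of posteriors are indispensable. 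A secondary subtlety is choosing $k$ uniformly over the possible market priors $\mathbf{y}^{(t-1)}$, as flagged in the design footnote.
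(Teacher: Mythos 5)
Your proposal is correct and takes essentially the same approach as the paper's proof: fix the truthful strategy and belief profile, use Theorem \ref{eq:imptheorem} to bound the adjustment $\Delta$ available to a deviator (since the scored agent's reference agent holds at least $k$ private informational substitutes), and convert that bound into an $\varepsilon$ payoff bound via Theorem \ref{thm:pmcemsr} and Remark \ref{rem:kproc}. The paper's own proof is considerably terser; the steps you make explicit --- conditioning on the exogenous termination time, the faithful propagation of a valid misreport down the chain as a single substituted signal $\tilde{x}^{(t)}$, the convex combination with the zero-gain flat-fee event, and the terminal agents' indifference under the flat fee --- are exactly the details the paper leaves implicit.
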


Our self-resolving prediction market design also admits some less plausible PBE, like uninformative equilibria (where all agents report the same prediction regardless of true belief) and permutation equilibria (where all agents re-label their beliefs on $Y$). Intuitively, if every agent reports the same prediction, they do not improve on a peer's report, and hence receive zero payout -- strictly less than in the informative equilibrium. We discuss this in greater detail in Appendix \ref{app:equil}.

\subsection{Market Termination with Random Stopping} \label{app:term}

Next, we expand on the mechanics of our preferred self-resolving prediction market design, which terminates with some probability $\alpha$ and sets the final agent as the fixed reference agent, paying all but the last $k$ agents with this reference agent. The last $k$ agents are simply paid some constant reward $R$. This guarantees that we only score an agent against the final agent if they are at least $k$ steps removed. In such a design, every agent $t$ receives the constant reward $R$ with probability $\sum_{n=1}^{k} (1-\alpha)^{n-1}\alpha = 1- (1-\alpha)^k$, and is paid according to the cross-entropy market scoring rule with probability $(1-\alpha)^k$. When $k$ is chosen according to Theorem \ref{thm:exactmsr}, we maintain strict truthfulness. When $k$ is chosen according to Equation \ref{eq:kchoice}, this gives us $\left((1-\alpha)^k\cdot \varepsilon\right)-$incentive-compatibility; by $\mathbb{E}[S_{CEM}({\bf q}^{(t)}) | \hat{X}^{(t)}] \leq \mathbb{E}[S_{CEM}({\bf p}^{(t)}) | \hat{X}^{(t)}] + \varepsilon$, we can write any agent $t$'s expected utility as:
    
    \begin{equation}
       \begin{split}
           \mathbb{E}[U({\bf q}^{(t)})] &= \left(\sum_{n=1}^k (1-\alpha)^{n-1} \alpha R\right) + (1-\alpha)^k \cdot \mathbb{E}[S_{CEM}({\bf q}^{(t)}) | \hat{X}^{(t)}] \\
           &\leq \left(1 - (1-\alpha)^k\right) R + (1-\alpha)^k \left(\mathbb{E}[S_{CEM}({\bf p}^{(t)}) | \hat{X}^{(t)}] + \varepsilon\right) \\
           &= \mathbb{E}[U({\bf p}^{(t)})] + (1-\alpha)^k\varepsilon
       \end{split}
    \end{equation}

\paragraph{Choice of $\alpha$} For risk-neutral agents, these choices do not matter in theory since any agent is no more than $\varepsilon$ better off by misreporting for any choice of $R$ or $\alpha$. Yet, as a practical matter, we would like to choose an $\alpha$ not so large that the market typically terminates quickly and pays most agents flat rewards, and choose $\alpha$ not so small that the market does not terminate quickly enough. Thus, the mechanism designer may choose some desired expected number of traders $T$. Then, by setting $\alpha = \frac1{T+k}$, the mechanism will average $T+k$ participating agents before termination, where $T$ agents can be paid according to cross-entropy market scoring rules and $k$ traders can be paid a flat fee.

\paragraph{Random Stopping without Flat Fees} Lastly, we observe that it is also possible to pay every agent, including the last $k$ agents, based on the final (reference) agent's prediction this could be done by finding the setting of $\alpha$ such that the expected gain from deviating in a supposedly truthful PBE (while allowing for the reference agent to be less than some $k$ timesteps away) is no more than $\varepsilon$, i.e., solving $\mathbb{E}_{\alpha}[\hat{\mathcal{D}}_\eta(\Delta, {\bf p}^{(t-1)})] \leq \varepsilon$ for $\alpha$. To compute this, we would first use the upper bound on $\Delta$ in Theorem \ref{eq:imptheorem} to arrive at an upper bound on $\hat{\mathcal{D}}_\eta$ for each possible $k$, and then weight this upper bound by $(1-\alpha)^{k-1}\alpha$ to take the expectation, and solve for $\alpha$. This computation would again need to be done numerically. The idea is that from each trader's perspective, they expect the market to continue for long enough that in expectation, their reference agents will be sufficiently `distant' so they can expect to gain no more than $\varepsilon$ by deviating from truthful reporting.

\subsection{Other Design Choices} \label{sec:intprac}

Here, we consider some alternative design decisions, such as running a parallel prediction market for reference, ideas for terminating self-resolving prediction markets, strategies for choosing the reference agent, and the cost to the mechanism designer.

\paragraph{Parallel Prediction Markets} Instead of choosing a reference agent from the same prediction market, we could also run one or more separate, walled-off \emph{parallel prediction market} on the same question and choose agents in one market as the reference for agents in the other using cross-entropy MSRs. Importantly, we must ensure that agents in one market cannot observe the predictions in the other(s). Agent $r$ can neither observe agent $t$'s report nor do the two agents share any signals; rather each agent only has access to their own private signal and previous predictions in their own market. In doing so, we trade the incentive to strategically mislead reference agents for an incentive to hedge towards the common prior in light of uncertainty about the reference agent's information and beliefs. Yet, if we numerically choose $k$ appropriately as in Theorem \ref{thm:exact} (or Remark \ref{rem:kproc}), agents cannot gain in expectation (or gain more than $\varepsilon$) by misreporting their posterior. 
Ultimately, there is no deep theoretical difference in incentives between our preferred design and parallel prediction markets; if the reference agent has enough informational substitutes, we will be able to guarantee that the incentive to deviate from truthful reporting vanishes. However, operating two different prediction markets is inefficient as there are two `prices' each aggregating only half the information; thus, we favour the design where the reference agent is chosen to be the terminal agent in the same market.

\paragraph{Strategies for Choosing Reference Agent} While we choose the terminal agent as the reference agent which requires agents to wait for market termination, there are several other options given a required $k$: (1) use a {\bf rolling window} of size $k$, so agents receive can receive their payouts relatively quickly without having to wait for the market to terminate (however, this has other downsides, as discussed next); (2) select a {\bf reference agent per batch}, i.e., split agents into batches that share the same reference agent, which has the benefit of quick payoffs for agents and infrequent payoff determinations for the mechanism designer. Additionally, the mechanism can (and probably should) average the predictions of some number of reference agents to decrease the variance in payoffs, especially in the presence of uninformed agents or `noise traders.' 

\paragraph{Market Termination Options}  Self-resolving prediction markets can be run indefinitely if reference agents are chosen periodically to payoff previous agents. Alternatively, if the mechanism designer has access to $k$ trusted agents, these can be placed as the final $k$ agents in the market before termination. Incentive-compatibility then follows from backward induction. In general, the most plausible approach is our preferred design of simply terminating the market with some small probability $\alpha = \frac1{T+k}$ (where $T$ is the desired average number of traders to elicit predictions from) and paying a flat fee to the last $k$ agents who do not have a reference agent as discussed above.

\paragraph{Cost to the Mechanism Designer} The cost to the mechanism designer is simply the sum of the payouts to each agent and the constant payouts to agents without reference agents. 
With a rolling window of size $k$ or batched agents, the worst case cost to mechanism designer is unbounded, since agents are being scored against changing reference predictions. However, using fixed reference agents (such as the final agent in the market) can ensure bounded loss, which motivates our preferred design.

\subsection{Trading Securities with a Cost Function-Based Automated Market Maker} \label{app:cfmm}

Here, we show that self-resolving prediction markets with fixed reference agents and cross-entropy market scoring rules can easily be implemented as a cost function-based market maker, allowing agents to \emph{trade} securities that payout upon expiration -- a more familiar design of prediction markets. By virtue of being built on the log-scoring rule, the automated market maker (AMM) for cross-entropy market scoring rule is nearly identical to the AMM for the log market scoring rule \citep{chen2012utility}. The only difference is that in standard prediction markets, contracts corresponding to the event that happened pay out $\$1$, while contracts corresponding to any event that did not happen expire worthless. In our self-resolving prediction markets, contracts pay out at the probability set by the reference agent, i.e., contracts on $Y=0$ pay out $r_0$ cents, and contracts on $Y=1$ pay out $r_1$ cents.

\paragraph{Cost function-based Market Maker} Consider an automated market maker with a quantity vector of contracts sold so far ${\bf c} = \begin{pmatrix} c_0 & c_1 \end{pmatrix}$ where $c_i$ corresponds to the number of contracts sold corresponding to outcome $Y_i$. Let the AMM have a cost-function $C$ that maps the quantity vector to the total cash raised through the sale (and purchase) of securities as $C({\bf c})$. When agent $t$ wants to buy or sell contracts, the going price $q_i$ for security $i$ is the instantaneous change in `cash raised' or the cost function, i.e., $q_i({\bf c}) = \nicefrac{\partial C}{\partial c_i}$. In buying or selling contracts, agent $t$ changes the quantity vector from ${\bf c}^{(t-1)}$ to ${\bf c}^{(t)}$, and pays in (or receives, if negative) $C({\bf c}^{(t)}) - C({\bf c}^{(t-1)}) = \int_{{\bf c}^{(t-1)}}^{{\bf c}^{(t)}} w({\bf c}) \,d{\bf c}$.  When the market terminates with reference prediction ${\bf r}$, $c_0$ contracts expire at $r_0$ cents and $c_1$ contracts expire at $r_1$ cents, so the AMM must pay out $c_0r_0 + c_1r_1$, having raised $C({\bf c})$. 

\paragraph{Equivalence to Cross-Entropy Market Scoring Rules} To establish an equivalence between our cross-entropy market scoring rules and a cost function-based automated market maker, we require that $S_\text{CEM}({\bf r}, {\bf q}^{(t)}, {\bf q}^{(t-1)}) = S_\text{CE}({\bf r}, {\bf q}^{(t)})  - S_\text{CE}({\bf r}, {\bf q}^{(t-1)}) = ({\bf c}^{(t)} - {\bf c}^{(t-1)})\cdot {\bf r} - \left(C({\bf c}^{(t)}) - C({\bf c}^{(t-1)}\right)$. Thus, we arrive at a similar system of three equations as \citet{chen2012utility}: (1) $S_{\text{CE}}({\bf r}, {\bf q}) = {\bf c}\cdot {\bf r} - C({\bf c})$, (2) $\sum_i q_i = 1$ (i.e., prices across outcomes sum to 1 to prevent arbitrage), and (3) $q_i = \nicefrac{\partial C}{\partial c_i}$. The second and third equations are exactly the same as the system of equations given by \citep{chen2012utility}, while the first is a slight modification that ensures the payout depends on the reference prediction. 

Observe that we can use the same cost-function as the log scoring rule, $C({\bf c}) = b \cdot \log \left(\sum_j e^{\nicefrac{c_j}{b}}\right)$ with price function $q_i = \frac{e^{\nicefrac{c_i}{b}}}{\sum_j e^{\nicefrac{c_j}{b}}}$ satisfying conditions (2) and (3) ($b$ is a liquidity parameter). Then, equation (1) is also satisfied:

\begin{equation}\small
    \begin{split}
         S_\text{CE}({\bf r}, {\bf q}^{(t)}) &= b \left( r_1 \log(q_1) + r_0 \log(q_0) \right) \\
         &= b\left( r_1 \log\left( \frac{e^{\nicefrac{c_1}{b}}}{e^{\nicefrac{c_0}{b}} + e^{\nicefrac{c_1}{b}}}\right) + r_0 \log\left( \frac{e^{\nicefrac{c_0}{b}}}{e^{\nicefrac{c_0}{b}} + e^{\nicefrac{c_1}{b}}}\right) \right) \\
         &= b\left( r_1 \log\left( {e^{\nicefrac{c_1}{b}}}\right) - r_1\log\left({e^{\nicefrac{c_0}{b}} + e^{\nicefrac{c_1}{b}}}\right) + r_0 \log\left( {e^{\nicefrac{c_0}{b}}}\right) - r_0\log\left({e^{\nicefrac{c_0}{b}} + e^{\nicefrac{c_1}{b}}}\right) \right) \\
         &= b\left( r_1 {\frac{c_1}{b}} + r_0 {\frac{c_0}{b}} - (r_1 + r_0)\log\left({e^{\nicefrac{c_0}{b}} + e^{\nicefrac{c_1}{b}}}\right)  \right) \\
         &= c_1r_1 + c_0 r_0 - b \cdot \log \left( e^{\nicefrac{c_0}{b}} + e^{\nicefrac{c_1}{b}} \right) \\
         &= {\bf c}\cdot {\bf r} - C({\bf c})
    \end{split}
\end{equation}

Thus, our self-resolving prediction market using a fixed reference agent where agents report probabilistic predictions and are paid based on cross-entropy market scoring rules can equivalently be implemented as agents trading contracts through a cost function-based market maker where contracts payout at prices determined by the reference agent's trade.

\section{Conclusion}

We introduced the first incentive-compatible design for self-resolving prediction markets that can efficiently elicit and aggregate information in the absence of ground truth. Our mechanism sequentially elicits predictions from agents, randomly terminates with probability $\alpha$, and pays all but the final $k$ agents based on the cross-entropy between their prediction and the terminal agent, with the final $k$ agents receiving a flat fee. We showed that truthful reporting is a strict perfect Bayesian equilibrium, and that the payout in uninformative equilibria is zero. Our key insight was that while a simple `peer prediction' scheme of paying agents based on agreement with a peer is not truthful, it becomes truthful when the reference agent has access to enough private conditionally independent signals (i.e., informational substitutes). 

Our analysis of self-resolving prediction markets (or equivalently, sequential peer prediction) opens up rich directions for future work. One important direction is to consider situations where our informational substitutes assumption does not hold, e.g., when agents have a cost to exerting effort or acquiring information. It may be that agent signals are not conditionally dependent given the outcome $Y$, and are perhaps only conditionally independent given the outcome, agent effort, and agent expertise. Thus, studying how agents can signal their effort or expertise in a simple way is an important direction for future work. It would also be interesting to explore incentives in the presence of risk aversion. Lastly and most importantly, empirically tests of our proposed mechanism would shed further light on the viability of our mechanism in aggregating agents' beliefs on both resolvable and unresolvable questions ranging from causal effects, to long-term forecasts about economic indicators, to data-collection tasks widely used in machine learning.

\newpage 

\bibliographystyle{apalike}
\bibliography{biblio}

\newpage 

\appendix
\section{Example of Incentive to Strategically Mislead under Cross-Entropy Scoring Rules} \label{sec:exinct}
As a concrete example, consider the following information structure. Suppose the market's initial probability for $Y=1$ is 0.5, and say agent 1 has a partially informative private signal, in that if they receive private signal $X^{(1)}=0$, their posterior over $Y=1$ is 0.49, and if they receive private signal $X^{(1)}=1$ their posterior over $Y=1$ is 0.99. Further suppose that the subsequent agent 2 only has a very weakly informative private signal, such that if they see $X^{(2)}=0$, they multiply the market odds by 0.99, and if they see $X^{(2)}=1$, they multiply the market odds by $1.01$. Now, let's say agent 1 receives $X^{(1)}=0$ and updates to a posterior of 0.49. If agent 1 reports their true posterior of 0.49 to the mechanism, agent 2's posterior is either 0.487 or 0.492 depending on whether their private signal is $X^{(2)} = 0$ or $X^{(2)} = 1$ respectively. This nets agent 1 a payoff of approximately $-0.693$ either way. Alternatively, if agent 1 \emph{misreports} their posterior to be 0.99 \emph{as if they had received the signal} $X=1$, agent 2 believing this to be truthful will report a posterior of about 0.99 as well, netting agent 1 a significantly higher reward of approximately $-0.055$. Clearly, at any supposed truthful equilibrium, agent 1 could have an incentive to deviate from truthful reporting.

\section{Visualizing Incentives} \label{app:viz}

Here, we visualize incentives to deviate based on the number of signals the reference agent has. This analysis does not consider the $\tau-$granularity of agents' signal spaces.

\subsection{Visualizing Incentives under Cross-Entropy Scoring Rules} \label{sec:vicesr}

We made three observations in Section \ref{sec:ppcesr} regarding the upper bound on the gain in deviating from truthful reporting $\mathcal{D}_\eta(\Delta, {\bf y)})$ from Equation \ref{eq:deta} in Theorem \ref{thm:bound}, which we enumerate and visualize here:

\begin{enumerate}
    \item \textbf{Upper bound on gain from deviating from truthful reporting $\mathcal{D}_\eta(\Delta, {\bf y})$ vanishes with the adjustment term $\Delta$}: As the adjustment term $\Delta \to 0$, the upper bound on the incentive to deviate from truthful reporting also goes to zero, i.e., $\mathcal{D}_\eta(\Delta, {\bf y}) \to 0$. We plot this upper bound $\mathcal{D}_\eta(\Delta, {\bf y})$ as a function of the adjustment term $\Delta$ for various choices of ${\bf y}$ and $\eta$ in Figure \ref{fig:mudel}. 
    
    One interesting observation is that the curve appears `cut off' for various values of $\Delta$. The reason for this is that $\Delta$, as defined in Lemma \ref{lem:post}, is an adjustment to a valid posterior ${\bf p}^{(t)}$ that must remain a valid probability distribution. This requirement essentially translates to a requirement that the denominator in Equation \ref{eq:deta}  be greater than zero (see proof of Theorem \ref{eq:imptheorem}), i.e., $\eta y_0 -  \Delta\cdot \left(\eta y_0 + ({1-\eta}){y_1}\right)$. Equivalently, we have $ \Delta <  \frac{\eta y_0}{\left(\eta y_0 + ({1-\eta}){y_1}\right)}$, a natural upper bound positive upper bound on $\Delta$. We similarly have a lower bound on $\Delta$ from the requirement that the numerator be greater than 0, i.e., $({1-\eta})  y_1 +  \Delta \cdot \left(\eta y_0 + ({1-\eta}) {y_1}\right) > 0 \Rightarrow \Delta > - \frac{({1-\eta})  y_1}{\left(\eta y_0 + ({1-\eta}) {y_1}\right)}$. Thus, when we defined ${\varepsilon'} = \min \{|\Delta|: \mathcal{D}_{\eta}(\Delta, {\bf y}) = \varepsilon\}$, we can observe that $\varepsilon'$ will be the positive-valued solution when $y_1 > \eta$ and the negative-valued solution when $y_1 < \eta$.

    \item \textbf{$\varepsilon'$, the maximum allowed magnitude of the adjustment term $\Delta$ to ensure that the gain in deviating from truthful reporting is no more than $\varepsilon$, falls gracefully with $\varepsilon$}: In Figure \ref{fig:maxdel}, we investigate what any desired upper bound on the incentive to deviate ($\varepsilon$) implies for the maximum allowed adjustment to the agent's true posterior, where $\varepsilon' = \min \{|\Delta|: \mathcal{D}_{\eta}(\Delta, {\bf y}) = \varepsilon\}$. Indeed, we find that an order of magnitude decrease in $\varepsilon$ corresponds roughly to an order of magnitude decrease in $\varepsilon'$. Additionally, $\varepsilon'$ dependence on the prior $y_1$ is relatively well-behaved across a broad range of $y_1$; however, note that at the edges (not visualized) where $y_1$ is very close to 1 or 0, the curve can turn quite steeply. Yet, if the prior is already so concentrated, in practical terms there is not much need to elicit forecasts.

\item \textbf{$k_{min}$, the minimum number of private informational substitutes that the reference agent must have that agent $t$ does not have, scales gracefully with $\varepsilon$}: In Figure \ref{fig:pk}, we plot the minimum number of informational substitutes that the reference agent must have (that agent $t$ does not) in order to guarantee that agent $t$'s maximum gain from deviating is no more than $\varepsilon$. We arrive at this for any desired $\varepsilon$ by first computing the upper bound $\varepsilon'$ on the adjustment term $\Delta$ as described above, and then using this to compute $k$ through Equation \ref{eq:kchoice} of Theorem \ref{eq:imptheorem}.  We find that the $k$ is significantly more sensitive to the setting of $\delta$ than the setting of $\eta$ or the choice of $\varepsilon$.

\end{enumerate}

\begin{figure}[h!]
    \centering
    \includegraphics[width=\textwidth]{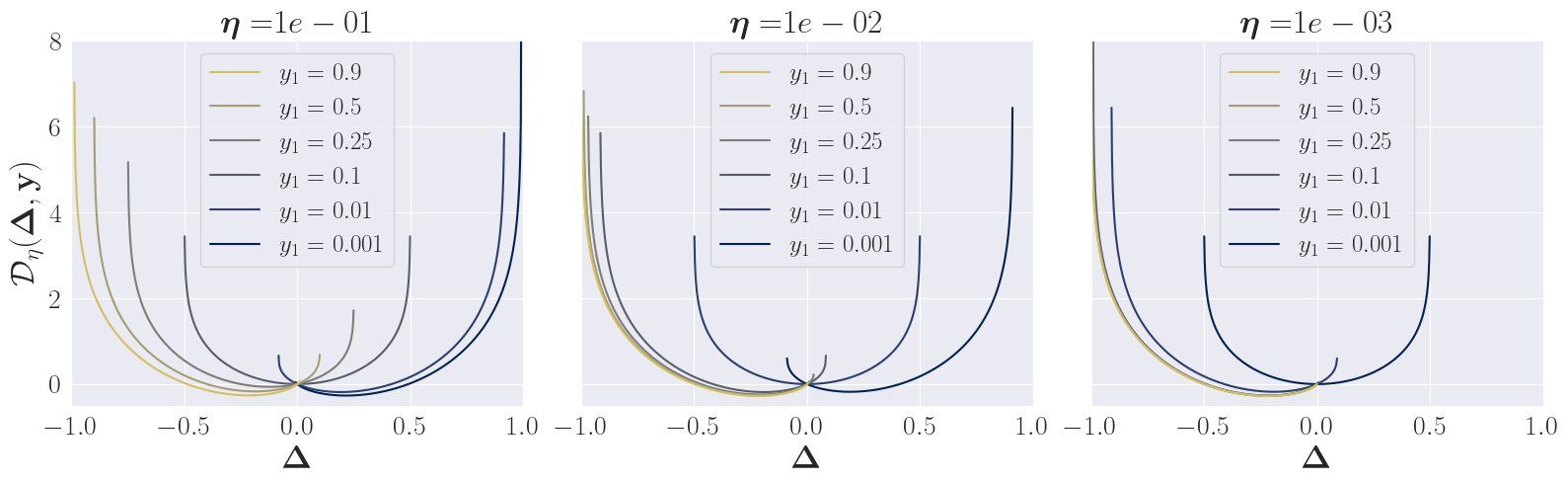}
    \caption{(Cross-Entropy Scoring Rules) Plot of the upper bound of gain in deviating from truthfulness as a function of the adjustment term $\Delta$. Observe that as the adjustment term $\Delta \to 0$, the upper bound $\mathcal{D}_\eta(\Delta, {\bf y}) \to 0$ as well, showing that the incentive to deviate can be made to disappear by sending the adjustment term to zero.}
    \label{fig:mudel}
\end{figure}

 \begin{figure}[htbp]
    \centering
    \includegraphics[width=\textwidth]{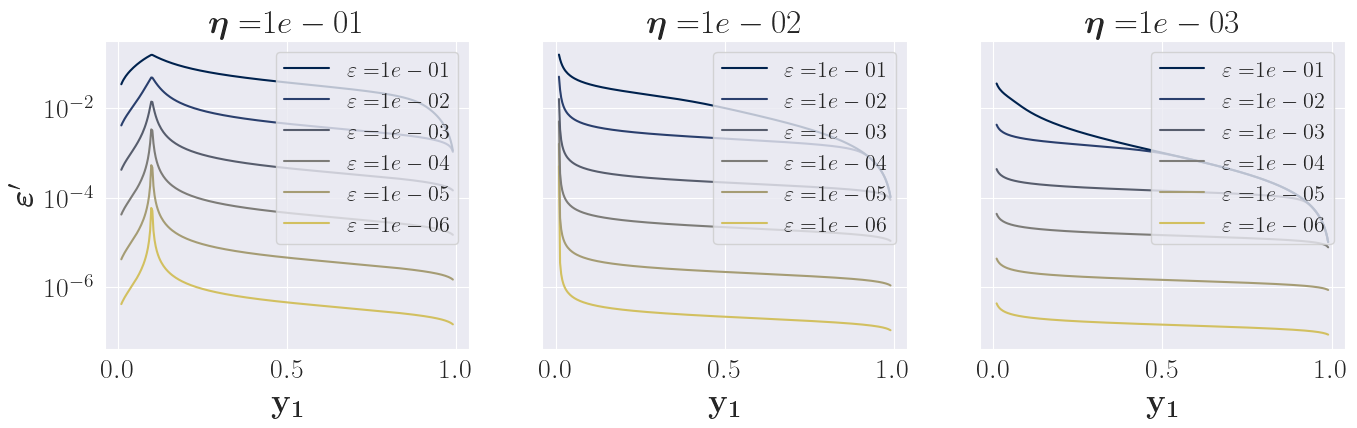}
    \caption{(Cross-Entropy Scoring Rules)  Plot of maximum allowed adjustment $\varepsilon'$ to guarantee that the incentive to deviate from truthful reporting is no more than $\varepsilon$ as a function of the prior on $Y=1$ as $y_1$, for various choices of $\varepsilon$.}
    \label{fig:maxdel}
\end{figure}

\begin{figure}[h!]
    \centering
    \includegraphics[width=\textwidth]{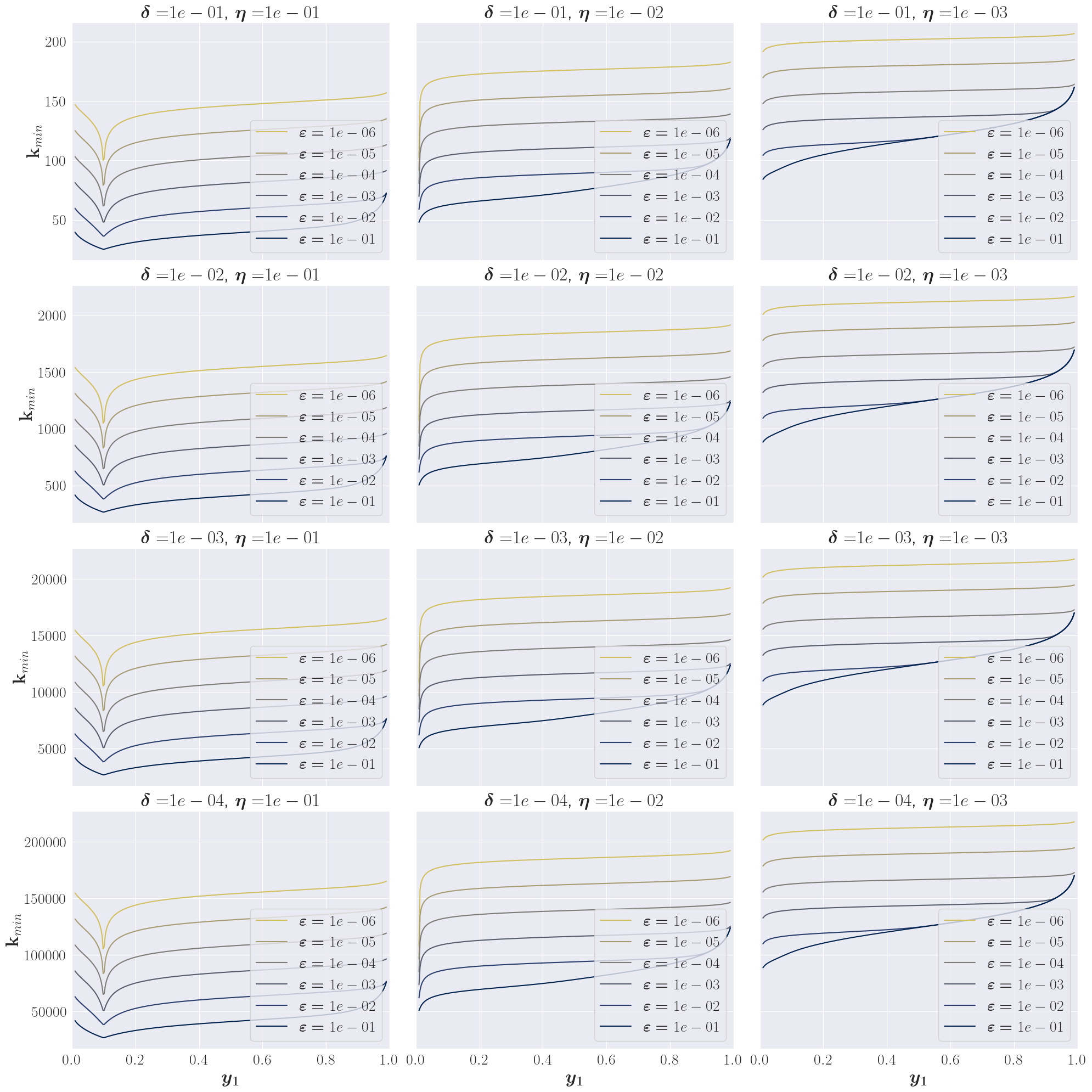}
    \caption{(Cross-Entropy Scoring Rules)  Plot of minimum number of information substitutes that the reference agent needs to guarantee that the incentive to deviate from truthful reporting is no more than $\varepsilon$ as a function of the prior $y_1$ under cross-entropy scoring rules, for different choices of $(\delta, \eta)$. }
    \label{fig:pk}
\end{figure}

\newpage

\subsection{Visualizing Incentives under Cross-Entropy Market Scoring Rules} 

Here, we carry out a similar visualization as above, but for cross-entropy \emph{market} scoring rules, where the upper bound on the incentive to deviate $\hat{\mathcal{D}}_\eta({\Delta}, {\bf y}^{(t-1)})$ is given by Equation \ref{eq:dhat} in Theorem \ref{thm:pmcemsr}.

\begin{enumerate}
    \item \textbf{Upper bound on gain from deviating from truthful reporting $\hat{\mathcal{D}}_\eta(\Delta, {\bf y}^{(t-1)})$ vanishes with the adjustment term $\Delta$}: As the adjustment term $\Delta \to 0$, the upper bound on the incentive to deviate from truthful reporting also goes to zero, i.e., $\hat{\mathcal{D}}_\eta(\Delta, {\bf y}^{(t-1)}) \to 0$. We plot this upper bound $\hat{\mathcal{D}}_\eta(\Delta, {\bf y}^{(t-1)})$ as a function of the adjustment term $\Delta$ for various choices of ${\bf y}$ and $\eta$ in Figure \ref{fig:mudel2}.      Once again, the curve appears `cut off' for various values of $\Delta$. As before, this stems from the fact that $\Delta$, as defined in Lemma \ref{lem:post}, is an adjustment to a valid posterior ${\bf p}^{(t)}$ that must remain a valid probability distribution. This requirement thus translates to a requirement that the numerator and denominator in Equation \ref{eq:dhat} be greater than zero (see proof of Theorem \ref{thm:pmcemsr}), i.e., $\eta -  \Delta\cdot \left(\eta + ({ 1-\eta}) \frac{y_1^{(t-1)}}{y_0^{(t-1)}}\right) > 0\Rightarrow \Delta <  \frac{\eta}{ \eta + ({ 1-\eta}) \frac{y_1^{(t-1)}}{y_0^{(t-1)}}}$, and $({1-\eta}) +  \Delta\cdot\left( \eta\frac{y_0^{(t-1)}}{y_1^{(t-1)}} + ({1-\eta})\right) > 0 \Rightarrow \Delta > -\frac{{1-\eta}}{   \eta\frac{y_0^{(t-1)}}{y_1^{(t-1)}} + ({1-\eta})}$. The curves in Figure \ref{fig:mudel2} get cut off as $\Delta$ approaches these bounds.

    \item \textbf{$\varepsilon'$, the maximum allowed magnitude of the adjustment term $\Delta$ to ensure that the gain in deviating from truthful reporting is no more than $\varepsilon$, falls gracefully with $\varepsilon$}: In Figure \ref{fig:maxdel2}, we investigate what any desired upper bound on the incentive to deviate ($\varepsilon$) implies for the maximum allowed adjustment to the agent's true posterior, where $\varepsilon' = \min \{|\Delta|: \hat{\mathcal{D}}_{\eta}(\Delta, {\bf y}^{(t-1)}) = \varepsilon\}$. Indeed, we find that an order of magnitude decrease in $\varepsilon$ corresponds roughly to an order of magnitude decrease in $\varepsilon'$. Additionally, $\varepsilon'$ dependence on the prior $y_1^{(t-1)}$ is relatively well-behaved across a broad range of $y_1^{(t-1)}$. Note that the $y$-axis in Figure 6 is on a log scale, so the variation in $y_1^{(t-1)}$ within the same order of magnitude is masked by the plot. Additionally, when the prior $y_1^{(t-1)}$ is very close 1 or 0, the curve can turn steeply; but if the prior is so strong,  there is not much practical need to elicit forecasts.

\item \textbf{$k_{min}$, the minimum number of private informational substitutes that the reference agent must have that agent $t$ does not have, scales gracefully with $\varepsilon$}: In Figure \ref{fig:pk2}, we plot the minimum number of informational substitutes that the reference agent must have (that agent $t$ does not) in order to guarantee that agent $t$'s maximum gain from deviating is no more than $\varepsilon$. We arrive at this for any desired $\varepsilon$ by first computing the upper bound $\varepsilon'$ on the adjustment term $\Delta$ as described above, and then using this to compute $k$ through Equation \ref{eq:kchoice} of Theorem \ref{eq:imptheorem}. We find that the setting of $\delta$ matters much more than $\eta$ or $\varepsilon$.

\end{enumerate}

\begin{figure}[h!]
    \centering
    \includegraphics[width=\textwidth]{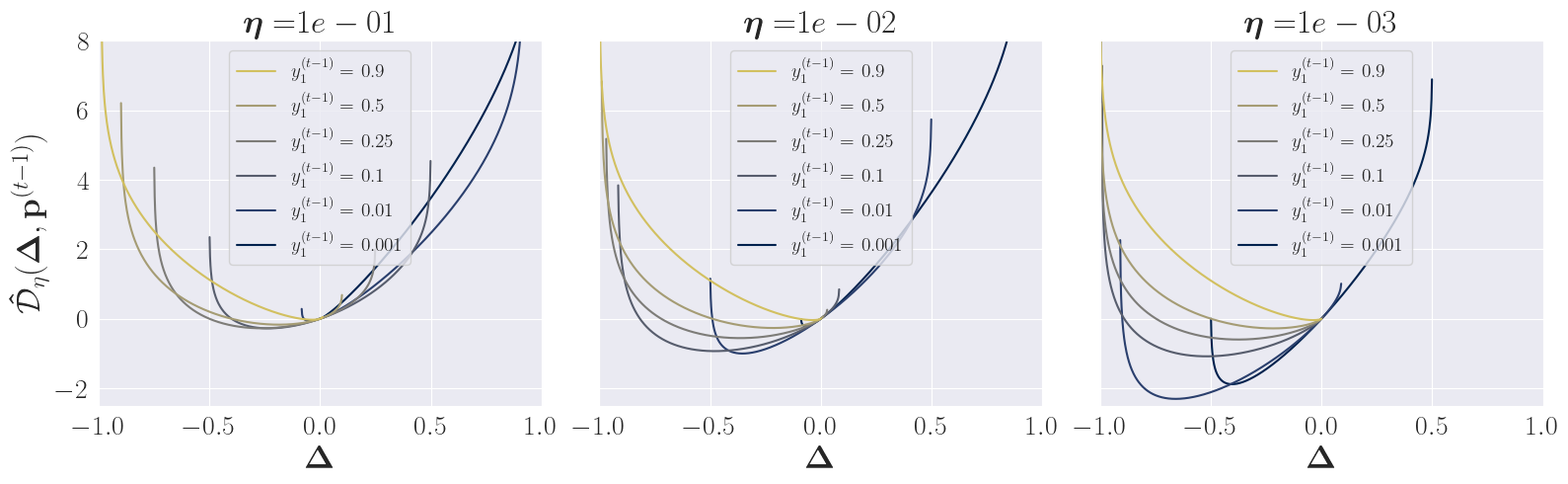}
    \caption{(Cross-Entropy Market Scoring Rules)  Plot of the upper bound of gain in deviating from truthfulness as a function of the adjustment term $\Delta$. Observe that as the adjustment term $\Delta \to 0$, the upper bound $\hat{\mathcal{D}}_\eta(\Delta, {\bf y}^{(t-1)}) \to 0$ as well, showing that the incentive to deviate can be made to disappear by sending the adjustment term to zero.}
    \label{fig:mudel2}
\end{figure}

\begin{figure}[h!]
    \centering
    \includegraphics[width=\textwidth]{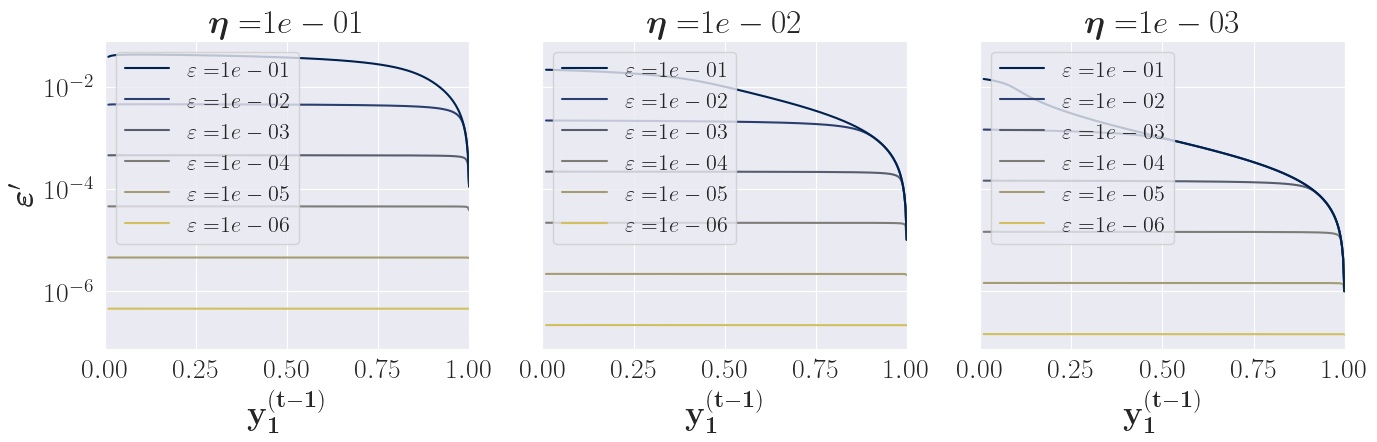}
    \caption{(Cross-Entropy Market Scoring Rules) Plot of maximum allowed adjustment $\varepsilon'$ to guarantee that the incentive to deviate from truthful reporting is no more than $\varepsilon$ as a function of the market prior on $Y=1$ as $y_1^{(t-1)}$, for various choices of $\varepsilon$. }\label{fig:maxdel2}
\end{figure}

\begin{figure}[t!]
    \centering
    \includegraphics[width=\textwidth]{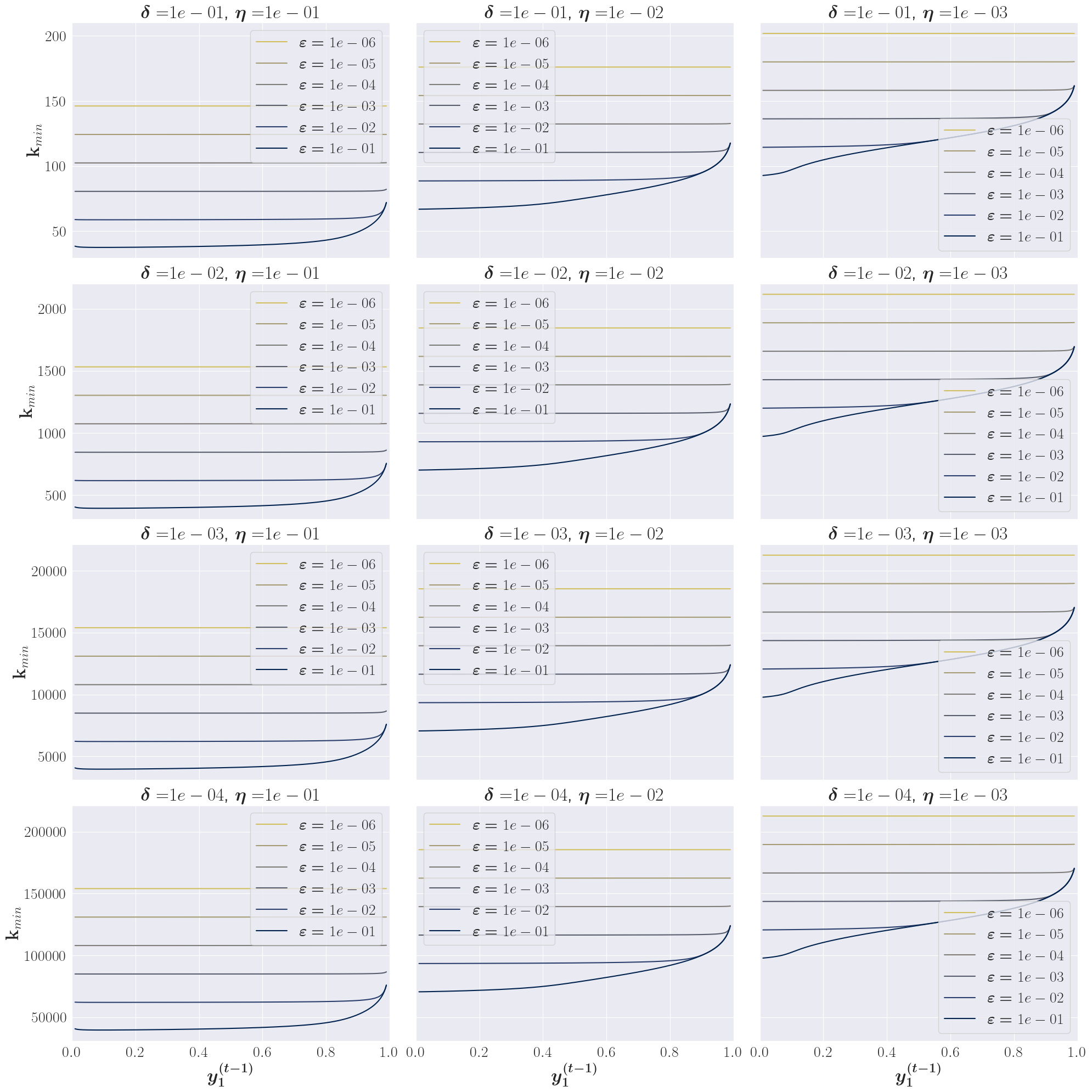}
    \caption{(Cross-Entropy Market Scoring Rules) Plot of minimum number of information substitutes that the reference agent needs to guarantee that the incentive to deviate from truthful reporting is no more than $\varepsilon$ as a function of the prior $y_1^{(t-1)}$ under cross-entropy MSR, for different choices of $(\delta, \eta)$.}
    \label{fig:pk2}
\end{figure}

\section{Other Equilibria} \label{app:equil}

\subsection{Uninformative Equilibria}

In the absence of ground truth, it is a fundamental challenge to ensure that an information elicitation mechanism can even elicit information at all. So-called uninformative equilibria -- where all agents report the same uninformative report regardless of their private signal -- has remained a challenge when designing information elicitation mechanisms without ground truth.

In our case, while we are not able to get rid of uninformative equilibria, we can essentially eliminate their appeal through the use of market scoring rules; this ensures agents' payouts are all zero, and thus less than the truthful equilibrium. The only other peer prediction mechanism to make this guarantee in the \emph{single-task} setting (i.e., without relying on correlations across multiple IID tasks) is the Differential Peer Prediction (DPP) mechanism by \citep{schoenebeck2020two}. Although their mechanism consists of two sequential stages, it does not efficiently aggregate information from all agents as we are able to do with a prediction market design. On the other hand, their result is not restricted to the case of informational substitutes.

\begin{theorem}
    In any self-resolving prediction market, the strategy profile where all agents report the same signal and any belief profile where agents believe subsequent agents will report the same uninformative signal (both on and off the equilibrium path) is a perfect Bayesian equilibrium. Additionally, the uninformative equilibrium pays zero reward to all agents (except the first) paid by the cross-entropy market scoring rule.
\end{theorem}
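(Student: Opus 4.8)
The plan is to verify the two defining conditions of a Perfect Bayesian Equilibrium---absence of profitable unilateral deviation, and consistency of beliefs with Bayes' rule wherever it applies---and then to compute the realized payoffs directly. Write ${\bf q}^*$ for the common uninformative report. The central structural fact I would establish first is that, under the prescribed strategy and belief profile, an agent's report has \emph{no influence whatsoever} on the reference prediction: every agent, including the terminal reference agent $r$, reports ${\bf q}^*$ regardless of their private signal and regardless of the history they observe. Hence from agent $t$'s perspective the reference report is the deterministic quantity ${\bf r} = {\bf q}^*$, and the preceding market report is ${\bf q}^{(t-1)} = {\bf q}^*$ for every $t \geq 2$.

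For the no-deviation condition, I would write agent $t$'s expected payoff for an arbitrary report ${\bf q}^{(t)}$ as $\mathbb{E}[S_{CEM}({\bf r}, {\bf q}^{(t)}, {\bf q}^{(t-1)})] = -H({\bf q}^*, {\bf q}^{(t)}) + H({\bf q}^*, {\bf q}^{(t-1)})$, using that the reference is the fixed ${\bf q}^*$. The second term is a constant that agent $t$ cannot affect, so maximizing the payoff reduces to maximizing $-H({\bf q}^*, {\bf q}^{(t)})$, which by strict properness of the log scoring rule is uniquely maximized at ${\bf q}^{(t)} = {\bf q}^*$. Thus reporting ${\bf q}^*$ is a best response, and no deviation---not even one designed to manipulate a downstream agent---can raise agent $t$'s payoff. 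This is precisely where the uninformative \emph{belief} profile does the work: because agents believe downstream reports are insensitive to history, the usual lever for a profitable deviation (nudging the reference agent toward a more certain posterior, as in Lemma \ref{lem:post}) is switched off, so the equilibrium is exact rather than merely $\varepsilon$-approximate.

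Next I would check belief consistency. On the equilibrium path every report equals ${\bf q}^*$ irrespective of the realized signal, so the reports carry no information about $Y$ and Bayes' rule places no constraint on posteriors; the prescribed beliefs are therefore vacuously consistent. Off the equilibrium path, after any unexpected report, the belief that all subsequent agents nonetheless report ${\bf q}^*$ is consistent with the signal- and history-independent strategy profile itself, which is exactly the admissibility requirement PBE imposes on off-path beliefs. This covers both the on-path and off-path contingencies named in the statement.

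Finally, for the payoff claim I would substitute ${\bf q}^{(t)} = {\bf q}^{(t-1)} = {\bf r} = {\bf q}^*$ into the cross-entropy market scoring rule to obtain $S_{CEM}({\bf q}^*, {\bf q}^*, {\bf q}^*) = -H({\bf q}^*, {\bf q}^*) + H({\bf q}^*, {\bf q}^*) = 0$ for every $t \geq 2$, noting that this is the \emph{realized} reward (not merely the expectation) since the reference report is deterministic. The first agent is the sole exception because its baseline is the market prior ${\bf q}^{(0)}$, generically distinct from ${\bf q}^*$, yielding the nonzero $H({\bf q}^*, {\bf q}^{(0)}) - H({\bf q}^*, {\bf q}^*)$. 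The one genuinely delicate step, and the one I would treat most carefully, is the off-path belief specification: I must confirm that fixing downstream behavior at ${\bf q}^*$ after a deviation is supported by the strategy profile rather than chosen arbitrarily to sustain the equilibrium. Because the strategies are signal- and history-independent by construction, this consistency holds automatically, and no on-path Bayes-rule constraint is violated.
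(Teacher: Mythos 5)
Your proof is correct and follows essentially the same route as the paper's: fix the other agents' signal- and history-independent play, observe that the CE-MSR payoff reduces to $-H({\bf q}^*, {\bf q}^{(t)})$ plus a constant so matching the reference report ${\bf q}^*$ is a best response at every information set regardless of beliefs about $Y$, then substitute to get zero realized payoff for all but the first agent, whose payoff is $H({\bf q}^*, {\bf q}^{(0)}) - H({\bf q}^*, {\bf q}^*) = \mathrm{KL}({\bf q}^* \,||\, {\bf q}^{(0)})$. Your treatment is in fact somewhat more careful than the paper's terse argument, since you explicitly verify off-path belief admissibility and note that the equilibrium is exact (not merely $\varepsilon$-approximate) because the uninformative belief profile disables the manipulation channel of Lemma \ref{lem:post}.
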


\begin{proof}
    For any agent $t$, suppose all previous agents $j<t$ reported prediction the same prediction that agent $t+k$ will make (i.e., ${\bf q}^{(j)} = {\bf q}^{(t+k)}$) regardless of what agent $t$ reports. Then, agent $t$'s payoff under cross-entropy market scoring rules is maximized when reporting ${\bf q}^{(t)} = {\bf q}^{(t+k)}$, and agent $t$ has no incentive to deviate from uninformative reporting. Since this holds for every agent $t$ who believes that subsequent agents report an uninformative prediction, regardless of agents' beliefs about $Y$ or past agents' reports, the uninformative strategy profile is a perfect Bayesian equilibrium. Even if a prior agent goes off the equilibrium path by reporting a different signal, as long as agents believe subsequent agents will continue to report the uninformative prediction, they will themselves be incentivized to report the uninformative prediction.

    Under cross-entropy market scoring rules with respect to reference agent $t+k$, every agent $t$'s payoff in the uninformative equilibrium is zero:
    \begin{equation}\small
    \begin{split}
        \mathbb{E}[S_{CEM}({\bf q}^{(t)}) | \hat{X}^{(t)}]
    &= -H\left({\bf q}^{(t+k)}, {\bf q}^{(t)} \right)  + H\left({\bf q}^{(t+k)}, {\bf q}^{(t-1)} \right) \\
    &= -H\left({\bf q}^{(t+k)}, {\bf q}^{(t+k)} \right)  + H\left({\bf q}^{(t+k)},{\bf q}^{(t+k)} \right) \\
    &= 0
    \end{split}
    \end{equation}

    The first agent receives a reward of $-H\left({\bf q}^{(1+k)}, {\bf q}^{(1+k)} \right)  + H\left({\bf q}^{(1+k)}, {\bf q}^{(0)} \right) = -H\left({\bf q}^{(1+k)} \right)  + H\left({\bf q}^{(1+k)}, {\bf q}^{(0)} \right) = \text{KL}({\bf q}^{(1+k)} || {\bf q}^{(0)} )$.
\end{proof}

\subsection{Switching Equilibria}

Here, we consider the most adversarial possible equilibrium, which we call 'switching equilibria.' The idea is that in a self-resolving prediction market with a rolling window, agents can alternate between extreme reports of $p_1^{(t)} = 0$ and $p_1^{(t-1)}=1$ to maximally modify the previous report while maximally agreeing with the reference agent's report. Doing so maximizes payoff under cross-entropy market scoring rules. We state this formally below:

\begin{theorem}
    In self-resolving prediction markets where reference agents are chosen with a `rolling window' of an even number of timesteps, the strategy profile where every alternate agent reports 1 minus the previous agent's prediction is a perfect Bayesian equilibrium. Additionally, every agent receives an infinite payoff in this equilibrium.
\end{theorem}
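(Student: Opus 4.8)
The plan is to exhibit the switching strategy profile explicitly, identify the belief profile that supports it, and then reduce each agent's on-path payoff to a single Kullback--Leibler divergence whose value is $+\infty$ at the prescribed report. First I would formalize the strategy: writing a report as ${\bf q}^{(t)} = (q_0^{(t)}, q_1^{(t)})$, the switching strategy sets $q_1^{(t)} = 1 - q_1^{(t-1)}$, so that on the equilibrium path reports alternate between the two degenerate distributions $(1,0)$ and $(0,1)$. The accompanying belief profile is that every agent believes all \emph{subsequent} agents will play the switching strategy (report the complement of the immediately preceding report), both on and off the equilibrium path. Because the strategy ignores private signals, beliefs about $Y$ never enter any payoff and can be set by Bayes' rule where defined and arbitrarily off path; the nontrivial content of the belief profile is thus entirely about subsequent agents' strategies.

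The key step is the even-window computation. Under the maintained belief that agents $t+1, \ldots, t+k$ switch, I would trace the alternation chain from whatever agent $t$ reports: $q_1^{(t+1)} = 1 - q_1^{(t)}$, $q_1^{(t+2)} = q_1^{(t)}$, and in general $q_1^{(t+2j)} = q_1^{(t)}$. Since the window $k$ is even, the reference agent's report satisfies ${\bf q}^{(t+k)} = {\bf q}^{(t)}$ --- the reference \emph{mirrors} agent $t$'s own report. Substituting ${\bf r} = {\bf q}^{(t+k)} = {\bf q}^{(t)}$ into the cross-entropy market scoring rule collapses agent $t$'s payoff to $\sum_i q_i^{(t)} \log\!\big(q_i^{(t)}/q_i^{(t-1)}\big) = \text{KL}\big({\bf q}^{(t)} \,\|\, {\bf q}^{(t-1)}\big)$, where ${\bf q}^{(t-1)}$ is the already-committed predecessor report.

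From here sequential rationality is immediate. On the equilibrium path ${\bf q}^{(t-1)}$ is degenerate, so any report placing positive mass on the coordinate where ${\bf q}^{(t-1)}$ is zero yields an infinite divergence (using the conventions $0\log 0 = 0$ on the finite coordinate and $c\log(c/0) = +\infty$ for $c>0$, which also rules out any $\infty - \infty$ indeterminacy). The switching report $1 - {\bf q}^{(t-1)}$ is exactly such a report, so it attains the supremum $+\infty$ and is a best response; no deviation can do strictly better than $+\infty$, so the gain from deviating is nonpositive. Because this argument uses only the belief that subsequent agents switch --- maintained at every information set --- sequential rationality holds on and off path, establishing the PBE. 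The same computation proves the infinite-payoff claim for every agent whose predecessor report is degenerate.

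I expect the main obstacle to be the correct handling of the \emph{endogeneity} of the reference report: unlike the truthful or uninformative analyses, here agent $t$'s own report propagates through the chain of later alternations to determine ${\bf q}^{(t+k)}$, so one must argue against deviations while accounting for how the reference moves in response, and the evenness of the window is precisely what makes this propagation return the reference to ${\bf q}^{(t)}$. A secondary care point is the bookkeeping of the infinite quantities --- verifying that the degenerate predecessor makes the improvement term genuinely $+\infty$ rather than an indeterminate form --- together with the boundary caveat that the very first agent, whose predecessor is the interior prior ${\bf q}^{(0)} = {\bf y}$, instead receives a finite $\text{KL}\big({\bf q}^{(1)} \,\|\, {\bf y}\big)$.
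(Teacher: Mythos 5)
Your core computation is correct, and on the key modeling question you take a genuinely different route from the paper. The paper's proof simply \emph{supposes} that the reference agent will report $p_1^{(t+k)} = 1$ regardless of any previous reports: with the reference fixed at $(0,1)$ and the predecessor committed to $(1,0)$, the improvement term $H\bigl({\bf r}, {\bf q}^{(t-1)}\bigr)$ is an (infinite) constant, agent $t$ just maximizes $-H\bigl({\bf r}, {\bf q}^{(t)}\bigr) = \log q_1^{(t)}$, and matching the fixed reference is a strict best response; the payoff computation $1\log 1 + 0 \log 0 - 1\log 0 - 0\log 1 = \infty$ then agrees with yours. You instead keep the reference \emph{endogenous}, as the literal strategy profile demands: a deviation by agent $t$ propagates down the alternation chain, the even window returns it to the reference so that ${\bf q}^{(t+k)} = {\bf q}^{(t)}$, and the payoff collapses to $\text{KL}\bigl({\bf q}^{(t)} \,\|\, {\bf q}^{(t-1)}\bigr)$. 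This is more faithful to the stated profile, at the cost that switching becomes only a \emph{weak} best response (every report with positive mass on the predecessor's zero coordinate ties at $+\infty$). Your first-agent caveat --- a finite payoff $\text{KL}\bigl({\bf q}^{(1)} \,\|\, {\bf y}\bigr)$ because the market prior is interior --- is a genuine refinement that the theorem statement glosses over (the paper makes the analogous exception only in its uninformative-equilibrium result).

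However, your off-path claim has a real gap. You assert sequential rationality holds off path because the argument \emph{uses only the belief that subsequent agents switch}, but it also uses the degeneracy of ${\bf q}^{(t-1)}$, which holds only on path. At an off-path information set where the predecessor deviated to an interior report, say $(0.3, 0.7)$, your history-dependent beliefs still make the reference mirror agent $t$, so agent $t$'s payoff is $\text{KL}\bigl({\bf q}^{(t)} \,\|\, (0.3,0.7)\bigr)$, finite for every report; the switching prescription $(0.7,0.3)$ yields roughly $0.34$, while the degenerate report $(1,0)$ yields $\log(1/0.3) \approx 1.2$. Switching is therefore not a best response there, and the profile fails perfection. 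The paper's exogenous-reference supposition is exactly what dodges this: it implicitly treats the strategy as parity-anchored and history-independent (each agent always reports the degenerate distribution assigned to their parity), so the reference never responds to deviations and matching it is strictly optimal at every information set --- with the caveat that ``report 1 minus the previous prediction'' then describes that strategy only on path. To repair your version, either adopt the parity-anchored formulation, or specify off-path behavior so that agents respond to a non-degenerate report as if it were the parity-appropriate degenerate one.
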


\begin{proof}
    For any agent $t$, suppose the previous agent reported $p_1^{(t-1)} = 0$ and the reference agent $r$ is an even number of timesteps away. Further suppose that the reference agent $r$ will report $p_1^{(t+k)} = 1$ regardless of any previous reports.  Then, agent $t$'s payoff under cross-entropy market scoring rules is maximized when reporting ${\bf q}^{(t)} = {\bf q}^{(t+k)} = \text{permute}({\bf q}^{(t-1)})$, and agent $t$ has no incentive to deviate from this strategy. Since this holds for every agent $t$, regardless of agents' beliefs or past agents' reports, the switching profile is a perfect Bayesian equilibrium.

    \begin{equation}\small
    \begin{split}
        \mathbb{E}[S_{CEM}({\bf q}^{(t)}) | \hat{X}^{(t)}]
    &= -H\left({\bf q}^{(t+k)}, {\bf q}^{(t)} \right)  + H\left({\bf q}^{(t+k)}, {\bf q}^{(t-1)} \right) \\
    &= 1\, \log(1) + 0 \,\log(0) - 1 \,\log(0) - 0\, \log(1) \\
    &= 0  + \infty \\
    &= \infty
    \end{split}
    \end{equation}
\end{proof}

An additional challenge highlighted by this result is that the loss to the automated market maker is unbounded since each agent may have a different reference agent in general. In order to get rid of this equilibrium, the mechanism could use a `fixed reference agent', withhold the choice of $k$ from the agents, randomize the reference agent, etc.

\subsection{Permutation Equilibria}

Lastly, we consider permutation equilibria, a mainstay of mechanisms for eliciting information without ground truth. The idea is that all agents could collectively re-label $Y=0$ as $Y=1$ and vice-versa, and the mechanism would function identically. This equilibrium is not a practical concern, since coordinating into such a permutation equilibrium is exceedingly unlikely, but we mention it for completeness.

\begin{theorem}
    In self-resolving prediction markets, the strategy profile where every agent permutes their true belief to report $q_1^{(1)} = p_0^{(t)}$ and $q_0^{(1)} = p_1^{(t)}$, and the belief profile where every agent believes every other agent has/will similarly permute their beliefs is an $\varepsilon-$perfect Bayesian equilibrium.
\end{theorem}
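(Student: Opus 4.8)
The plan is to exploit the label-symmetry of the entire mechanism and reduce the claim to Theorem \ref{thm:rectru}. Define the swap map $\pi : \Delta_{\Omega_Y} \to \Delta_{\Omega_Y}$ by $(\pi{\bf p})_0 = p_1$ and $(\pi{\bf p})_1 = p_0$; it is an involution ($\pi^{-1}=\pi$) and a bijection of the simplex onto itself. The prescribed strategy has every agent $t$ report $\pi({\bf p}^{(t)})$ in place of their true posterior, and the prescribed belief profile has every agent believe all others do the same, both on and off the equilibrium path. The goal is to show this profile inherits $\varepsilon$-optimality from the truthful profile via a payoff-preserving relabeling.

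First I would record the invariance of the cross-entropy market scoring rule under a consistent relabeling. Since $S_{CEM}({\bf r},{\bf q}^{(t)},{\bf q}^{(t-1)}) = \sum_i r_i \log(q_i^{(t)}/q_i^{(t-1)})$ is a sum over the two outcome labels, swapping the label index in all three arguments leaves the value unchanged, i.e.\ $S_{CEM}(\pi{\bf r},\pi{\bf q}^{(t)},\pi{\bf q}^{(t-1)}) = S_{CEM}({\bf r},{\bf q}^{(t)},{\bf q}^{(t-1)})$. Redistributing one swap then yields the \emph{transport identity} $S_{CEM}(\pi{\bf r},{\bf q}^{(t)},\pi{\bf q}^{(t-1)}) = S_{CEM}({\bf r},\pi{\bf q}^{(t)},{\bf q}^{(t-1)})$, obtained by matching the two terms $r_1\log(q_0^{(t)}/q_1^{(t-1)}) + r_0\log(q_1^{(t)}/q_0^{(t-1)})$ on each side. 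This says exactly that agent $t$'s raw payoff in the permuted world from reporting ${\bf q}^{(t)}$, against a permuted reference report $\pi{\bf r}$ and permuted predecessor report $\pi{\bf q}^{(t-1)}$, equals their payoff in the truthful world from reporting $\pi{\bf q}^{(t)}$ against the unpermuted ${\bf r}$ and ${\bf q}^{(t-1)}$.

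Next I would establish that the permuted belief system is Bayes-consistent and information-isomorphic to the truthful one. Because $\pi$ is a bijection and, under Assumption \ref{as:stoch}, a posterior uniquely identifies the underlying signal, an agent who believes everyone permutes can apply $\pi$ to each observed report and recover precisely the same signal sequence $x^{(1:t-1)}$ that the truthful protocol would recover. Consequently each agent forms the identical underlying posterior ${\bf p}^{(t)}$ (and the reference agent the identical $\tilde{\bf p}^{(r)}$) as in the truthful equilibrium, and merely reports its $\pi$-image. In particular, the conditional law of the reference agent's pre-permutation posterior over the randomness of the signals is unchanged, the predecessor report is $\pi({\bf p}^{(t-1)})$, and a deviating agent $t$'s out-of-character report ${\bf q}^{(t)}$ is read by the reference agent as the signal consistent with $\pi{\bf q}^{(t)}$ — exactly matching how the truthful-world reference agent would read the report $\pi{\bf q}^{(t)}$.

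Combining these, I would finish with a deviation-for-deviation comparison: taking expectations of the transport identity over the signals shows that agent $t$'s expected payoff from any deviation ${\bf q}^{(t)}$ in the permuted profile equals their expected payoff from deviation $\pi{\bf q}^{(t)}$ in the truthful profile. Since Theorem \ref{thm:rectru} guarantees no deviation in the truthful profile gains more than $\varepsilon$, and $\pi$ is a bijection of the report space, no deviation in the permuted profile gains more than $\varepsilon$ either; the best response is $\pi({\bf p}^{(t)})$, which is the prescribed strategy. Declaring the off-path beliefs to be the $\pi$-images of those in Theorem \ref{thm:rectru} completes the verification. The main obstacle is the third step: one must check carefully that the relabeled belief system is a genuine PBE belief system — that Bayesian updating, signal inversion, and the reference agent's handling of invalid reports all commute with $\pi$ — since the payoff algebra of the first and last steps is routine once that isomorphism is in hand.
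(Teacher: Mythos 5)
Your proof is correct, and at its core it is the same symmetry reduction to Theorem \ref{thm:rectru} that the paper uses, but the execution differs enough to be worth comparing. The paper's proof is a two-line renaming argument: define $\tilde{Y} = \neg Y$, observe that truthful reporting about $\tilde{Y}$ is an $\varepsilon$-PBE by Theorem \ref{thm:rectru}, and note that this profile \emph{is}, verbatim, the permutation profile for $Y$. That argument is far shorter than yours, but it silently requires that all of the paper's assumptions (common prior, stochastic relevance, conditional independence, $(\delta,\eta)$-informativeness) be invariant under relabeling the outcome --- true, since the Bhattacharyya coefficient is symmetric in the two conditional distributions and conditioning on $Y$ is the same as conditioning on $\neg Y$, but left unstated. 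Your proof instead keeps $Y$ fixed and permutes reports: the transport identity $S_{CEM}(\pi{\bf r}, {\bf q}^{(t)}, \pi{\bf q}^{(t-1)}) = S_{CEM}({\bf r}, \pi{\bf q}^{(t)}, {\bf q}^{(t-1)})$, the belief isomorphism via Assumption \ref{as:stoch}, and the deviation-for-deviation bijection together show that the permuted profile's incentive problem is isomorphic to the truthful one, so Theorem \ref{thm:rectru} is invoked only for the original problem under its original hypotheses. You pay for this with length, but you gain an argument that makes explicit exactly what must commute with $\pi$ (scoring, Bayesian updating, signal inversion, off-path belief handling) --- precisely the content that the paper's word ``equivalent'' compresses. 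Both routes are valid; yours is essentially the paper's proof expanded into a fully checked game isomorphism.
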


\begin{proof}
    Let us define $\tilde{Y} = \neg Y$. Then, the strategy profile where every agent reports their true belief over $\tilde{Y}$ and the belief profile where every agent believes every other agent reports truthfully (allowing for arbitrary updates if an agent reports an invalid posterior off the equilibrium path) is a $\varepsilon-$perfect Bayesian equilibrium by Theorem \ref{thm:rectru}. Yet, this is equivalent to the strategy and belief profiles with permuted reports and beliefs; hence, the permutation strategy and belief profile is also a $\varepsilon-$perfect Bayesian equilibrium.
\end{proof}

\section{Proofs} \label{app:proofs}

\paragraph{Lemma \ref{lem:post}.} \emph{ Agent $t$'s expectation of agent $r$'s true posterior ${\tilde{p}}_1^{(r)} = \mathbb{P}(Y=1|x^{(r)}, \tilde{x}^{(t)}, x^{(s)})$ is:}
    \begin{equation}\small
       \mathbb{E}[{\tilde{p}}_1^{(r)} | x^{(t)}, x^{(s)}]  =  \mathbb{P}(Y=1|x^{(t)}, x^{(s)})  + \Delta(\Omega_{X_r}, \tilde{x}^{(t)}, {x}^{(t)}, x^{(s)})
    \end{equation}

    \emph{where $\Delta(\Omega_{X_r}, \tilde{x}^{(t)}, {x}^{(t)}, x^{(s)}) = \mu(\Omega_{X_r}, \tilde{x}^{(t)}, x^{(s)}) \cdot \rho(\tilde{x}^{(t)}, x^{(t)}, x^{(s)})$, with }
    \begin{equation}\small
        \begin{split}
            \mu(\Omega_{X_r}, \tilde{x}^{(t)}, x^{(s)}) &= \sum_{x^{(r)} \in \Omega_{X_r}}  \frac{1 }{ \frac{1}{\mathbb{P}(x^{(r)}, Y=1|\tilde{x}^{(t)}, x^{(s)})} + \frac{1}{\mathbb{P}(x^{(r)}, Y=0|\tilde{x}^{(t)}, x^{(s)})}} \\
             \rho(\tilde{x}^{(t)}, x^{(t)}, x^{(s)}) &= \frac{\mathbb{P}(Y=1|\tilde{x}^{(t)}, x^{(s)}) - \mathbb{P}(Y=1|x^{(t)}, x^{(s)})}{\mathbb{P}(Y=0|\tilde{x}^{(t)}, x^{(s)})\cdot  \mathbb{P}(Y=1|\tilde{x}^{(t)}, x^{(s)})} 
        \end{split}
    \end{equation}

\begin{proof}
We denote agent $r$'s potentially misinformed but truthful report as $\tilde{\bf p}^{(r)} = \mathbb{P}(Y|x^{(r)}, \tilde{x}^{(t)}, x^{(s)})$. To be clear, although agent $t$ expects agent $r$ to update as if they they learned $X^{(t)} = \tilde{x}^{(t)}$, agent $t$'s true signal is $X^{(t)} = x^{(t)}$ and so takes the expectation of agent $r$'s information accordingly. Thus, agent $t$'s expectation of agent $r$'s posterior (for $Y=1$) given their signal is $x^{(t)}$ and they report $X^{(t)} = \tilde{x}^{(t)}$ is:

\begin{align}\small
    \mathbb{E}[{\tilde{p}}_1^{(r)} | x^{(t)}, x^{(s)}] &= \mathbb{E}_{Y|x^{(t)}, x^{(s)}}\left(\mathbb{E}[{\tilde{p}}_1^{(r)} | x^{(t)}, x^{(s)}, Y]\right)  \nonumber \\ 
    &= \mathbb{E}_{Y|x^{(t)}, x^{(s)}}\left(\mathbb{E}[{\tilde{p}}_1^{(r)} | Y ]\right) \tag{by Assumption \ref{as:is}} \\ 
    &= \mathbb{P}(Y=0|x^{(t)}, x^{(s)})\cdot \mathbb{E}[{\tilde{p}}_1^{(r)} | Y=0] + \mathbb{P}(Y=1|x^{(t)}, x^{(s)})\cdot \mathbb{E}[{\tilde{p}}_1^{(r)} | Y=1]  \label{eq:expfalsee}
\end{align}

We break down the computation of this expectation into three parts. First we write down the expression for agent $r$'s potentially misinformed posterior $\tilde{\bf p}^{(r)}$, then compute the conditional-on-$Y$ expectations, and subsequently compute Equation \ref{eq:expfalsee}.

\paragraph{Step 1: (Computing ${ \tilde{p}}_1^{(r)}$)} Agent $r$'s posterior ${ \tilde{\bf p}}^{(r)}$ will depend on all the information available to them, including information $x^{(r)}$ that agent $t$ does not have access to:

\begin{equation} \small
    {\tilde{p}}_1^{(r)} = \mathbb{P}(Y=1|x^{(r)} , \tilde{x}^{(t)}, x^{(s)}) 
    \label{eq:wordz_2}
\end{equation}

\paragraph{Step 2: (Computing $\mathbb{E}[{ \tilde{p}}_1^{(r)}|Y]$)} First, we compute the expectation of the misinformed posterior probability ${\tilde{p}}_1^{(r)} = \mathbb{P}(Y=1 | x^{(r)}, \tilde{x}^{(t)}, x^{(s)})$ when in reality $Y=1$:

\begin{equation}\small
\begin{split}
        \mathbb{E}[{\tilde{p}}_1^{(r)} | Y=1] &= \sum_{x^{(r)}} \mathbb{P}(x^{(r)} | Y=1) \cdot \mathbb{P}(Y=1 | x^{(r)}, \tilde{x}^{(t)}, x^{(s)}) \nonumber \\
        &= \sum_{x^{(r)}} \mathbb{P}(x^{(r)} | Y=1) \cdot \left(1- \mathbb{P}(Y=0 | x^{(r)}, \tilde{x}^{(t)}, x^{(s)})\right) \nonumber \\
        &= \sum_{x^{(r)}} \mathbb{P}(x^{(r)} | Y=1)  - \sum_{x^{(r)}} \mathbb{P}(x^{(r)} | Y=1) \cdot \frac{\mathbb{P}(x^{(r)} | Y=0) \mathbb{P}(\tilde{x}^{(t)}, x^{(s)}| Y=0) \mathbb{P}(Y=0)}{\mathbb{P}( x^{(r)}, \tilde{x}^{(t)}, x^{(s)})} \nonumber  \\
        &= 1  - \sum_{x^{(r)}} \frac{\mathbb{P}(x^{(r)} | Y=0) \mathbb{P}(Y=0 | \tilde{x}^{(t)}, x^{(s)}) \cdot \mathbb{P}(x^{(r)} | Y=1)}{\mathbb{P}( x^{(r)}| \tilde{x}^{(t)}, x^{(s)})} \nonumber  \\
        &= 1  - \frac1{\mathbb{P}(Y=1|\tilde{x}^{(t)}, x^{(s)})} \sum_{x^{(r)}}  \frac{\mathbb{P}(x^{(r)}, Y=0 | \tilde{x}^{(t)}, x^{(s)}) \cdot \mathbb{P}(x^{(r)} , Y=1|\tilde{x}^{(t)}, x^{(s)}) }{\mathbb{P}( x^{(r)}, Y=0| \tilde{x}^{(t)}, x^{(s)}) + \mathbb{P}( x^{(r)}, Y=1| \tilde{x}^{(t)}, x^{(s)})} \nonumber  \\
        &= 1  - \frac{\mu(\Omega_{X_r}, \tilde{x}^{(t)}, x^{(s)})}{\mathbb{P}(Y=1|\tilde{x}^{(t)}, x^{(s)})}   
    \end{split}
\end{equation}

where 
\begin{equation}\label{eq:Ltje}\small
\mu(\Omega_{X_r}, \tilde{x}^{(t)}, x^{(s)}) := \sum_{x^{(r)} \in \Omega_{X_r}}  \frac{1 }{ \frac{1}{\mathbb{P}(x^{(r)}, Y=1|\tilde{x}^{(t)}, x^{(s)})} + \frac{1}{\mathbb{P}(x^{(r)}, Y=0|\tilde{x}^{(t)}, x^{(s)})}}
\end{equation}

This is the sum of (half) harmonic means of the joint probability of the agent $r$'s private information $x^{(r)}$ under $Y=1$ and $Y=0$ given information $x^{(s)}$ and $X^{(t)} = \tilde{x}^{(t)}$, summed over all of agent $r$'s possible private signals $x^{(r)}$.
Similarly, we can compute the expectation of the misinformed posterior probability ${\tilde{p}}_1^{(r)} = \mathbb{P}(Y=1 |x^{(r)},  \tilde{x}^{(t)}, x^{(s)})$ when in reality $Y=0$:

\begin{align}\small
        \mathbb{E}[{\tilde{p}}_1^{(r)} | Y=0] &= \sum_{x^{(r)}} \mathbb{P}(x^{(r)} | Y=0) \cdot \mathbb{P}(Y=1 | x^{(r)}, \tilde{x}^{(t)}, x^{(s)}) \nonumber \\
        &=  \sum_{x^{(r)}} \mathbb{P}(x^{(r)} | Y=0) \cdot \frac{\mathbb{P}(x^{(r)} | Y=1) \mathbb{P}(\tilde{x}^{(t)}, x^{(s)}| Y=1) \mathbb{P}(Y=1)}{\mathbb{P}( x^{(r)}, \tilde{x}^{(t)}, x^{(s)})} \nonumber  \\
        &= \sum_{x^{(r)}} \frac{\mathbb{P}(x^{(r)} | Y=0)  \cdot \mathbb{P}(x^{(r)} | Y=1)\mathbb{P}(Y=1 | \tilde{x}^{(t)}, x^{(s)})}{\mathbb{P}( x^{(r)}| \tilde{x}^{(t)}, x^{(s)})} \nonumber  \\
        &= \frac1{\mathbb{P}(Y=0|\tilde{x}^{(t)}, x^{(s)})} \sum_{x^{(r)}}  \frac{\mathbb{P}(x^{(r)}, Y=0 | \tilde{x}^{(t)}, x^{(s)}) \cdot \mathbb{P}(x^{(r)} , Y=1|\tilde{x}^{(t)}, x^{(s)}) }{\mathbb{P}( x^{(r)}, Y=0| \tilde{x}^{(t)}, x^{(s)}) + \mathbb{P}( x^{(r)}, Y=1| \tilde{x}^{(t)}, x^{(s)})} \nonumber  \\
        &=  \frac{\mu(\Omega_{X_r}, \tilde{x}^{(t)}, x^{(s)})}{\mathbb{P}(Y=0|\tilde{x}^{(t)}, x^{(s)})} 
\end{align}

\paragraph{Step 3: (Computing $\mathbb{E}[{\tilde{p}}_1^{(r)} | x^{(t)}, x^{(s)}])$}  Finally, we can compute agent $t$'s expectation of agent $r$'s posterior when they report the posterior associated with $\tilde{x}^{(t)}$ using Equation \ref{eq:expfalsee}:

\begin{equation}\small
    \begin{split}
    \mathbb{E}[{\tilde{p}}_1^{(r)} | x^{(t)}, x^{(s)}] 
    &= \mathbb{P}(Y=0|x^{(t)}, x^{(s)})\cdot \mathbb{E}[{\tilde{p}}_1^{(r)} | Y=0]  + \mathbb{P}(Y=1|x^{(t)}, x^{(s)})\cdot \mathbb{E}[{\tilde{p}}_1^{(r)} | Y=1] \\
    &= \mathbb{P}(Y=0|x^{(t)}, x^{(s)}) \cdot \frac{\mu(\Omega_{X_r}, \tilde{x}^{(t)}, x^{(s)})}{\mathbb{P}(Y=0|\tilde{x}^{(t)}, x^{(s)})}  + \mathbb{P}(Y=1|x^{(t)}, x^{(s)}) \cdot \left(1 - \frac{\mu(\Omega_{X_r}, \tilde{x}^{(t)}, x^{(s)})}{\mathbb{P}(Y=1|\tilde{x}^{(t)}, x^{(s)})}  \right) \\ 
    &= \mathbb{P}(Y=1|x^{(t)}, x^{(s)})  + \mu(\Omega_{X_r}, \tilde{x}^{(t)}, x^{(s)})\left( \frac{\mathbb{P}(Y=0|x^{(t)}, x^{(s)})}{\mathbb{P}(Y=0|\tilde{x}^{(t)}, x^{(s)})}  - \frac{\mathbb{P}(Y=1|x^{(t)}, x^{(s)})}{\mathbb{P}(Y=1|\tilde{x}^{(t)}, x^{(s)})} \right) \\ 
    &= \mathbb{P}(Y=1|x^{(t)}, x^{(s)})  + \mu(\Omega_{X_r}, \tilde{x}^{(t)}, x^{(s)})\left( \frac{\mathbb{P}(Y=1|\tilde{x}^{(t)}, x^{(s)}) - \mathbb{P}(Y=1|x^{(t)}, x^{(s)})}{\mathbb{P}(Y=0|\tilde{x}^{(t)}, x^{(s)}) \cdot \mathbb{P}(Y=1|\tilde{x}^{(t)}, x^{(s)})}  \right) \\ 
    &= \mathbb{P}(Y=1|x^{(t)}, x^{(s)})  + \mu(\Omega_{X_r}, \tilde{x}^{(t)}, x^{(s)}) \cdot \rho(\tilde{x}^{(t)}, x^{(t)}, x^{(s)}) 
\end{split}
\end{equation}

where we defined:
\begin{equation} \small
    \rho(\tilde{x}^{(t)}, x^{(t)}, x^{(s)}) := \frac{\mathbb{P}(Y=1|\tilde{x}^{(t)}, x^{(s)}) - \mathbb{P}(Y=1|x^{(t)}, x^{(s)})}{\mathbb{P}(Y=0|\tilde{x}^{(t)}, x^{(s)})\cdot  \mathbb{P}(Y=1|\tilde{x}^{(t)}, x^{(s)})}  
\end{equation}

\end{proof}


\paragraph{Theorem \ref{eq:imptheorem}. }
    \emph{Suppose a reference agent $r$ can observe $k$ private signals $x^{(1)}, \ldots,  x^{(k)}$ that are informational substitutes, where $\Omega_{X^{(r)}} = \Omega_{X^{(1)}} \times \ldots \times \Omega_{X^{(k)}}$ and $x^{(j)} \in \Omega_{X^{(j)}}$, and agent $t$ cannot observe these signals.  Then, agent $t$'s adjustment $\Delta$ of their posterior to predict their reference agent's  true posterior is upper bounded in magnitude as $|\Delta(\Omega_{X_r}, \tilde{x}^{(t)}, x^{(t)}, x^{(s)}) |\leq \frac14 \left({{\frac{1-\eta}{\eta}}} - {{\frac{\eta}{1-\eta}}} \right) \left( 1-\delta \right)^k $.
    Consequently, we have an upper bound $|\Delta | \leq \varepsilon'$ for any $k$ where}
    \begin{equation}\small
    k  \geq \frac1{-\log \left( 1-\delta \right)}\log \left(\frac{1}{4\varepsilon'}\left({{\frac{1-\eta}{\eta}}} - {{\frac{\eta}{1-\eta}}} \right)\right)
    \end{equation}

\begin{proof}
    $\mu(\Omega_{X_r}, \tilde{x}^{(t)}, x^{(s)})$  (defined in Equation \ref{eq:Ltje})is non-negative (as a harmonic mean of probabilities) and we can upper bound it using the property that the harmonic mean is always less than the geometric mean:

\begin{equation}\small
\begin{split}
    \mu(\Omega_{X_r}, \tilde{x}^{(t)}, x^{(s)}) &= \sum_{x^{(r)} \in \Omega_{X_r}}  \frac{1 }{ \frac{1}{\mathbb{P}(x^{(r)}, Y=1|\tilde{x}^{(t)}, x^{(s)})} + \frac{1}{\mathbb{P}(x^{(r)}, Y=0|\tilde{x}^{(t)}, x^{(s)})}} \\
    &\leq \sum_{x^{(r)} \in \Omega_{X_r}} \frac12 \left(\mathbb{P}(x^{(r)}, Y=0|\tilde{x}^{(t)}, x^{(s)}) \cdot \mathbb{P}(x^{(r)}, Y=1|\tilde{x}^{(t)}, x^{(s)}) \right)^\frac12 \\
    &= \sum_{x^{(r)} \in \Omega_{X_r}} \frac12 \left[ \mathbb{P}(x^{(r)}| Y=0)\mathbb{P}(Y=0|\tilde{x}^{(t)}, x^{(s)})\right]^\frac12\left[ \mathbb{P}(x^{(r)}| Y=1)\mathbb{P}(Y=1|\tilde{x}^{(t)}, x^{(s)})\right]^\frac12 
 \\
 &= \frac12 \left(\mathbb{P}(Y=0|\tilde{x}^{(t)}, x^{(s)}) \cdot \mathbb{P}(Y=1|\tilde{x}^{(t)}, x^{(s)})\right)^\frac12 \sum_{x^{(r)} \in \Omega_{X_r}} \left(\prod_{i=1}^k \left( \mathbb{P}(x^{(r_i)}| Y=0) \cdot  \mathbb{P}(x^{(r_i)}| Y=1)\right)^\frac12 \right)
 \\
 &= \frac12 \left(\mathbb{P}(Y=0|\tilde{x}^{(t)}, x^{(s)}) \cdot \mathbb{P}(Y=1|\tilde{x}^{(t)}, x^{(s)})\right)^\frac12 \prod_{i=1}^k \left(\sum_{j=1}^N \left( \mathbb{P}(x^{(r_i)}_j| Y=0) \cdot  \mathbb{P}(x^{(r_i)}_j| Y=1)\right)^\frac12 \right)
 \\
 &= \frac12 \left(\mathbb{P}(Y=0|\tilde{x}^{(t)}, x^{(s)}) \cdot \mathbb{P}(Y=1|\tilde{x}^{(t)}, x^{(s)})\right)^\frac12 \prod_{i=1}^k \left( 1-\delta^{(r_i)} \right)
 \\
 &\leq \frac12 \left(\mathbb{P}(Y=0|\tilde{x}^{(t)}, x^{(s)}) \cdot \mathbb{P}(Y=1|\tilde{x}^{(t)}, x^{(s)})\right)^\frac12  \left( 1-\delta \right)^k 
\end{split}\label{eq:Lup}
\end{equation}

where $1-\delta^{(r_i)}$ is the Bhattacharyya coefficient for signal $x^{(r_i)}$ (Definition \ref{def:bhat}). Thus, we have:

\begin{equation}\small
    |\Delta| \leq \frac12 \left( 1-\delta \right)^k \frac{\mathbb{P}(Y=1|\tilde{x}^{(t)}, x^{(s)}) - \mathbb{P}(Y=1|x^{(t)}, x^{(s)})}{\left(\mathbb{P}(Y=0|\tilde{x}^{(t)}, x^{(s)})\cdot  \mathbb{P}(Y=1|\tilde{x}^{(t)}, x^{(s)})\right)^\frac12}
\end{equation}

Now, observe that we can write the posterior as:

\begin{equation} \small
\begin{split}
    \mathbb{P}(Y=1|\tilde{x}^{(t)}, x^{(s)}) 
    &= \frac{\mathbb{P}(\tilde{x}^{(t)}|Y=1) \mathbb{P}(Y=1|x^{(s)})\mathbb{P}(x^{(s)})}{\mathbb{P}(\tilde{x}^{(t)}|Y=0) \mathbb{P}(Y=0|x^{(s)})\mathbb{P}(x^{(s)}) + \mathbb{P}(\tilde{x}^{(t)}|Y=1) \mathbb{P}(Y=1|x^{(s)})\mathbb{P}(x^{(s)})}\\
    &= \frac{\mathbb{P}(\tilde{x}^{(t)}|Y=1) \mathbb{P}(Y=1|x^{(s)})}{\mathbb{P}(\tilde{x}^{(t)}|x^{(s)})} 
\end{split}
\end{equation}

Then, we can manipulate the term in the upper bound as:

\begin{equation}\tiny
\begin{split}
\frac{\mathbb{P}(Y=1|\tilde{x}^{(t)}, x^{(s)}) - \mathbb{P}(Y=1|x^{(t)}, x^{(s)})}{\left(\mathbb{P}(Y=0|\tilde{x}^{(t)}, x^{(s)})\cdot  \mathbb{P}(Y=1|\tilde{x}^{(t)}, x^{(s)})\right)^\frac12} &= \sqrt{\frac{\mathbb{P}(Y=1|\tilde{x}^{(t)}, x^{(s)})}{\mathbb{P}(Y=0|\tilde{x}^{(t)}, x^{(s)})}} - \frac{\mathbb{P}(Y=1|{x}^{(t)}, x^{(s)})}{\sqrt{\mathbb{P}(Y=1|\tilde{x}^{(t)}, x^{(s)})\mathbb{P}(Y=0|\tilde{x}^{(t)}, x^{(s)})}} \\
&= \sqrt{\frac{\mathbb{P}(\tilde{x}^{(t)} | Y=1)\mathbb{P}(Y=1| x^{(s)})}{\mathbb{P}(\tilde{x}^{(t)} | Y=0)\mathbb{P}(Y=0| x^{(s)})}} \\
&\phantom{=} -\frac{\mathbb{P}(\tilde{x}^{(t)}|x^{(s)})}{\mathbb{P}({x}^{(t)}|x^{(s)})}\frac{\mathbb{P}({x}^{(t)}|Y=1) \sqrt{\mathbb{P}(Y=1|x^{(s)})}}{\sqrt{\mathbb{P}(\tilde{x}^{(t)}|Y=1) \mathbb{P}(\tilde{x}^{(t)}|Y=0) \mathbb{P}(Y=0|x^{(s)})}} \\
&= \frac{\sqrt{\mathbb{P}(Y=1| x^{(s)})}}{\mathbb{P}({x}^{(t)}|x^{(s)})\sqrt{\mathbb{P}(\tilde{x}^{(t)} | Y=0)\mathbb{P}(Y=0|x^{(s)})}} \\
&\phantom{=} \cdot \left(\frac{\mathbb{P}({x}^{(t)}|x^{(s)}) {\mathbb{P}(\tilde{x}^{(t)} | Y=1)} - \mathbb{P}(\tilde{x}^{(t)}|x^{(s)})\mathbb{P}({x}^{(t)}|Y=1) }{\sqrt{\mathbb{P}(\tilde{x}^{(t)}|Y=1)  }} \right) \\
&= \frac{\sqrt{\mathbb{P}(Y=1| x^{(s)})\mathbb{P}(Y=0|x^{(s)})}}{\mathbb{P}({x}^{(t)}|x^{(s)})\sqrt{\mathbb{P}(\tilde{x}^{(t)} | Y=0)}} \\
&\phantom{=} \cdot \left(\frac{\mathbb{P}(\tilde{x}^{(t)} | Y=1)\mathbb{P}({x}^{(t)} | Y=0) - \mathbb{P}(\tilde{x}^{(t)} | Y=0)\mathbb{P}({x}^{(t)} | Y=1) }{\sqrt{\mathbb{P}(\tilde{x}^{(t)}|Y=1)  }} \right) \\
&= \frac{\sqrt{\mathbb{P}(Y=1| x^{(s)})\mathbb{P}(Y=0|x^{(s)})}}{\mathbb{P}({x}^{(t)}|x^{(s)})} \\
&\phantom{=} \cdot \left(\frac{\mathbb{P}(\tilde{x}^{(t)} | Y=1)\mathbb{P}({x}^{(t)} | Y=0)}{\sqrt{\mathbb{P}(\tilde{x}^{(t)}|Y=1)\mathbb{P}(\tilde{x}^{(t)} | Y=0)}} - \frac{\mathbb{P}(\tilde{x}^{(t)} | Y=0)\mathbb{P}({x}^{(t)} | Y=1) }{\sqrt{\mathbb{P}(\tilde{x}^{(t)}|Y=1)\mathbb{P}(\tilde{x}^{(t)} | Y=0)}} \right) \\
&= \frac{\sqrt{\mathbb{P}(Y=1| x^{(s)})\mathbb{P}(Y=0|x^{(s)})}}{\mathbb{P}({x}^{(t)}|x^{(s)})} \\
&\phantom{=} \cdot \left({\mathbb{P}({x}^{(t)} | Y=0)}{\sqrt{\frac{\mathbb{P}(\tilde{x}^{(t)}|Y=1)}{\mathbb{P}(\tilde{x}^{(t)} | Y=0)}}} - \mathbb{P}({x}^{(t)} | Y=1) {\sqrt{\frac{\mathbb{P}(\tilde{x}^{(t)} | Y=0)}{\mathbb{P}(\tilde{x}^{(t)}|Y=1)}}} \right) \\
&\leq \frac{\sqrt{\mathbb{P}(Y=1| x^{(s)})\mathbb{P}(Y=0|x^{(s)})}}{\mathbb{P}({x}^{(t)}|Y=0) \mathbb{P}(Y=0|x^{(s)}) + \mathbb{P}({x}^{(t)}|Y=1) \mathbb{P}(Y=1|x^{(s)})} \\
&\phantom{=} \cdot \left({\mathbb{P}({x}^{(t)} | Y=0)}{\sqrt{\frac{1-\eta}{\eta}}} - \mathbb{P}({x}^{(t)} | Y=1) {\sqrt{\frac{\eta}{1-\eta}}} \right) \\
&\leq \frac{\sqrt{\frac{\mathbb{P}({x}^{(t)}|Y=0)\mathbb{P}({x}^{(t)}|Y=1)}{(\mathbb{P}({x}^{(t)}|Y=0) + \mathbb{P}({x}^{(t)}|Y=1))^2}}}{ \frac{\mathbb{P}({x}^{(t)}|Y=0)\mathbb{P}({x}^{(t)}|Y=1)}{\mathbb{P}({x}^{(t)}|Y=0) + \mathbb{P}({x}^{(t)}|Y=1)} +  \frac{\mathbb{P}({x}^{(t)}|Y=1)\mathbb{P}({x}^{(t)}|Y=0)}{\mathbb{P}({x}^{(t)}|Y=0) + \mathbb{P}({x}^{(t)}|Y=1)}} \\
&\phantom{=} \cdot \left({\mathbb{P}({x}^{(t)} | Y=0)}{\sqrt{\frac{1-\eta}{\eta}}} - \mathbb{P}({x}^{(t)} | Y=1) {\sqrt{\frac{\eta}{1-\eta}}} \right) \\
&= \frac{1}{ 2 \cdot \sqrt{\mathbb{P}({x}^{(t)}|Y=0)\mathbb{P}({x}^{(t)}|Y=1)} }  \\
&\phantom{=} \cdot \left({\mathbb{P}({x}^{(t)} | Y=0)}{\sqrt{\frac{1-\eta}{\eta}}} - \mathbb{P}({x}^{(t)} | Y=1) {\sqrt{\frac{\eta}{1-\eta}}} \right) \\
&= \frac{1}{ 2} \left(\sqrt{\frac{\mathbb{P}({x}^{(t)}|Y=0)}{\mathbb{P}({x}^{(t)}|Y=1)}}{\sqrt{\frac{1-\eta}{\eta}}} - \sqrt{\frac{\mathbb{P}({x}^{(t)}|Y=1)}{\mathbb{P}({x}^{(t)}|Y=0)}} {\sqrt{\frac{\eta}{1-\eta}}} \right) \\
&\leq \frac{1}{ 2} \left({\sqrt{\frac{1-\eta}{\eta}}\sqrt{\frac{1-\eta}{\eta}}} - \sqrt{\frac{\eta}{1-\eta}}{\sqrt{\frac{\eta}{1-\eta}}} \right) \\
&= \frac{1}{ 2} \left({{\frac{1-\eta}{\eta}}} - {{\frac{\eta}{1-\eta}}} \right) 
\end{split}
\end{equation}

By a similar procedure, we can establish that $- \frac{\mathbb{P}(Y=1|\tilde{x}^{(t)}, x^{(s)}) - \mathbb{P}(Y=1|x^{(t)}, x^{(s)})}{\left(\mathbb{P}(Y=0|\tilde{x}^{(t)}, x^{(s)})\cdot  \mathbb{P}(Y=1|\tilde{x}^{(t)}, x^{(s)})\right)^\frac12} \leq \frac{1}{ 2} \left({{\frac{1-\eta}{\eta}}} - {{\frac{\eta}{1-\eta}}} \right)$. Thus, we have that:

\begin{equation}\small
\begin{split}
    \left|\mu(\Omega_{X_r}, \tilde{x}^{(t)}, x^{(s)}) \cdot \rho(\tilde{x}^{(t)}, {x}^{(t)}, x^{(s)}) \right| &\leq \frac14 \left({{\frac{1-\eta}{\eta}}} - {{\frac{\eta}{1-\eta}}} \right) \left( 1-\delta \right)^k
\end{split} \label{eq:bound1}
\end{equation}

Thus, we can upper bound Equation \ref{eq:bound1} with any desired upper bound $\varepsilon$ and invert to find $k$ such that $|\mu(\Omega_{X_r}, \tilde{x}^{(t)}, x^{(s)})\cdot \rho(\tilde{x}^{(t)}, {x}^{(t)}, x^{(s)})| \leq \varepsilon$:

\begin{equation} \small
      k  \geq \frac1{-\log \left( 1-\delta \right)}\log \left(\frac{1}{4\varepsilon}\left({{\frac{1-\eta}{\eta}}} - {{\frac{\eta}{1-\eta}}} \right)\right)
\end{equation}

\end{proof}


\paragraph{Theorem \ref{thm:exact}.}
    \emph{Suppose agent $t$ is paid based on a cross-entropy scoring rule with respect to agent $r$ who reports truthfully.  Then, agent $t$ strictly maximizes their expected payoff by reporting truthfully if the reference agent has at least $k$ signals where}
    \begin{equation}
        k  > \frac{1}{-\log \left( 1-\delta \right)}\log\left(\frac{\left|\log \left(\frac{1-\eta}{\eta}\right)\right|\left({{\frac{1-\eta}{\eta}}} - {{\frac{\eta}{1-\eta}}} \right)}{8 \left(\tau\eta(1-\eta)\right)^2} \right)
    \end{equation}

\begin{proof}
We proceed as follows:

\textbf{Step 1: Condition for strict truthfulness} In order for truthful reporting to pay more than any strategic misreport, we consider the difference in expected scores when reporting the truth ${\bf p}^{(t)}$ relative to reporting some other potentially better report ${\bf q}^{(t)}$ (that is consistent with signal $\tilde{x}^{(t)}$:
\begin{equation}
\begin{split}
     \mathbb{E}[S_{CE}({\bf q}^{(t)}) | \hat{X}^{(t)}] - \mathbb{E}[S_{CE}({\bf p}^{(t)}) | \hat{X}^{(t)}]
     &= -H\left(\mathbb{E} \left[ \tilde{\bf p}^{(r)} \left.\right |\hat{X}^{(t)} \right], {\bf q}^{(t)} \right) + H\left( {\bf p}^{(t)} \right) \\
     &= -H\left({\bf p}^{(t)} + \Delta, {\bf q}^{(t)} \right) + H\left( {\bf p}^{(t)} \right) \\
     &= ({p_1}^{(t)} + \Delta) \log \left( {{q_1}^{(t)}} \right) + (1-{p_1}^{(t)} - \Delta) \log \left( {1-{q_1}^{(t)}} \right) \\
     &\phantom{=} - {p_1}^{(t)} \log \left( {{p_1}^{(t)} } \right) - (1-{p_1}^{(t)} ) \log \left( {1-{p_1}^{(t)} } \right) \\
     &= ({p_1}^{(t)} + \Delta) \log \left( {{q_1}^{(t)}} \right) + (1-{p_1}^{(t)} - \Delta) \log \left( {1-{q_1}^{(t)}} \right) \\
     &\phantom{=} - {p_1}^{(t)} \log \left( {{p_1}^{(t)} } \right) - (1-{p_1}^{(t)} ) \log \left( {1-{p_1}^{(t)} } \right) \\
     &= \Delta \log \left(\frac{q_1^{(t)}}{1-q_1^{(t)}}\right) - \text{KL}({\bf p}^{(t)} \| {\bf q}^{(t)})
\end{split}
\end{equation}
Thus, truthful reporting yields higher utility over reporting ${\bf q}^{(t)}$ when:
\begin{equation}
    \Delta \log \left(\frac{q_1^{(t)}}{1-q_1^{(t)}}\right)  - \text{KL}({\bf p}^{(t)} \| {\bf q}^{(t)}) < 0
\end{equation}

The KL term is always positive. We see that when $\frac12 < q_1^{(t)} < p_1^{(t)}$, we have $\log \left(\frac{q_1^{(t)}}{1-q_1^{(t)}}\right) > 0$ and $\Delta < 0$ (since $\rho < 0$), so the condition is satisfied. Similarly, when $\frac12 > q_1^{(t)} > p_1^{(t)} $, we have $\log \left(\frac{q_1^{(t)}}{1-q_1^{(t)}}\right) < 0$ and $\Delta > 0$, so again the condition is satisfied. Thus, the only way agent $t$ can extract greater reward is by reporting more \emph{extreme} beliefs: when $\frac12 \leq p_1^{(t)} < q_1^{(t)}$ or $\frac12 \geq p_1^{(t)} > q_1^{(t)}$, leveraging the idea that approximate agreement at the extremes is higher payoff than exact agreement in the middle given the convexity of cross-entropy.

Thus, truthful reporting yields higher utility over reporting ${\bf q}^{(t)}$ when:
\begin{equation}
    |\Delta| < \frac{\text{KL}({\bf p}^{(t)} \| {\bf q}^{(t)})}{\left|\log \left(\frac{q_1^{(t)}}{1-q_1^{(t)}}\right)\right|}
\end{equation}

\textbf{Step 2: Lower bound for KL Divergence} Next, observe that by Pinsker's inequality, we have:

\begin{equation}
    \begin{split}
        \text{KL}({\bf p}^{(t)} \| {\bf q}^{(t)}) \geq  2 \left|p_1^{(t)} - q_1^{(t)}\right|^2
    \end{split}
\end{equation}

Writing $\lambda_p = \log\left(\frac{p_1^{(t)}}{1-p_1^{(t)}}\right)$, $\lambda_q = \log\left(\frac{q_1^{(t)}}{1-q_1^{(t)}}\right)$, and $\phi(\lambda) = \log(1 + \exp\left(\lambda\right))$, we have that $p_1^{(t)} = \phi'(\lambda_p)$ and $q_1^{(t)} = \phi'(\lambda_q)$. Thus, we can re-write the KL bound as:

\begin{equation}
    \begin{split}
        \text{KL}({\bf p}^{(t)} \| {\bf q}^{(t)}) \geq  2 \left|\phi'(\lambda_p) - \phi'(\lambda_q)\right|^2
    \end{split}
\end{equation}

By the mean value theorem, we have that $\left|\phi'(\lambda_p) - \phi'(\lambda_q)\right| = \phi''(\tilde{\lambda})\left|\lambda_p - \lambda_q\right|$ for some $\tilde{\lambda}$ in the interval between $\lambda_p$ and $\lambda_q$, allowing us to write:
\begin{equation}
    \begin{split}
        \text{KL}({\bf p}^{(t)} \| {\bf q}^{(t)}) \geq  2 \left(\phi''(\tilde{\lambda})\right)^2\left|\lambda_p - \lambda_q\right|^2
    \end{split}
\end{equation}

Next, we can pull out the prior information. Let the log odds of the prior be $\lambda_y = \log\left(\frac{y_1}{y_0}\right)$, the log likelihood of the shared information be $\lambda_s = \log\left(\frac{\mathbb{P}(x^{(s)}|Y=1)}{\mathbb{P}(x^{(s)}|Y=0)}\right)$, and the log-likelihood of the reported signal be $\lambda_{\tilde{x}} = \log\left(\frac{\mathbb{P}(\tilde{x}^{(t)}|Y=1)}{\mathbb{P}(\tilde{x}^{(t)}|Y=0)}\right)$. Then, if ${\bf x}^{(t)}$ is the signal that induced belief ${\bf p}^{(t)}$ and ${\bf {\tilde{x}}}^{(t)}$ the signal that would induce belief ${\bf q}^{(t)}$, by Assumption \ref{as:is} we have that $\lambda_p = \lambda_y + \lambda_s + \lambda_x$ and $\lambda_q = \lambda_y + \lambda_s + \lambda_{\tilde{x}}$. Then, we can re-write the KL as:
\begin{equation}
        \text{KL}({\bf p}^{(t)} \| {\bf q}^{(t)}) \geq  2 \left(\phi''(\tilde{\lambda})\right)^2\left|\lambda_x - \lambda_{\tilde{x}}\right|^2
\end{equation}
Then, from $\tau$-granularity where $\tau = \min_{x, \tilde{x}} |\lambda_x - \lambda_{\tilde{x}}|$ and $\phi''(\lambda) = \sigma(\tilde{\lambda})(1-\sigma(\tilde{\lambda}) \geq \eta(1-\eta)$, allowing us to write:
\begin{equation}
        \text{KL}({\bf p}^{(t)} \| {\bf q}^{(t)}) \geq  2 \left(\eta(1-\eta)\tau\right)^2
\end{equation}

\textbf{Step 3: Incorporating the KL bound into the condition} Re-visiting the inequality for $\Delta$, and observing that ${\left|\log \left(\frac{q_1^{(t)}}{1-q_1^{(t)}}\right)\right|} = |\lambda_q| = |\lambda_y + \lambda_s + \lambda_{\tilde{x}}| \leq {\left|\lambda_y + 2\log \left(\frac{1-\eta}{\eta}\right)\right|}$, we can write:

\begin{equation}
    \frac{\text{KL}({\bf p}^{(t)} \| {\bf q}^{(t)})}{\left|\log \left(\frac{q_1^{(t)}}{1-q_1^{(t)}}\right)\right|} \geq \frac{2 \left(\eta(1-\eta)\tau\right)^2}{\left|\lambda_y + 2\log \left(\frac{1-\eta}{\eta}\right)\right|}
\end{equation}

This gives us an upper bound for $\Delta$ that will ensure strict truthfulness:

\begin{equation}
    \begin{split}
        |\Delta| < \frac{2 \left(\eta(1-\eta)\tau\right)^2}{\left|\lambda_y + 2\log \left(\frac{1-\eta}{\eta}\right)\right|}
    \end{split}
\end{equation}

Using the upper bound on $\Delta$ from Theorem \ref{eq:imptheorem}, we can identify the needed $k$:
\begin{equation}
    \begin{split}
        \frac14 \left({{\frac{1-\eta}{\eta}}} - {{\frac{\eta}{1-\eta}}} \right) \left( 1-\delta \right)^k &< \frac{2 \left(\eta(1-\eta)\tau\right)^2}{\left|\lambda_y + 2\log \left(\frac{1-\eta}{\eta}\right)\right|} \\
        \Rightarrow \left( 1-\delta \right)^k &< \frac{8 \left(\eta(1-\eta)\tau\right)^2}{\left|\lambda_y + 2\log \left(\frac{1-\eta}{\eta}\right)\right|\left({{\frac{1-\eta}{\eta}}} - {{\frac{\eta}{1-\eta}}} \right)} \\
        \Rightarrow k \log \left( 1-\delta \right) &< \log\left(\frac{8 \left(\eta(1-\eta)\tau\right)^2}{\left|\lambda_y + 2\log \left(\frac{1-\eta}{\eta}\right)\right|\left({{\frac{1-\eta}{\eta}}} - {{\frac{\eta}{1-\eta}}} \right)} \right) \\
        \Rightarrow k  &> \frac{1}{-\log \left( 1-\delta \right)}\log\left(\frac{\left|\log \left(\frac{y_1(1-\eta)^2}{y_0\eta^2}\right)\right|\left({{\frac{1-\eta}{\eta}}} - {{\frac{\eta}{1-\eta}}} \right)}{8 \left(\eta(1-\eta)\tau\right)^2} \right)
    \end{split}
\end{equation}
\end{proof}


\paragraph{Theorem \ref{thm:bound}}
    \emph{If agent $t$ is paid based on a cross-entropy scoring rule with respect to agent $r$ who reports truthfully, the difference in expected payoff when deviating from truthful report ${\bf p}^{(t)}$ associated with signal $x^{(t)}$ to misreport ${\bf q}^{(t)}$ is:}
    \begin{equation}\small 
       \mathbb{E}[S_{CE}({\bf q}^{(t)}) | x^{(t)}, x^{(s)}] - \mathbb{E}[S_{CE}({\bf p}^{(t)}) | x^{(t)}, x^{(s)}] \leq \mathcal{D}_{\eta}(\Delta, {\bf y})
    \end{equation}
    \emph{where ${\bf y}$ is the prior and}
    \begin{equation} \small
        \mathcal{D}_{\eta}(\Delta, {\bf y}) = \Delta \cdot \log \left(\frac{ ({1-\eta})  y_1 +  \Delta \cdot \left(\eta y_0 + ({1-\eta}) {y_1}\right)}{\eta y_0-  \Delta\cdot \left(\eta y_0 + ({1-\eta}){y_1}\right)} \right)  
    \end{equation}
    \emph{where $\Delta$ is as defined in Lemma \ref{lem:post}.}

\begin{proof} By Corollary \ref{cor:mart}, we have that the expected reward for truth-telling is the negative entropy of the true posterior: $\mathbb{E}[S_{CE}({\bf p}^{(t)}) | \hat{X}^{(t)}] = H\left(\mathbb{E} \left[ {\bf p}^{(r)} \left.\right |\hat{X}^{(t)} \right], {\bf p}^{(t)} \right) = -H\left({\bf p}^{(t)} \right)$. Additionally, we can write: 

\begin{align}\small
    \mathbb{E}[S_{CE}({\bf q}^{(t)}) | \hat{X}^{(t)}]
    &= -H\left(\mathbb{E} \left[ \tilde{\bf p}^{(r)} \left.\right |\hat{X}^{(t)} \right], {\bf q}^{(t)} \right)  \nonumber \\
    &\leq -H\left(\mathbb{E} \left[ \tilde{\bf p}^{(r)} \left.\right |\hat{X}^{(t)} \right] \right)  \nonumber \\
    &\leq -H\left({\bf p}^{(t)} \right) + \nabla \left(- H\left(\mathbb{E} \left[ \tilde{\bf p}^{(r)} \left.\right |\hat{X}^{(t)} \right] \right)\right)^T\left( \mathbb{E} \left[ \tilde{\bf p}^{(r)} \left.\right |\hat{X}^{(t)} \right] - {\bf p}^{(t)} \right)  \tag{by concavity of entropy)} \\
    &= -H\left({\bf p}^{(t)} \right) +  \Delta \cdot \log \left(\frac{{p}_1^{(t)} +  \Delta}{1 - {p}_1^{(t)} -  \Delta} \right) \label{eq:cebound}
\end{align}

We can further simplify the term $\Delta \cdot \log \left(\frac{{p}_1^{(t)} +  \Delta}{1 - {p}_1^{(t)} -  \Delta} \right)$ as follows:

\begin{equation}\small
\begin{split}
    \Delta \cdot \log \left(\frac{{p}_1^{(t)} +  \Delta}{1 - {p}_1^{(t)} -  \Delta} \right) &= \Delta \cdot \log \left(\frac{\mathbb{P}(Y=1|x^{(t)}, x^{(s)}) +  \Delta}{\mathbb{P}(Y=0|x^{(t)}, x^{(s)}) -  \Delta} \right)\\
     &= \Delta \cdot \log \left(\frac{\frac{\mathbb{P}(x^{(t)} |Y=1) \mathbb{P}(Y=1|x^{(s)})}{\mathbb{P}(x^{(t)}|x^{(s)})} +  \Delta}{\frac{\mathbb{P}(x^{(t)} |Y=0) \mathbb{P}(Y=0|x^{(s)})}{\mathbb{P}(x^{(t)}|x^{(s)})} -  \Delta} \right) \\
     &= \Delta \cdot \log \left(\frac{ \frac{\mathbb{P}(x^{(t)} |Y=1) \mathbb{P}(Y=1|x^{(s)})}{\mathbb{P}(x^{(t)} |Y=0) \mathbb{P}(Y=0|x^{(s)})} +  \Delta \cdot \frac{\mathbb{P}(x^{(t)}|x^{(s)})}{\mathbb{P}(x^{(t)} |Y=0) \mathbb{P}(Y=0|x^{(s)})}}{1 -  \Delta\cdot \frac{\mathbb{P}(x^{(t)}|x^{(s)})}{\mathbb{P}(x^{(t)} |Y=0) \mathbb{P}(Y=0|x^{(s)})}} \right) \\
     &= \Delta \cdot \log \left(\frac{ \frac{\mathbb{P}(x^{(t)} |Y=1) \mathbb{P}(Y=1|x^{(s)})}{\mathbb{P}(x^{(t)} |Y=0) \mathbb{P}(Y=0|x^{(s)})} +  \Delta \cdot \left(1 + \frac{\mathbb{P}(x^{(t)} |Y=1) \mathbb{P}(Y=1|x^{(s)})}{\mathbb{P}(x^{(t)} |Y=0) \mathbb{P}(Y=0|x^{(s)})}\right)}{1 -  \Delta\cdot \left(1 + \frac{\mathbb{P}(x^{(t)} |Y=1) \mathbb{P}(Y=1|x^{(s)})}{\mathbb{P}(x^{(t)} |Y=0) \mathbb{P}(Y=0|x^{(s)})}\right)} \right) \\
     &\leq \Delta \cdot \log \left(\frac{ \frac{1-\eta}{\eta} \cdot \frac{y_1}{ y_0} +  \Delta \cdot \left(1 + \frac{1-\eta}{\eta} \cdot \frac{y_1}{ y_0}\right)}{1 -  \Delta\cdot \left(1 + \frac{1-\eta}{\eta} \cdot \frac{y_1}{ y_0}\right)} \right) \\
     &= \Delta \cdot \log \left(\frac{ ({1-\eta})  y_1 +  \Delta \cdot \left(\eta y_0 + ({1-\eta}) {y_1}\right)}{\eta y_0-  \Delta\cdot \left(\eta y_0 + ({1-\eta}){y_1}\right)} \right) 
\end{split}
\end{equation}

Thus, the difference in expected rewards when deviating from truthtelling is:
\begin{align}\small
    \mathbb{E}[S_{CE}({\bf q}^{(t)}) | \hat{X}^{(t)}] - \mathbb{E}[S_{CE}({\bf p}^{(t)}) | \hat{X}^{(t)}] \leq  \Delta \cdot \log \left(\frac{ ({1-\eta})  y_1 +  \Delta \cdot \left(\eta y_0 + ({1-\eta}) {y_1}\right)}{\eta y_0-  \Delta\cdot \left(\eta y_0 + ({1-\eta}){y_1}\right)} \right)  
\end{align}
\end{proof}

\paragraph{Theorem \ref{thm:exactmsr}.}
    \emph{Suppose agent $t$ is paid based on a cross-entropy market scoring rule with respect to agent $r$ who reports truthfully. Then, agent $t$ strictly maximizes their expected payoff by reporting truthfully if the reference agent has at least $k$ signals where}
    \begin{equation}\small
        k  \geq \frac{1}{-\log \left( 1-\delta \right)}\log\left(\frac{\left|\log \left(\frac{1-\eta}{\eta}\right)\right|\left({{\frac{1-\eta}{\eta}}} - {{\frac{\eta}{1-\eta}}} \right)}{8 \left(\tau\eta(1-\eta)\right)^2} \right)
    \end{equation}

\begin{proof}
We proceed similar to our proof of Theorem \ref{thm:exact}:

In order for truthful reporting to pay more than any strategic misreport, we consider the difference in expected scores when reporting the truth ${\bf p}^{(t)}$ relative to reporting some other potentially better report ${\bf q}^{(t)}$ (that is consistent with signal $\tilde{x}^{(t)}$:
\begin{equation}
\begin{split}
     \mathbb{E}[S_{CEM}({\bf q}^{(t)}) | \hat{X}^{(t)}] - \mathbb{E}[S_{CEM}({\bf p}^{(t)}) | \hat{X}^{(t)}]
     &= -H\left(\mathbb{E} \left[ \tilde{\bf p}^{(r)} \left.\right |\hat{X}^{(t)} \right], {\bf q}^{(t)} \right)  + H\left(\mathbb{E} \left[ \tilde{\bf p}^{(r)} \left.\right |\hat{X}^{(t)} \right], {\bf y}^{(t-1)} \right) \\
     &\phantom{=} + H\left({\bf p}^{(t)} \right)  - H\left({\bf p}^{(t)}, {\bf y}^{(t-1)} \right)\\
     &= \Delta \log \left(\frac{q_1^{(t)}}{1-q_1^{(t)}}\right) - \text{KL}({\bf p}^{(t)} \| {\bf q}^{(t)})  + \Delta \log\left(\frac{1-y_1^{(t-1)}}{y_1^{(t-1)}}\right)\\
     &= \Delta \log \left(\frac{q_1^{(t)}}{1-q_1^{(t)}} \cdot \frac{1-y_1^{(t-1)}}{y_1^{(t-1)}}\right) - \text{KL}({\bf p}^{(t)} \| {\bf q}^{(t)}) \\
     &= \Delta \left(\lambda_q - \lambda_{y^{(t-1)}}\right) - \text{KL}({\bf p}^{(t)} \| {\bf q}^{(t)}) \\
     &= \Delta \lambda_{\tilde{x}} - \text{KL}({\bf p}^{(t)} \| {\bf q}^{(t)})
\end{split}
\end{equation}
since $\lambda_q = \lambda_{\tilde{x}} + \lambda_{y^{(t-1)}}$ with $\lambda_{\tilde{x}} = \log\left(\frac{\mathbb{P}(\tilde{x}^{(t)}|Y=1)}{\mathbb{P}(\tilde{x}^{(t)}|Y=0)}\right)$. Thus, truthful reporting yields higher utility over reporting ${\bf q}^{(t)}$ when:
\begin{equation}
    \Delta \lambda_{\tilde{x}}  - \text{KL}({\bf p}^{(t)} \| {\bf q}^{(t)}) < 0
\end{equation}

As discussed in the proof of Theorem \ref{thm:exact}, truthful reporting yields higher utility over reporting ${\bf q}^{(t)}$ when:
\begin{equation}\small
    |\Delta| < \frac{\text{KL}({\bf p}^{(t)} \| {\bf q}^{(t)})}{\left|\lambda_{\tilde{x}}\right|}
\end{equation}

Since $\left|\lambda_{\tilde{x}}\right| \leq \left| \log\left(\frac{1-\eta}{\eta}\right) \right|$, we can write:

\begin{equation}\small
    \frac{\text{KL}({\bf p}^{(t)} \| {\bf q}^{(t)})}{|\lambda_{\tilde{x}}|} \geq \frac{2 \left(\eta(1-\eta)\tau\right)^2}{\left|\log \left(\frac{1-\eta}{\eta} \right)\right|}
\end{equation}

This gives us an upper bound for $\Delta$ that will ensure strict truthfulness:

\begin{equation}\small
    \begin{split}
        |\Delta| \leq \frac{2 \left(\eta(1-\eta)\tau\right)^2}{\left|\log \left(\frac{1-\eta}{\eta} \right)\right|}
    \end{split}
\end{equation}

This lets us identify the needed $k$:
\begin{equation}\small
    \begin{split}
         k  &\geq \frac{1}{-\log \left( 1-\delta \right)}\log\left(\frac{\left|\log \left(\frac{1-\eta}{\eta} \right)\right|\left({{\frac{1-\eta}{\eta}}} - {{\frac{\eta}{1-\eta}}} \right)}{8 \left(\eta(1-\eta)\tau\right)^2} \right)
    \end{split}
\end{equation}
\end{proof}

\paragraph{Theorem \ref{thm:pmcemsr}.}
        \emph{If agent $t$ is paid based on a cross-entropy market scoring rule with respect to agent $r$ who reports truthfully, the difference in expected payoff when deviating from truthful report ${\bf p}^{(t)}$ associated with signal $x^{(t)}$ to misreport ${\bf q}^{(t)}$ is:}
    \begin{equation}\small 
       \mathbb{E}[S_{CEM}({\bf q}^{(t)}) | x^{(t)}, x^{(1:t-1)}] - \mathbb{E}[S_{CEM}({\bf p}^{(t)}) | x^{(t)}, x^{(1:t-1)}] \leq \hat{\mathcal{D}}_{\eta}(\Delta, {\bf y}^{(t-1)})
    \end{equation}
    \emph{where ${\bf y}^{(t-1)}$ is the true market prior before agent $t$'s report and}
    \begin{equation}\small 
       \hat{\mathcal{D}}_{\eta}(\Delta, {\bf y}^{(t-1)}) = \Delta \cdot \log \left(\frac{({1-\eta}) +  \Delta\cdot\left( \eta\frac{y_0^{(t-1)}}{y_1^{(t-1)}} + ({1-\eta})\right)}{\eta -  \Delta\cdot \left(\eta + ({ 1-\eta}) \frac{y_1^{(t-1)}}{y_0^{(t-1)}}\right)} \right)
    \end{equation}

\begin{proof} Using Corollary \ref{cor:mart}, we see that when agent $t$ tells the truth, their payoff is the KL divergence of the true posterior from the previous report ${\bf y}^{(t-1)}$:

\begin{equation}\small
\begin{split}
    \mathbb{E}[S_{CEM}({\bf p}^{(t)}) | \hat{X}^{(t)}] &= -H\left(\mathbb{E} \left[ {\bf p}^{(r)} \left.\right |\hat{X}^{(t)} \right], {\bf p}^{(t)} \right) + H\left(\mathbb{E} \left[ {\bf p}^{(r)} \left.\right |\hat{X}^{(t)} \right], {\bf y}^{(t-1)} \right) \\
    &= -H\left({\bf p}^{(t)} \right) + H\left( {\bf p}^{(t)} , {\bf y}^{(t-1)}\right) \\
    &= \text{KL}\left( {\bf p}^{(t)} \,\,||\,\, {\bf y}^{(t-1)}\right) 
\end{split}
\end{equation}

Meanwhile, we can write agent $t$'s expected reward from misreporting as: 

\begin{align}\small
    \mathbb{E}[S_{CEM}({\bf q}^{(t)}) | \hat{X}^{(t)}]
    &= -H\left(\mathbb{E} \left[ \tilde{\bf p}^{(r)} \left.\right |\hat{X}^{(t)} \right], {\bf q}^{(t)} \right)  + H\left(\mathbb{E} \left[ \tilde{\bf p}^{(r)} \left.\right |\hat{X}^{(t)} \right], {\bf y}^{(t-1)} \right)\nonumber \\
    &\leq -H\left(\mathbb{E} \left[ \tilde{\bf p}^{(r)} \left.\right |\hat{X}^{(t)} \right] \right) + H\left(\mathbb{E} \left[ \tilde{\bf p}^{(r)} \left.\right |\hat{X}^{(t)} \right], {\bf y}^{(t-1)} \right) \nonumber \\
    &= \text{KL}\left( \mathbb{E} \left[ \tilde{\bf p}^{(r)} \left.\right |\hat{X}^{(t)} \right]  \,\,||\,\, {\bf y}^{(t-1)}\right)  \\
    &\leq \text{KL}\left( {\bf p}^{(t)} \,\,||\,\, {\bf y}^{(t-1)}\right) + \nabla \left( \text{KL}\left(\mathbb{E} \left[ \tilde{\bf p}^{(r)} \left.\right |\hat{X}^{(t)} \right] \right) \,\, || \,\, {\bf y}^{(t-1)}\right)^T\left( \mathbb{E} \left[ \tilde{\bf p}^{(r)} \left.\right |\hat{X}^{(t)} \right] - {\bf p}^{(t)} \right)  \tag{by convexity of KL divergence in the first argument)} \\
    &= \text{KL}\left( {\bf p}^{(t)} \,\,||\,\, {\bf y}^{(t-1)}\right) +  \Delta \cdot \log \left(\frac{({p}_1^{(t)} +  \Delta)(1-y_1^{(t-1)})}{(1 - {p}_1^{(t)} -  \Delta)y_1^{(t-1)}} \right) 
\end{align}

This expression is very similar to Equation \ref{eq:cebound}, except the log term has an additional factor depending on the previous market price as well. Under the assumption that previous agents reported truthfully, we can further simplify this expression:

\begin{equation}\small
    \begin{split}
        \Delta \cdot \log \left(\frac{({p}_1^{(t)} +  \Delta)y_0^{(t-1)}}{( {p}_0^{(t)} -  \Delta)y_1^{(t-1)}} \right)  &= \Delta \cdot \log \left(\frac{\left(\frac{\mathbb{P}(x^{(t)}|Y=1)\mathbb{P}(Y=1|x^{(s)})}{\mathbb{P}(x^{(t)} | x^{(s)})} +  \Delta\right)\mathbb{P}(Y=0|x^{(s)})}{\left(\frac{\mathbb{P}(x^{(t)}|Y=0)\mathbb{P}(Y=0|x^{(s)})}{\mathbb{P}(x^{(t)} | x^{(s)})} -  \Delta\right)\mathbb{P}(Y=1|x^{(s)})} \right) \\
        &= \Delta \cdot \log \left(\frac{\mathbb{P}(x^{(t)}|Y=1) +  \Delta\cdot \frac{\mathbb{P}(x^{(t)} | x^{(s)})}{\mathbb{P}(Y=1|x^{(s)})}}{\mathbb{P}(x^{(t)} | Y=0) -  \Delta\cdot \frac{\mathbb{P}(x^{(t)} | x^{(s)})}{\mathbb{P}(Y=0|x^{(s)})}} \right) \\
        &= \Delta \cdot \log \left(\frac{\frac{\mathbb{P}(x^{(t)}|Y=1)}{\mathbb{P}(x^{(t)} | Y=0)} +  \Delta\cdot \frac{\mathbb{P}(x^{(t)} | Y=0)\mathbb{P}(Y=0|x^{(s)}) + \mathbb{P}(x^{(t)} | Y=1)\mathbb{P}(Y=1|x^{(s)})}{\mathbb{P}(x^{(t)} | Y=0)\mathbb{P}(Y=1|x^{(s)})}}{1 -  \Delta\cdot \frac{\mathbb{P}(x^{(t)} | Y=0)\mathbb{P}(Y=0|x^{(s)}) + \mathbb{P}(x^{(t)} | Y=1)\mathbb{P}(Y=1|x^{(s)})}{\mathbb{P}(x^{(t)} | Y=0)\mathbb{P}(Y=0|x^{(s)})}} \right) \\
        &= \Delta \cdot \log \left(\frac{\frac{\mathbb{P}(x^{(t)}|Y=1)}{\mathbb{P}(x^{(t)} | Y=0)} +  \Delta\cdot\left( \frac{\mathbb{P}(Y=0|x^{(s)})}{\mathbb{P}(Y=1|x^{(s)})} + \frac{\mathbb{P}(x^{(t)} | Y=1)}{\mathbb{P}(x^{(t)} | Y=0)}\right)}{1 -  \Delta\cdot \left(1 + \frac{ \mathbb{P}(x^{(t)} | Y=1)\mathbb{P}(Y=1|x^{(s)})}{\mathbb{P}(x^{(t)} | Y=0)\mathbb{P}(Y=0|x^{(s)})}\right)} \right) \\
        &\leq \Delta \cdot \log \left(\frac{\frac{1-\eta}{\eta} +  \Delta\cdot\left( \frac{y_0^{(t-1)}}{y_1^{(t-1)}} + \frac{1-\eta}{\eta}\right)}{1 -  \Delta\cdot \left(1 + \frac{ 1-\eta}{\eta} \cdot \frac{y_1^{(t-1)}}{y_0^{(t-1)}}\right)} \right) \\
        &= \Delta \cdot \log \left(\frac{({1-\eta}) +  \Delta\cdot\left( \eta\frac{y_0^{(t-1)}}{y_1^{(t-1)}} + ({1-\eta})\right)}{\eta -  \Delta\cdot \left(\eta + ({ 1-\eta}) \frac{y_1^{(t-1)}}{y_0^{(t-1)}}\right)} \right)
    \end{split}
\end{equation}

Thus, the difference in expected rewards when deviating from truthtelling is:
\begin{equation}\small
    \mathbb{E}[S_{CEM}({\bf q}^{(t)}) | \hat{X}^{(t)}] - \mathbb{E}[S_{CEM}({\bf p}^{(t)}) | \hat{X}^{(t)}] \leq  \Delta \cdot \log \left(\frac{({1-\eta}) +  \Delta\cdot\left( \eta\frac{y_0^{(t-1)}}{y_1^{(t-1)}} + ({1-\eta})\right)}{\eta -  \Delta\cdot \left(\eta + ({ 1-\eta}) \frac{y_1^{(t-1)}}{y_0^{(t-1)}}\right)} \right)
\end{equation}

\end{proof}

\paragraph{Theorem \ref{thm:rectru}}
    \emph{In a self-resolving prediction market where a reference agent can observe at least $k$ additional predictions (with $k$ chosen according to Theorem \ref{thm:exactmsr}), the strategy profile of truthful reporting and belief profile of believing all previous agents reported truthfully is a strict Perfect Bayesian Equilibrium. When $k$ is chosen according to Remark \ref{rem:kproc}, this strategy and belief profile is a $\varepsilon-$PBE. Off the equilibrium path, agents believe any observed (mis)report to be truthful if that report is consistent with the reporting agent's signal structure. If the observed prediction is invalid, (i.e., inconsistent with any signal the reporting agent could have observed), all subsequent agents may either update as if they learned some arbitrary signal or ignore this prediction, and believe all subsequent agents continue to report truthfully.  }

\begin{proof} 

 Consider the strategy profile where all agents in the market report their beliefs truthfully and believe all other agents report their beliefs truthfully. From agent $t$'s perspective, in any supposed truthful equilibrium, they believe all previous reports to be truthful, and know that any valid report they make will be believed and any invalid report they make can induce an update consistent with an arbitrary signal in subsequent agent's beliefs. However, the reference agent can observe at least $k$ other truthful (in equilibrium) predictions that agent $t$ cannot.
 
 Suppose $k$ is chosen as in Remark \ref{rem:kproc}. Then, Theorem \ref{eq:imptheorem} guarantees that agent $t$'s expectation of their reference agent's beliefs is no more than $\varepsilon'$ away from their own true beliefs. Further, by Theorem \ref{thm:pmcemsr}, agent $t$'s maximum possible gain in deviating from truthful reporting is upper bounded, i.e., $\hat{\mathcal{D}}_\eta(\Delta, {\bf p}^{(t-1)}) \leq \varepsilon$. Thus, truthful reporting is an $\varepsilon-$PBE.

 If instead $k$ is chosen as in Theorem \ref{thm:exactmsr}, agent $t$ strictly maximizes their payoff by reporting truthfully; thus, truthful reporting is a strict PBE.
     
 \end{proof}

\end{document}